\newtheorem{theorem}{Theorem}[section]
\newtheorem{lemma}[theorem]{Lemma}
\newtheorem{corollary}[theorem]{Corollary}
\newtheorem{fact}[theorem]{Fact}
\newtheorem{protocol}[theorem]{Protocol}
\newtheorem{pointgame}[theorem]{Point Game}
\newtheorem{definition}[theorem]{Definition}
\newenvironment{proof}[1][Proof]{\begin{trivlist}
\item[\hskip \labelsep {\bfseries #1}]}{\end{trivlist}}
\newenvironment{remark}[1][Remark]{\begin{trivlist}
\item[\hskip \labelsep {\bfseries #1}]}{\end{trivlist}}
\title{Quantum and classical coin-flipping protocols \\ based on bit-commitment and their point games}
\author{
Ashwin Nayak\thanks{
Department of Combinatorics and Optimization,
and Institute for Quantum Computing, University of Waterloo.
Address: 200 University Ave.\ W.,
Waterloo, ON, N2L 3G1, Canada. Email: {
\tt
ashwin.nayak@uwaterloo.ca}.}
\and
Jamie Sikora\footnote{{Some of the results in this paper were announced earlier in the second author's PhD thesis~\cite{SikoraPHD12}.}} \thanks{
Centre for Quantum Technologies, National University of Singapore, {and MajuLab, CNRS-UNS-NUS-NTU International Joint Research Unit, UMI 3654, Singapore}. Address: Block S15, 3 Science Drive 2, Singapore 117543. Email: {
\tt
cqtjwjs@nus.edu.sg}.}
\and
Levent Tun\c cel\thanks{
Department of Combinatorics and Optimization, University of Waterloo. Address: 200 University Ave.\ W., Waterloo, ON, N2L 3G1, Canada.
Email: {
\tt
ltuncel@uwaterloo.ca}.}
}
\date{April 17, 2015}
\newcommand{\SDP}{\mathrm{SDP}}
\newcommand{\T}{\top}
\newcommand{\A}{\mathrm{A}}
\newcommand{\B}{\mathrm{B}}
\newcommand{\rF}{\mathrm{F}}
\newcommand{\half}{\frac{1}{2}}
\newcommand{\quarter}{\frac{1}{4}}
\newcommand{\sqrtt}[1]{\sqrt{#1}\sqrt{#1}^{\T}}
\newcommand{\calA}{\mathcal{A}}
\newcommand{\calP}{\mathcal{P}}
\newcommand{\prob}{\textup{Prob}}
\newcommand{\tr}{\mathrm{Tr}}
\newcommand{\ket}[1]{| #1 \rangle}
\newcommand{\bra}[1]{\langle #1 |}
\newcommand{\ketbra}[2]{\ket{#1} \bra{#2}}
\newcommand{\kb}[1]{\ketbra{#1}{#1}}
\newcommand{\braket}[2]{\langle #1 | #2 \rangle}
\newcommand{\dsum}{\displaystyle\sum}
\newcommand{\inner}[2]{\langle #1, #2 \rangle}
\newcommand{\set}[1]{\left\{ #1 \right\}}
\newcommand{\Lor}{\mathrm{SOC}}
\newcommand{\RL}{\mathrm{RSOC}}
\newcommand{\eps}{\varepsilon}
\newcommand{\Diag}{\mathrm{Diag}}
\newcommand{\Null}{\mathrm{Null}}
\newcommand{\supp}{\mathrm{supp}}
\newcommand{\Prob}{\mathrm{Prob}} 
\newcommand{\abort}{\mathrm{abort}}
\newcommand{\diag}{\mathrm{diag}}
\newcommand{\nulll}{\mathrm{Null}}
\newcommand{\C}{\mathbb{C}}
\newcommand{\R}{\mathbb{R}}
\newcommand{\Pos}{\mathbb{S}_+}
\newcommand{\pos}{\mathbb{S}_+}
\newcommand{\Herm}{\mathbb{S}}
\newcommand{\forAll}{\textrm{for all }}
\newcommand{\bit}{\in \set{0,1}}
\newcommand{\qed}{$\quad \square$}
\newcommand{\norm}[1]{\left\| #1 \right\|}
\newcommand{\id}{\mathrm{I}}
\newcommand{\aand}{\quad \text{ and } \quad}
\newcommand{\trho}{\tilde{\rho}}
\newcommand{\BCCF}{\mathrm{BCCF}}
\newcommand{\zo}{\{ 0, 1 \}}
\newcommand{\eig}{\mathrm{eig}} 
\newcommand{\rA}{\mathrm{A}}
\newcommand{\rB}{\mathrm{B}}
\newcommand{\pg}[2]{\left[ #1, \phantom{\frac{.}{.}} \!\! #2 \right]}
\newcommand{\pgf}[4]{\left[ #1, \phantom{\frac{.}{.}} \!\! #2, #3, #4 \right]}
\begin{document}
\maketitle

\begin{abstract}
We focus on a {family} of quantum coin-flipping protocols based on quantum bit-commitment. 
We discuss how the semidefinite programming formulations of cheating strategies can be reduced to optimizing a linear combination of fidelity functions over a polytope. These turn out to be much simpler semidefinite programs which can be modelled using \emph{second-order cone programming {problems}}. 
We then use these simplifications to construct their point games as developed by Kitaev by {exploiting} the structure of optimal dual solutions.  
 
We also study a family of classical coin-flipping protocols based on classical bit-commitment. Cheating strategies for these classical protocols can be formulated as linear {programs} which are closely related to the semidefinite programs for the quantum version. In fact, we can construct point games for the classical protocols as well using the analysis for the quantum case. 

We discuss the philosophical connections between the classical and quantum protocols and their point games as viewed from optimization theory. In particular, we observe an analogy between a spectrum of physical theories (from classical to quantum) and a spectrum of convex optimization problems (from linear programming to semidefinite programming, through second-order cone programming).  In this analogy, classical systems correspond to linear programming {problems} and the level of quantum features in the system is correlated to the level of sophistication of the semidefinite programming models on the optimization side.

Concerning security analysis, we use the classical point games to prove that every classical protocol of this type allows exactly one of the parties to entirely determine the coin-flip. Using the intricate relationship between the semidefinite programming {based} quantum protocol analysis and the linear programming {based} classical protocol analysis, we show that only ``classical'' protocols can saturate Kitaev's lower bound for strong coin-flipping.  
Moreover, if the product of Alice and Bob's optimal cheating probabilities is $1/2$, then exactly one party can perfectly control the outcome of the protocol. This rules out quantum protocols of this type from {attaining} the optimal level of security. 
\end{abstract}

\newpage
\tableofcontents

\newpage
\section{Introduction}

{
Security levels of quantum coin-flipping protocols 
as well as
classical coin-flipping protocols can be modelled and analyzed via utilization of
convex optimization theory.  In particular,  
the cheating strategies
determining the security level of such quantum protocols
can be modelled by semidefinite programming problems.  In this paper, we deeply explore this connection by examining an algebraic construct known as \emph{point games}. These point games are constructed from feasible dual solutions and, in this  sense, are dual to the notion of protocols. We fully flesh out the details of these connections for a specific class of protocols and discuss how these connections extend to the classical version as well. We then discuss the philosophical ideas behind these connections and show some theoretical implications.}  

{Being able to interpret dual solutions to optimization problems has been very fruitful, even in the very special case of linear programming problems. Such interpretations typically lead to a deeper understanding of the behaviour of optimal solutions and better formulations of optimization problems modelling related phenomena.} 
 
\subsection{Quantum coin-flipping}

Coin-flipping is a classic cryptographic task introduced by
Blum~\cite{Blu81}. In this task, two remotely situated parties,
Alice and Bob, would like to agree on a uniformly random bit by
communicating with each other. The complication is that neither party
trusts the other. If Alice were to toss a coin and send the
outcome to Bob, Bob would have no means to verify whether this was a
uniformly random outcome. In particular, if Alice wishes to cheat, she
could send the outcome of her choice without any possibility of being caught
cheating. We are interested in a communication protocol that is \emph{designed
to protect\/} an honest party from being cheated.

More precisely, a ``strong coin-flipping protocol'' with
bias~$\epsilon$ is a two-party communication protocol in the style
of Yao~\cite{Yao79,Yao93}. In the protocol, the two players, Alice and Bob,
start with no inputs and compute a value~$c_\rA, c_\rB \in
\set{0,1}$, respectively, or declare that the other player is
cheating. If both players are honest, i.e., follow the protocol, then
they agree on the outcome of the protocol ($c_\rA = c_\rB$), and
the coin toss is fair ($\Pr(c_\rA = c_\rB = b) = 1/2$, for any~$b
\in \set{0,1}$). Moreover, if one of the players deviates arbitrarily
from the protocol in his or her local computation, i.e., is ``dishonest''
(and the other party is honest), then the probability of either outcome~$0$
or~$1$ is at most~$1/2 + \epsilon$. Other variants of coin-flipping have
also been studied in the literature. However, in the rest of the
article, by ``coin-flipping'' (without any modifiers) we mean
\emph{strong\/} coin flipping.

A straightforward game-theoretic argument proves that if the two
parties in a coin-flipping protocol
communicate classically and are computationally unbounded,
at least one party can cheat perfectly (with bias~$1/2$).
In other words, there is at least one party, say Bob, and at least one
outcome~$b \in \set{0,1}$ such that Bob can ensure outcome~$b$ with
probability~$1$ by choosing his messages in the protocol appropriately.
Consequently, classical coin-flipping protocols with bias~$\epsilon <
1/2$ are only possible under complexity-theoretic assumptions, and
when Alice and Bob have limited computational resources.

{The use of} quantum communication offers the possibility of ``unconditionally
secure'' cryptography, wherein the security of a protocol
rests solely on the validity of
quantum mechanics as a faithful description of nature. The first few
proposals for quantum information processing, namely the Wiesner quantum
money scheme~\cite{Wiesner83} and the Bennett-Brassard quantum key expansion
protocol~\cite{BB84} were motivated by precisely this idea. These
schemes were eventually shown to be unconditionally secure in
principle~\cite{M01,LC99,PS00,MVW12}. In light of these
results, several researchers have studied the possibility of
\emph{quantum\/} coin-flipping protocols, as a step towards studying
more general secure multi-party computations.

Lo and Chau~\cite{LC97} and Mayers~\cite{May97} were the first to
consider quantum protocols for coin-flipping without any computational
assumptions. They proved that no protocol with a finite number of rounds
could achieve~$0$ bias. Nonetheless, Aharonov, Ta-Shma, Vazirani, and
Yao~\cite{ATVY00} designed a simple, three-round quantum protocol that
achieved bias~$\approx 0.4143 < 1/2$. This is impossible classically, even
with an unbounded number of rounds. Ambainis~\cite{Amb01} designed
a protocol with bias~$1/4$ \emph{{\`a} la\/} Aharonov \emph{et al.\/},
and proved that it is optimal within a class (see also
Refs.~\cite{SR01,KN04} for a simpler version of the protocol and a
complete proof of security).  Shortly thereafter, Kitaev~\cite{Kit03}
proved that any strong coin-flipping protocol with a finite number of
rounds of communication has bias at least~$(\sqrt{2}-1)/2 \approx 0.207$ (see
Ref.~\cite{GW07} for an alternative proof). Kitaev's seminal work
uses semidefinite optimization in a central way.  This argument extends to
protocols with an unbounded number of rounds. This remained the state
of the art for several years, with inconclusive evidence in either
direction as to whether~$1/4 = 0.25$ or~$(\sqrt{2}-1)/2$
is optimal. In 2009, Chailloux and Kerenidis~\cite{CK09} settled this
question through an elegant protocol scheme that has
bias at most~$(\sqrt{2}-1)/2 + \delta$ for any~$\delta > 0$ of our
choice (building on~\cite{Moc07}, see below). We refer to this as
the CK protocol.

The CK protocol uses breakthrough work by Mochon~\cite{Moc07}, which
itself builds upon the ``point game'' framework proposed by Kitaev.
Mochon shows there are \emph{weak\/} coin-flipping protocols with
arbitrarily small bias. This work has since been simplified by experts on the topic; see e.g.~\cite{AharonovCGKM14}.) 
A weak coin-flipping protocol
is a variant of
coin-flipping in which each party favours a distinct outcome, say Alice
favours~$0$ and Bob favours~$1$. The requirement when they are honest is
the same as before. We say it has bias~$\epsilon$ if the following
condition holds. When Alice is dishonest and Bob honest, we only
require that Bob's outcome is~$0$ (Alice's favoured outcome) with
probability at most~$1/2+\epsilon$. A similar condition to protect
Alice holds, when she is honest and Bob is dishonest.
The weaker requirement of security against a dishonest player allows us
to circumvent the Kitaev lower bound. While Mochon's work pins down
the optimal bias for weak coin-flipping, it does this in a non-constructive
fashion: we only know of the \emph{existence\/} of protocols with
arbitrarily small bias, not of its \emph{explicit description\/}.
Moreover, the number of rounds tends to infinity as the bias decreases to $0$.
As a consequence, the CK protocol for strong coin-flipping is also
existential, and the number of rounds tends to infinity as the bias
decreases to $(\sqrt{2}-1)/2$.  It is perhaps
very surprising that no progress on finding better explicit protocols has
been made in over a decade.
 
\subsection{Our results}

To state our results, we introduce the following four quantities:
\begin{center}
\begin{tabularx}{\textwidth}{rX}
  $P_{\B,c}^*$: & The maximum probability with which a dishonest Bob can force an honest Alice to output $c \in \{0, 1\}$ by digressing from protocol. 
  \\
  $P_{\A,c}^*$: & The maximum probability with which a dishonest Alice can force an honest Bob to output $c \in \{0, 1\}$ by digressing from protocol. 
\end{tabularx} 
\end{center} 
We define a {family} of quantum coin-flipping protocols based on bit-commitment which we call $\BCCF$-protocols. These protocols are parameterized by four probability distributions $\alpha_0, \alpha_1$ defined on a finite set $A$ and $\beta_0, \beta_1$ defined on a finite set $B$. We formulate the cheating strategies for Alice and Bob forcing an outcome of $0$ or $1$ as semidefinite programs in the style of Kitaev~\cite{Kit03}. It can then be shown that the optimal cheating probabilities of a cheating Alice and a cheating Bob can be written as the maximization of a linear combination of fidelity functions over respective polytopes $\calP_{\A}$ and $\calP_{\B}$ (this was also proved in our previous work~\cite{NST14} using direct arguments). For example, 
\[ P_{\A,0}^* = \max \left\{ \half \sum_{a \bit} \sum_{y \in B} \beta_{a,y} \;  \rF(s^{(a,y)}, \alpha_{a}) \; : \;
(s_{1}, \ldots, s_{n}, s) \in \calP_{\A} \right\}, \]
where $s^{(a,y)}$ is the projection of $s$ onto the fixed indices $a$ and $y$,  and  
\[ P_{\B,1}^* = \max \left\{ \half \sum_{a \bit} \, \rF \left( (\alpha_a \otimes \id_B)^{\T} p_n, \, \beta_{\bar{a}} \right) : (p_{1}, \ldots, p_{n}) \in \calP_{\B} \right\}. \] 
(See Theorems~\ref{thm:reducedBob} and \ref{thm:reducedAlice} for formal statements of Bob's and Alice's cheating probabilities, respectively.)
We discuss how these optimization problems can be written as semidefinite programs (which are much simpler than the original formulations) and, furthermore, it was noted in~\cite{NST14} that one can use \emph{second-order cone programming} to model such optimization problems. We remark why this is interesting below. 

Using the above semidefinite programs, we develop the point games~\cite{Moc07, AharonovCGKM14} corresponding to a $\BCCF$-protocol, which we call $\BCCF$-point games. We then prove connections between the cheating probabilities in $\BCCF$-protocols and the \emph{final point} $\pg{\zeta_{\B,1}}{\zeta_{\A,0}}$ of a $\BCCF$-point game. More precisely, we prove that the final point $\pg{\zeta_{\B,1}}{\zeta_{\A,0}}$ of any $\BCCF$-point game satisfies $P_{\A,0}^* \leq \zeta_{\A,0}$ and $P_{\B,1}^* \leq \zeta_{\B,1}$ in the corresponding $\BCCF$-protocol and there exist point games with final point $\pg{P_{\B,1}^*}{P_{\A,0}^*}$. To bound all four cheating probabilities in a $\BCCF$-protocol, we consider the point games in \emph{pairs}, one of which bounds $P_{\A,0}^*$ and $P_{\B,1}^*$ and the other bounds $P_{\A,1}^*$ and $P_{\B,0}^*$.  More precisely, we have the following theorem.
 
\begin{theorem}[(Informal) See Theorem~\ref{SCFTHEOREM} for a formal statement] \label{QPGequiv}
Suppose $\pg{\zeta_{\B,1}}{\zeta_{\A,0}}$ is the final point of a $\BCCF$-point game and $\pg{\zeta_{\B,0}}{\zeta_{\A,1}}$ is the final point of its pair. Then
\[ P_{\B,0}^* \leq \zeta_{\B,0}, \quad P_{\B,1}^* \leq \zeta_{\B,1}, \quad P_{\A,0}^* \leq \zeta_{\A,0}, \; \textup{ and } \; P_{\A,1}^* \leq \zeta_{\A,1}. \] 
Moreover, there exists a pair of $\BCCF$-point games with final points 
$\pg{P_{\B,1}^*}{P_{\A,0}^*}$ and $\pg{P_{\B,0}^*}{P_{\A,1}^*}$.  
\end{theorem}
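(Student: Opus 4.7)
The plan is to exploit the semidefinite programming (and, more finely, second-order cone programming) formulations of the four cheating probabilities together with Kitaev's correspondence between point games and dual feasible solutions. The first assertion of the theorem (the four inequalities) is essentially a weak-duality statement, while the second (existence of a pair of point games whose final points equal the optimal cheating probabilities) is a strong-duality-plus-explicit-construction statement. Throughout, I would work with the reduced fidelity-based formulations of the cheating SDPs from Theorems~\ref{thm:reducedBob} and~\ref{thm:reducedAlice}, since these are the forms for which point-game moves have a clean algebraic meaning.

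For the upper-bound direction, I would parse a given $\BCCF$-point game with final point $\pg{\zeta_{\B,1}}{\zeta_{\A,0}}$ as a sequence of admissible Kitaev moves and, move by move, assemble from it a pair of feasible dual solutions: one for the SDP whose optimum is $P^*_{\A,0}$ and one for the SDP whose optimum is $P^*_{\B,1}$. By construction, the dual objective values of these two solutions will be $\zeta_{\A,0}$ and $\zeta_{\B,1}$, so weak duality yields $P^*_{\A,0} \leq \zeta_{\A,0}$ and $P^*_{\B,1} \leq \zeta_{\B,1}$. Running the same argument on the paired point game with final point $\pg{\zeta_{\B,0}}{\zeta_{\A,1}}$ gives $P^*_{\A,1} \leq \zeta_{\A,1}$ and $P^*_{\B,0} \leq \zeta_{\B,0}$, establishing the four inequalities.

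For achievability, I would invoke strong duality for the reduced cheating SDPs, which holds under a Slater-type condition that is easy to verify for natural parameter choices in $\BCCF$-protocols. Strong duality furnishes optimal dual solutions whose objective values equal $P^*_{\A,0}$ and $P^*_{\B,1}$, and symmetrically $P^*_{\A,1}$ and $P^*_{\B,0}$. The core step is then to translate each optimal dual solution back into a valid $\BCCF$-point game terminating at the prescribed final point. The fidelity-function structure of the reduced formulations is essential here: each elementary Kitaev move corresponds to a small matrix/probability identity involving fidelities, and the SOCP form makes these identities transparent enough to carry out the construction inductively, one round of the underlying protocol at a time. The paired point game is obtained by swapping the labels $0$ and $1$ in $\alpha_b$ and $\beta_b$, so no separate argument is needed for the second pair.

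The main obstacle, I expect, is this final construction step: building a \emph{valid} sequence of point-game moves from an optimal dual solution requires checking that each move extracted from the dual decomposition satisfies Kitaev's admissibility conditions (operator-monotonicity together with the appropriate mass-preservation on supports) and that the composition of moves lands at \emph{exactly} the target point rather than some strictly larger one. Complementary slackness of the SDPs should pin down the final coordinates numerically, but threading these conditions through the tensor structure of the $\BCCF$-protocol, round by round, is where the bulk of the technical work will lie.
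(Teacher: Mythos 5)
Your high-level plan — read off feasible dual solutions from the basic moves of a given $\BCCF$-point game to get the four inequalities via weak duality, and then invoke strong duality/attainment for the reduced SDPs to produce a point game hitting the optimal cheating probabilities exactly — matches the paper's argument (Lemma~\ref{WCFLEMMA} plus Theorem~\ref{SCFTHEOREM}, using the explicit construction in Point Game~\ref{PGBM}). However, two issues deserve attention.

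First, and most concretely, your description of the paired point game is wrong: you propose to obtain the pair by ``swapping the labels $0$ and $1$ in $\alpha_b$ \emph{and} $\beta_b$.'' Swapping both leaves the protocol unchanged, since the sums over $a\in\{0,1\}$ in Theorems~\ref{thm:reducedBob} and~\ref{thm:reducedAlice} are invariant under simultaneously relabelling $a\mapsto\bar a$ in $\alpha$ and $\beta$ (and the cheating polytopes $\calP_\A,\calP_\B$ carry no $\alpha,\beta$ dependence at all). With that ``pair'' you would simply re-derive $P_{\B,1}^*\le\zeta_{\B,1}$ and $P_{\A,0}^*\le\zeta_{\A,0}$ and never touch $P_{\B,0}^*$ or $P_{\A,1}^*$. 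The paper's pair keeps $\alpha_0,\alpha_1$ fixed and swaps only $\beta_0\leftrightarrow\beta_1$ (see Definition~\ref{SCFTHEOREM}'s companion definition of a $\BCCF$-point game pair, and the Remark after Theorem~\ref{thm:reducedAlice}): this simultaneously sends $P_{\B,1}^*\mapsto P_{\B,0}^*$ and $P_{\A,0}^*\mapsto P_{\A,1}^*$, which is exactly what is needed.

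Second, your anticipated ``main obstacle'' — verifying Kitaev's operator-monotone admissibility conditions on the moves extracted from the duals, and using complementary slackness to pin down the final point — is largely a detour relative to what the paper actually does. The protocol-independent definition of a $\BCCF$-point game restricts the allowed transitions to the explicit list of basic moves (point raising, merging, splitting, probability splitting/merging, aligning) laid out in Point Game~\ref{PGBM}, which are already known to be valid moves in the general framework; the correspondence between these moves and the inequalities in the duals of the reduced SDPs is then checked directly, constraint by constraint. Tightness of the final point is not obtained by a complementary-slackness argument but by a compactness argument in Section~\ref{basicmoves} showing the reduced duals attain their optima (all dual variables are bounded and the objective is continuous). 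So while your plan lands in the right place, you would be well served to first internalize that the moves in $\BCCF$-point games are, by design, in bijection with dual constraints of the reduced SDPs; once that is in hand, both directions of the theorem are essentially weak/strong duality restated, not a separate technical construction.
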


This is a restatement of weak duality/strong duality of semidefinite programming in the context of protocols and point games. We discuss these connections in  Section~\ref{BCCFPG}. 

Our analysis of the quantum protocols shows similarities to a related family of classical coin-flipping protocols based on bit-commitment, which we call classical $\BCCF$-protocols. 
We can write the maximum cheating probabilities of these classical protocols in a very similar way, but using linear programming instead of semidefinite programming or second-order cone programming. For example, we can write 
\[ P_{\A,0}^* = \max \left\{ 
\half \sum_{a \in A'_0} \sum_{y \in B} \sum_{x \in \supp(\alpha_a)} \beta_{a,y} s_{a,x,y}
: (s_{1}, \ldots, s_{n}, s) \in \calP_{\A} \right\} \] 
and 
\[ P_{\B,1}^* = \max \left\{ 
\half \sum_{a \in A'_0} \sum_{y \in \supp(\beta_{\bar{a}})} \sum_{x \in A} \alpha_{a,x} \, p_{n,x,y}
: (p_{1}, \ldots, p_{n}) \in \calP_{\B} \right\}. \]  
Using the similarities to the quantum case, we develop their point games as well which we call classical $\BCCF$-point games. Considering them in pairs, we have the classical version of Theorem~\ref{QPGequiv}, below. 

\begin{theorem}[(Informal) See Theorem~\ref{SCFTHEOREMc} for a formal statement] \label{CPGequiv}
Suppose $\pg{\zeta_{\B,1}}{\zeta_{\A,0}}$ is the final point of a classical $\BCCF$-point game and $\pg{\zeta_{\B,0}}{\zeta_{\A,1}}$ is the final point of its pair. Then
\[ P_{\B,0}^* \leq \zeta_{\B,0}, \quad P_{\B,1}^* \leq \zeta_{\B,1}, \quad P_{\A,0}^* \leq \zeta_{\A,0}, \; \textup{ and } \; P_{\A,1}^* \leq \zeta_{\A,1}, \] 
where $P_{\B,0}^*, P_{\B,1}^*, P_{\A,0}^*, P_{\A,1}^*$ are the maximum cheating probabilities  for the classical $\BCCF$-protocol. Moreover, there exists a pair of classical $\BCCF$-point games with final points 
$\pg{P_{\B,1}^*}{P_{\A,0}^*}$ and $\pg{P_{\B,0}^*}{P_{\A,1}^*}$.  
\end{theorem}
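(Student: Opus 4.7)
My plan is to mirror, in the classical (linear programming) setting, the exact strategy used for the quantum version (Theorem~\ref{QPGequiv}/SCFTHEOREM). The starting point is the fact, already established in the excerpt, that for a classical $\BCCF$-protocol each of the four cheating probabilities can be written as the maximum of a linear functional over the polytope $\calP_\A$ or $\calP_\B$. These are bona fide linear programming problems, so by LP duality each is equal to the minimum of the dual linear program over its dual feasible region. The guiding principle of the proof is that a ``classical $\BCCF$-point game'' is nothing more than a convenient combinatorial/algebraic encoding of a dual feasible solution, and its ``final point'' records the dual objective value(s) of its two constituent dual solutions.

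First, I would formalize the LP duals of the four optimization problems for $P_{\A,0}^*$, $P_{\A,1}^*$, $P_{\B,0}^*$, $P_{\B,1}^*$. Because the constraints defining $\calP_\A$ and $\calP_\B$ are shared across the two values of $c\in\{0,1\}$ (only the objective's probability coefficients change with $c$), the natural pairing of point games emerges: one point game in the pair packages a dual solution that simultaneously certifies $P_{\A,0}^*$ and $P_{\B,1}^*$, while the paired game packages a dual solution that simultaneously certifies $P_{\A,1}^*$ and $P_{\B,0}^*$. I would verify that the moves/transitions defining a classical $\BCCF$-point game are precisely the combinatorial consequences of dual feasibility for these two LPs, so that assembling a point game is equivalent to producing a dual feasible pair.

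With this dictionary set up, the upper bound half of the theorem is immediate from LP weak duality: every classical $\BCCF$-point game yields a feasible dual solution whose objective value is the final point, hence bounds the corresponding primal maximum, giving $P_{\B,0}^* \leq \zeta_{\B,0}$, etc. For the ``moreover'' statement I would invoke LP strong duality. Since the primal LPs are feasible (the honest strategies give feasible primal points) and bounded (probabilities lie in $[0,1]$), strong duality holds without a constraint qualification, so optimal dual solutions exist and attain the primal optima exactly. Translating back through the dictionary, an optimal dual solution for the $(0,1)$ pair produces a classical $\BCCF$-point game with final point $\pg{P_{\B,1}^*}{P_{\A,0}^*}$, and similarly for the paired game, giving the desired existence.

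The main obstacle, and the only step requiring real care, is the translation step: making precise the correspondence between a syntactically defined classical $\BCCF$-point game and an LP dual solution, and checking that the ``final point'' as defined in the point-game formalism really does coincide with the dual objective value. Once this dictionary is established, everything else is standard LP duality applied to the explicit LPs. A convenient shortcut is to import the quantum proof wholesale by viewing each classical LP as the restriction of the associated SDP to diagonal matrices indexed by the support of the distributions $\alpha_a,\beta_a$; then the SDP-based analysis used to prove Theorem~\ref{QPGequiv} specializes directly, and the classical point games arise as the diagonal restrictions of the corresponding quantum $\BCCF$-point games. This specialization route would let me re-use the technical content of the quantum proof with only notational adjustment, which is probably the cleanest way to present the argument.
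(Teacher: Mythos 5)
Your core plan --- encode classical $\BCCF$-point games as dual feasible solutions of the four cheating LPs, then apply weak duality for the upper bounds and strong duality for the existence of a tight pair --- is exactly the argument the paper uses, and the structure you describe (the $(P_{\A,0}^*,P_{\B,1}^*)$ dual pair vs.\ the $(P_{\A,1}^*,P_{\B,0}^*)$ dual pair sharing the same cheating-polytope constraints with only objective coefficients swapped) is the correct pairing.

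The place where your proposal goes astray is the ``convenient shortcut.'' You propose obtaining the classical LP by restricting the quantum SDP ``to diagonal matrices indexed by the support.'' That restriction does not yield the classical LP. In the reduced SDP characterization the quantum objective term (via Lemma~\ref{FidelityLemma}) is $\max\{\inner{\sigma_{a,y}}{\sqrtt{\alpha_a}} : \diag(\sigma_{a,y}) = s^{(a,y)},\, \sigma_{a,y} \succeq 0\}$. Forcing $\sigma_{a,y}$ diagonal gives $\sigma_{a,y} = \Diag(s^{(a,y)})$ and hence the value $\sum_x s^{(a,y)}_x \alpha_{a,x}$, which is \emph{not} the classical LP coefficient $\sum_{x\in\supp(\alpha_a)} s^{(a,y)}_x = \inner{s^{(a,y)}}{e_{\supp(\alpha_a)}}$. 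The paper instead replaces the SDP \emph{data}: each rank-one matrix $\sqrtt{\alpha_a}$ (resp.\ $\sqrtt{\beta_a}$) appearing in the projections $\Pi_{\B,c}$, $\Pi_{\A,c}$ is replaced by the diagonal projector $\Diag(e_{\supp(\alpha_a)})$ (resp.\ $\Diag(e_{\supp(\beta_a)})$). This modification keeps the primal feasible region intact but \emph{strengthens} the dual constraints from $\Diag(v_a) \succeq \sqrtt{\beta_{\bar a}}$ and $\Diag(z_{n+1}^{(y)}) \succeq \half\beta_{a,y}\sqrtt{\alpha_a}$ to $\Diag(v_a) \succeq \Diag(e_{\supp(\beta_{\bar a})})$ and $\Diag(z_{n+1}^{(y)}) \succeq \half\beta_{a,y}\Diag(e_{\supp(\alpha_a)})$. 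Consequently, in the translated point game the only moves still available at the splitting steps are trivial ones (probability splits), which is exactly how the paper defines a classical $\BCCF$-point game as a quantum one with trivial point splittings. If you fix the shortcut to use this data-replacement rather than a variable restriction, the rest of your LP-duality argument goes through exactly as you outlined.
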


The relationships between the quantum and classical versions of the $\BCCF$-protocols and $\BCCF$-point games are illustrated in Figure~\ref{crystal-intro} below.  

\begin{figure}[h]
  \centering
   \includegraphics[width=6.25in] {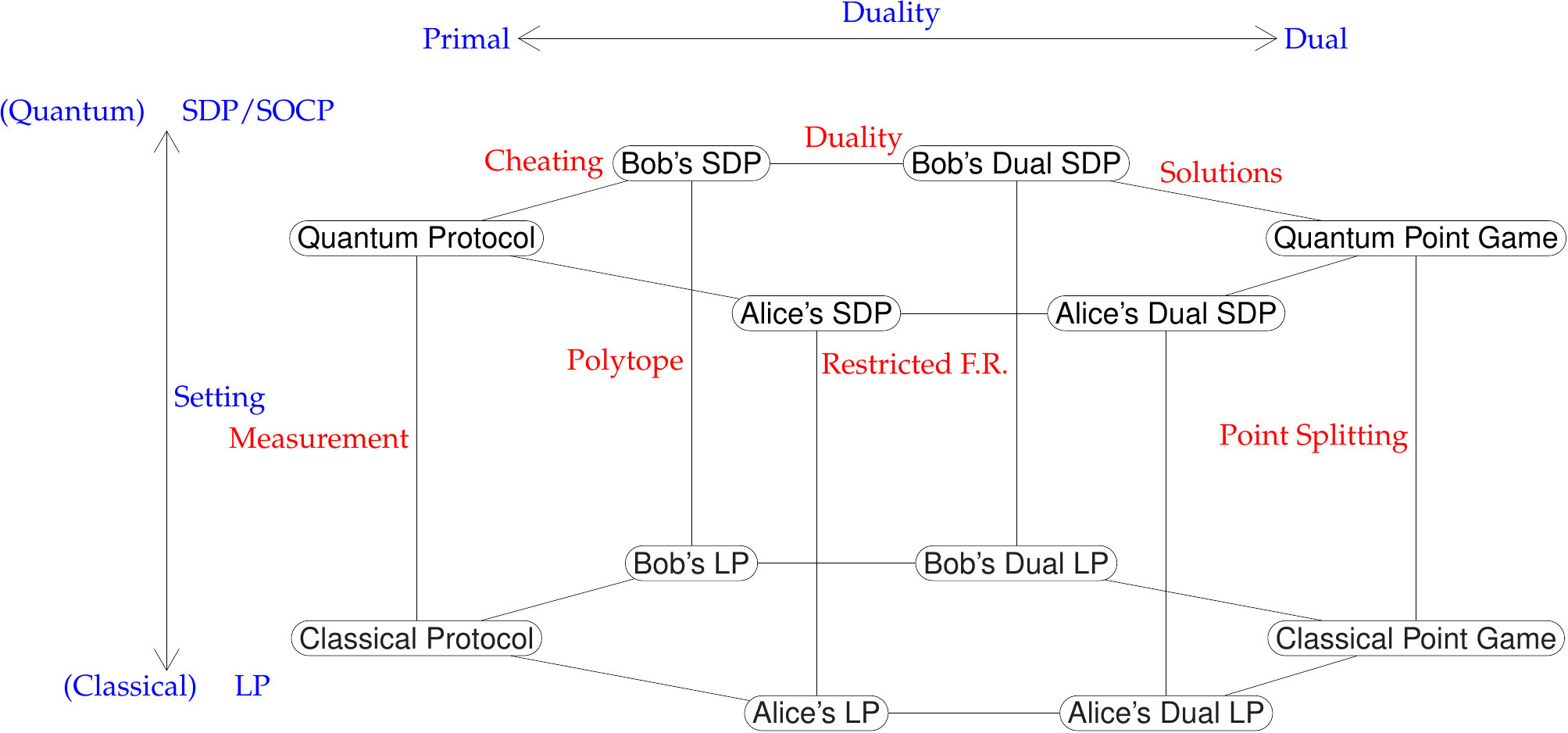} 
  \caption{Crystal structure of $\BCCF$-protocols. F.R. denotes ``feasible region'', SDP abbreviates ``semidefinite programming", SOCP abbreviates ``second-order cone programming", and LP abbreviates ``linear programming''.}
\label{crystal-intro}
\end{figure}

This figure gives a nice philosophical view of how the generalization of quantum mechanics from classical mechanics is analogous to the generalization of semidefinite programming from linear programming. As mentioned previously, it was shown in~\cite{NST14} that the optimal cheating strategies in the quantum version can be formulated using \emph{second-order cone programming} which is a special case of semidefinite programming but still a generalization of linear programming (see Subsection~\ref{ssect:opt}). This suggests that $\BCCF$-protocols are very simple compared to general quantum protocols, which is indeed the case. However, they are still provably more general than classical protocols. To put another way, just as the simple structure that makes our family of quantum protocols fit nicely between the set of classical and quantum protocols, the class of optimization problems that can be modelled as second-order cone programs fits nicely between those that can be modelled as linear programs and those that can be modelled as semidefinite programs. We discuss further this analogy and how to view a spectrum of optimization problems between linear programming and semidefinite programming (and beyond) in Section~\ref{class}. 

{Independent of our work and observations above, a similar phenomenon was exposed by Fiorini, Massar, Pokutta, Tiwary, and de Wolf~\cite{FMPTW2012} in research involving extended linear programming vs. extended semidefinite programming  formulations in combinatorial optimization.} 

Moreover, we can use these relationships to prove theoretical results. In particular, by examining the classical $\BCCF$-point games, we can prove that at least one party can cheat with probability $1$. A closer look reveals that there is no classical $\BCCF$-protocol where both parties can cheat with probability $1$ (which extends to the quantum case as well). This is summarized in the following theorem.
 
\begin{theorem}[(Informal) See Theorem~\ref{Thm:security} for a formal statement] 
Alice and Bob cannot {both} cheat perfectly in a quantum $\BCCF$-protocol. Exactly one of Alice or Bob can cheat perfectly in a classical $\BCCF$-protocol.
\end{theorem}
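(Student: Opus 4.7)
The plan is to split the statement into two independent claims: (a) in any classical $\BCCF$-protocol, at least one party can cheat perfectly, and (b) in any $\BCCF$-protocol (classical or quantum), both parties cannot cheat perfectly simultaneously. Together these give ``exactly one'' in the classical case and ``not both'' in the quantum case.

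For (a), I would invoke Theorem~\ref{CPGequiv} and work directly with classical $\BCCF$-point games. The classical polytopes $\calP_{\A}$ and $\calP_{\B}$ are integral, their vertices correspond to pure deterministic strategies, and the fidelity expression $\rF$ collapses to an inner product of aligned distributions. I would then reformulate the cheating maximum as a backward-induction game on the rounds of the protocol: at each round, the cheating player selects a message given the history so as to maximize the (already computed) suffix cheating probability. Because the protocol has finitely many rounds and the local choices are discrete, a standard game-tree pruning argument produces a deterministic strategy of value $1$ for one party and one outcome, corresponding to a classical point game whose final point is either $\pg{1}{0}$ or $\pg{0}{1}$.

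For (b), I would argue from the fidelity-sum formulas of Theorems~\ref{thm:reducedBob} and~\ref{thm:reducedAlice}. Suppose for contradiction that $P^*_{\A,c_\A}=1$ and $P^*_{\B,c_\B}=1$ for some $c_\A,c_\B \bit$. Since each fidelity term is bounded by $1$ while the total weight in the maximum equals the normalizing constant, equality forces \emph{every} fidelity term to equal $1$. On Alice's side this pins down $s^{(a,y)}=\alpha_a$ for all $(a,y)$ in the support of the appropriate $\beta$'s, and on Bob's side it analogously pins down the distribution $p_n$ in terms of $\beta_{\bar a}$. I would then trace these equalities back through the chaining constraints defining $\calP_{\A}$ and $\calP_{\B}$, and use the nontriviality of the distributions $\alpha_0, \alpha_1, \beta_0, \beta_1$ (otherwise the honest coin-flip would fail to be uniform) to extract a contradiction.

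The main obstacle will be the combining step in (b): the equations imposed individually by $P^*_{\A,c_\A}=1$ and by $P^*_{\B,c_\B}=1$ look independently consistent, and the contradiction must emerge from their interaction through the shared protocol parameters. I expect the cleanest resolution is to recast both assumptions inside the point-game framework: a simultaneous perfect cheat would produce a pair of $\BCCF$-point games with final points $\pg{1}{1}$, and the fidelity-monotonicity rules governing valid transitions can be shown to forbid this configuration. This would dispose of the quantum and classical cases in one stroke and keep the proof uniform with the spectrum-of-optimization viewpoint emphasized throughout the paper.
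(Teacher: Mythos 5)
Your decomposition into (a) ``at least one party cheats perfectly classically'' and (b) ``not both can cheat perfectly'' matches the paper's structure. For (a) both you and the paper use classical point games, but the mechanisms differ: you propose a backward-induction/game-tree argument over the integral vertices of $\calP_\A$ and $\calP_\B$, whereas the paper proves a short combinatorial lemma (Lemma~\ref{lemCPG}): among moves restricted to raising, merging, probability merging and probability splitting, every time-step must contain some point $\pg{\zeta_{\B,i}}{\zeta_{\A,i}}$ with $\zeta_{\B,i}\geq 1$ or $\zeta_{\A,i}\geq 1$, since the earliest violation could only come from a point merge, which preserves one coordinate and takes a convex combination in the other. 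The paper's argument is cleaner and requires no appeal to integrality or game trees. One small inaccuracy in your plan: you should not expect the final point to be $\pg{1}{0}$ or $\pg{0}{1}$; Alice can always guess $b$ and force either outcome with probability at least $1/2$, so the second coordinate is never $0$. What the lemma gives is that \emph{one} coordinate of the final point is at least $1$, not that the other vanishes.

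For (b) there is a genuine gap. Your planned fallback --- recasting a simultaneous perfect cheat as a point-game pair with final point $\pg{1}{1}$ and showing the transition rules forbid it --- cannot succeed, because a $\BCCF$-point game with final point $\pg{1}{1}$ is \emph{always} realizable (take the all-ones trivial dual feasible solution in each coordinate). The transition rules alone carry no obstruction; the contradiction must come from the protocol parameters. Your primary route (all fidelity terms equal $1$) is closer in spirit, but the appeal to ``nontriviality of $\alpha_0,\alpha_1,\beta_0,\beta_1$ because otherwise the honest coin-flip is not uniform'' is wrong: the honest outcome $a\oplus b$ is uniform no matter what $\alpha$ and $\beta$ are, since $a$ and $b$ are each uniform on $A_0,B_0$. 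The paper instead exploits the explicit bound $P_{\A,c}^*\leq \half+\half\Delta(\beta_0,\beta_1)$ (valid for the quantum protocol as well, via inclusion of dual feasible regions). Thus $P_{\A,0}^*=1$ forces $\beta_0\perp\beta_1$, whence Bob's optimal classical cheat equals $\half+\half\Delta(\tr_{A_2\times\cdots\times A_n}\alpha_0,\tr_{A_2\times\cdots\times A_n}\alpha_1)$; for this to be $1$ the marginals, and hence $\alpha_0,\alpha_1$, must be orthogonal. But when $\alpha_0\perp\alpha_1$, Alice's first message already commits her to $a$, and since she has no information about $b$ at that point she can cheat with probability only $1/2$ --- contradicting $P_{\A,0}^*=1$. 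The quantum case follows because quantum cheating probabilities are bounded above by the classical ones. Your sketch never isolates the step where Alice's commitment at round one caps her cheating at $1/2$, and without it the argument does not close.
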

 
We then address the problem of finding the smallest  bias for quantum $\BCCF$-protocols. We do this by examining what happens when both of Kitaev's lower bounds 
$P_{\A,0}^* P_{\B,0}^* \geq 1/2$ and $P_{\A,1}^* P_{\B,1}^* \geq 1/2$ are  saturated. 
 
\begin{theorem}[(Informal) See Theorem~\ref{Kitproof} for a formal statement] 
If a quantum $\BCCF$-protocol saturates both of Kitaev's lower bounds, then the cheating probabilities are the same as in the corresponding classical protocol. 
\end{theorem}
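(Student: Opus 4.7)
The plan is to combine the fidelity-based reformulations of the quantum cheating probabilities (Theorems~\ref{thm:reducedBob} and~\ref{thm:reducedAlice}) with the simultaneous saturation of both Kitaev bounds, in order to force the optimal quantum strategies into the classical (diagonal) face of the polytopes $\calP_\A$ and $\calP_\B$. Since every classical strategy remains feasible in the quantum formulation, the quantum cheating probabilities automatically dominate the classical ones for each of $P_{\A,0}^*, P_{\A,1}^*, P_{\B,0}^*, P_{\B,1}^*$; under the saturation hypothesis, only the reverse inequalities need to be established.

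The first step is to revisit the derivation of Kitaev's bound in this framework. The inequality $P_{\A,c}^* P_{\B,c}^* \geq 1/2$ arises, via an Uhlmann-type coupling, as an instance of Cauchy--Schwarz on fidelities of block PSD matrices built out of an optimal cheating-Alice solution $(s_1, \ldots, s_n, s)$ and an optimal cheating-Bob solution $(p_1, \ldots, p_n)$. Tightness of this Cauchy--Schwarz step forces a proportionality between $s^{(a,y)}$ and the operator extracted from $p_n$, restricted to the support of $\alpha_a$. A single saturation (say at $c=0$) only aligns the optimizer with $\diag(\alpha_0)$, which is insufficient to force classicality. Invoking \emph{both} saturations ($c=0$ and $c=1$) jointly aligns the same optimizer with both $\diag(\alpha_0)$ and $\diag(\alpha_1)$; by nondegeneracy of a nontrivial $\BCCF$-protocol (the distributions $\alpha_0$ and $\alpha_1$ have distinct support structure), simultaneous alignment with both diagonal matrices compels the optimizer to itself be diagonal in the computational basis. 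The mirrored argument using $\beta_0, \beta_1$ does the same for $(p_1, \ldots, p_n)$.

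The second step is to propagate diagonality through the entire tree of polytope variables via the partial-trace and consistency constraints defining $\calP_\A$ and $\calP_\B$: a round-by-round induction shows that if the leaf-level operators are diagonal then so are all intermediate operators. Once every variable is diagonal, each fidelity in the objective collapses to a classical fidelity supported on $\supp(\alpha_a)$ or $\supp(\beta_{\bar a})$, and a final Cauchy--Schwarz step using the polytope normalization reduces the objective to the linear form of the classical formulation from Theorem~\ref{CPGequiv}; the resulting diagonal optimizer is feasible classically and achieves precisely the classical cheating probability. I expect the principal obstacle to be Step~1, since a single Kitaev saturation is genuinely not strong enough to force diagonality, so the proof crucially requires the simultaneous use of both saturations together with the structural features of $\BCCF$-protocols that distinguish outcomes $0$ and $1$. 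A subsidiary challenge is the propagation in Step~2 for many-round protocols, where the recursive polytope constraints must be carefully unwound and checked to admit no residual quantum slack that could keep the optimum strictly above the classical value.
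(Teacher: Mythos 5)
Your proposal has the direction of the quantum-versus-classical comparison backwards, and this misstep invalidates the overall plan. You claim that ``the quantum cheating probabilities automatically dominate the classical ones'' because classical strategies are quantum-feasible; but the cheating polytopes $\calP_{\A}$ and $\calP_{\B}$ are \emph{identical} in both settings, and the difference lies entirely in the objective. The quantum objective is a fidelity, e.g.\ $\rF\bigl((\alpha_a\otimes\id_B)^\T p_n,\,\beta_a\bigr)$, while the classical objective is the linear functional $\sum_{y\in\supp(\beta_a)}\bigl[(\alpha_a\otimes\id_B)^\T p_n\bigr]_y$; the former is always $\le$ the latter on the same feasible point. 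Equivalently (as the paper notes), the classical protocol has a weaker cheat detection, hence a \emph{larger} classical dual feasible region is not what happens: the classical dual constraints $\Diag(v_a)\succeq\Diag(e_{\supp(\beta_a)})$ are \emph{stronger} than the quantum ones $\Diag(v_a)\succeq\sqrtt{\beta_a}$, so the classical dual region is smaller, making the dual (and hence primal) optimum larger. Thus $P^{\mathrm{cl}}\ge P^{\mathrm{qu}}$ in general, and the content of the theorem is to force the reverse inequality. Your plan, aimed at showing quantum $\le$ classical, is addressed to an inequality that holds unconditionally.

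Beyond the directional error, the mechanism you propose does not engage with the structure of the reduced problems. The variables $(s_1,\dots,s_n,s)$ and $(p_1,\dots,p_n)$ are nonnegative \emph{vectors}, not operators, so ``forcing the optimizer to be diagonal'' is vacuous; the quantum/classical distinction lives only in the objective, never in a diagonal-vs.-nondiagonal face of the polytope. Likewise, Kitaev's bound $P_{\A,c}^*P_{\B,c}^*\ge 1/2$ is not proven in this framework via a Cauchy--Schwarz coupling of two primal solutions: a lower bound on a product of maxima of this sort cannot be obtained from primal feasibility alone. The paper derives the bound by taking \emph{optimal dual} solutions $(w_1,\dots,w_n,v_0,v_1)$ and $(z_1,\dots,z_{n+1})$ and chaining the linear dual constraints down to the last two constraints $\Diag(v_a)\succeq\sqrtt{\beta_a}$ and $\Diag(z_{n+1}^{(y)})\succeq\half\beta_{a,y}\sqrtt{\alpha_a}$. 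Saturation then forces equality in each step; a strict-convexity argument shows that the unique choices saturating the final two inequalities are $v_a=e_{\supp(\beta_a)}$ and $[z_{n+1}^{(y)}]_x=\half\beta_{a,y}$ on the relevant supports, which are precisely the solutions of the stronger \emph{classical} dual constraints. Hence the optimal quantum dual solution is classically dual-feasible, giving $P^{\mathrm{cl}}\le P^{\mathrm{qu}}$ under saturation, which together with the unconditional reverse inequality yields equality. Your two-step ``align then propagate diagonality'' outline does not reproduce this, and as stated it both targets the wrong inequality and invokes a primal Cauchy--Schwarz mechanism that is not available here.
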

We can combine the above two results to use classical protocols to lower bound the quantum bias. 
  
\begin{corollary} \label{CorSec}
In {every} quantum $\BCCF$-protocol, we have $\max \{ P_{\A,0}^*, P_{\A,1}^*, P_{\B,0}^*, P_{\B,1}^* \} > 1/ \sqrt{2}$.
\end{corollary}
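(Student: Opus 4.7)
The plan is to argue by contradiction, combining Kitaev's lower bounds with the two theorems just stated before the corollary. I would assume for contradiction that $\max\{P_{\A,0}^*, P_{\A,1}^*, P_{\B,0}^*, P_{\B,1}^*\} \leq 1/\sqrt{2}$ in some quantum $\BCCF$-protocol, and aim to derive a contradiction with the classical structural result.

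First I would apply Kitaev's two lower bounds $P_{\A,0}^* P_{\B,0}^* \geq 1/2$ and $P_{\A,1}^* P_{\B,1}^* \geq 1/2$. Under the contradiction hypothesis, each of the four cheating probabilities is bounded above by $1/\sqrt{2}$, so each product is bounded above by $1/2$. Combined with Kitaev's bounds, this forces equality in both products and hence
\[
P_{\A,0}^* \;=\; P_{\B,0}^* \;=\; P_{\A,1}^* \;=\; P_{\B,1}^* \;=\; \frac{1}{\sqrt{2}}.
\]
In particular, both Kitaev lower bounds are simultaneously saturated.

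At this point I would invoke Theorem~\ref{Kitproof}: saturation of both Kitaev bounds in a quantum $\BCCF$-protocol forces the four optimal cheating probabilities to coincide with those of the corresponding classical $\BCCF$-protocol. Then Theorem~\ref{Thm:security} applied to the classical protocol says that exactly one of Alice or Bob can cheat with probability $1$, so at least one of the four classical optimal cheating probabilities equals $1$. Transferring this back to the quantum protocol via the equality of cheating probabilities yields $\max\{P_{\A,0}^*, P_{\A,1}^*, P_{\B,0}^*, P_{\B,1}^*\} = 1 > 1/\sqrt{2}$, contradicting the assumption $\leq 1/\sqrt{2}$.

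The main obstacle is not in this short deduction itself, which is essentially a clean pigeonhole between Kitaev's AM--GM-style lower bound and the classical dichotomy; rather, all the technical difficulty has been pushed into Theorem~\ref{Kitproof} (the saturation statement), whose proof needs the tight structural analysis of the semidefinite programs reducing them to the classical linear programming form. Given those theorems, the corollary follows immediately from the contradiction chain above, and I would present it as a two-sentence argument.
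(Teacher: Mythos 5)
Your argument is correct and is exactly the paper's intended proof: it combines the two Kitaev inequalities with the contradiction hypothesis to force saturation, then uses Theorem~\ref{Kitproof} to pass to the classical protocol and Theorem~\ref{Thm:security} to produce a perfect cheat, contradicting the assumed $1/\sqrt{2}$ bound. The paper merely states the conclusion in one line after proving those two theorems, so your write-up fills in the same chain of reasoning.
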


\subsection{Organization of the paper} 

We start with establishing notation {and terminology on linear algebra, optimization problems of interest and some technical lemmas} in Section~\ref{BG}. Background on coin-flipping and Kitaev's protocol and point game formalisms can be found in Appendix~\ref{CFSDP}. In Section~\ref{family}, we introduce the family of {quantum} protocols we consider in this paper and {formulate} their cheating strategies using semidefinite programming. The corresponding point games are developed and analyzed in Section~\ref{BCCFPG}. A family of related classical protocols and {their} point games are examined in Section~\ref{class} and used to lower bound the quantum bias in Section~\ref{LB}. We end with conclusions in Section~\ref{conc}. 
 
\section{Background} 
\label{BG}

In this section, we establish the notation and the necessary background for this paper.

\subsection{Linear algebra}

For a finite set $A$, we denote by $\R^A$, $\R_+^A$, $\Prob^A$, and $\C^A$ the set of real vectors, nonnegative real vectors, probability vectors, and complex vectors, respectively, each indexed by $A$. We use $\R^n$, $\R_+^n$, $\Prob^n$, and $\C^n$ for the special case when $A = \set{1, \ldots, n}$. We denote by $\mathbb{S}^A$ and $\pos^A$ the set of Hermitian matrices and positive semidefinite Hermitian matrices, respectively, each over
the reals with columns and rows indexed by $A$.

It is convenient to define $\sqrt{x}$ to be the element-wise square root of a nonnegative vector $x$. The element-wise square root of a probability vector yields a unit vector (in the Euclidean norm). This operation, in some sense, is a conversion of a probability vector to a quantum state. For a vector $p \in \R^A$, we denote by $\Diag(p) \in \mathbb{S}^A$ the diagonal matrix with $p$ on the diagonal. For a matrix $X \in \mathbb{S}^A$, we denote by $\diag(X) \in \R^A$ the vector on the diagonal of $X$. For a vector $x \in \C^A$, we denote by $\supp(x)$ the set of indices of $A$ where $x$ is nonzero. We denote by $x^{-1}$ the vector of inverses, i.e., each entry in the support of $x$ is inverted, and $0$ entries are mapped to $0$.

For vectors $x$ and $y$, the notation $x \geq y$ denotes that $x-y$ has nonnegative entries, $x > y$ denotes that $x-y$ has positive entries, and for {Hermitian} matrices $X$ and $Y$, the notation $X \succeq Y$ denotes that  $X - Y$ is positive semidefinite, and $X \succ Y$ denotes $X - Y$ is positive definite when the underlying spaces are clear from context.

The Schur {complement} of the block matrix $X := \left[ \begin{array}{cc} A & B \\ C & D \end{array} \right]$ is $S := A - BD^{-1}C$. Note that when $D \succ 0$ {and $C = B^*$ with $A$ Hermitian}, then $X \succeq 0$ if and only if $S \succeq 0$.

{The Kronecker product of two  matrices $X$ and $Y$, denoted $X \otimes Y$, is defined such that the $i,j$'th block is equal to $X_{i,j} \cdot Y$. Note that $X \otimes Y \in \pos^{A \times B}$ when $X \in \pos^A$ and $Y \in \pos^B$ and $\tr(X \otimes Y) = \tr(X) \cdot \tr(Y)$ when $X$ and $Y$ are square.}

The \emph{Schatten $1$-norm}, or \emph{trace norm}, of a matrix $X$ is defined as
\[ \norm{X}_1 := \tr(\sqrt{X^* X}), \] 
where $X^*$ is the adjoint of $X$ and $\sqrt{X}$ denotes the {Hermitian} square root of a {Hermitian} positive semidefinite matrix $X$, i.e., the {Hermitian} positive semidefinite matrix $Y$ such that $Y^2 = X$. Note that the $1$-norm of a matrix is the sum of its singular values. The $1$-norm of a vector $p \in \C^A$ is defined as
\[ \norm{x}_1 := \sum_{x \in A} |p_x|. \]

For a matrix $X$, we denote by $\nulll(X)$ the nullspace of $X$. We denote by $\inner{X}{Y}$ the standard inner product of matrices acting on the same space given by $\tr(X^* Y)$.

We use the notation $\bar{a}$ to denote the complement of a bit $a$ with respect to $0$ and $1$ and $a \oplus b$ to denote the XOR of the bits $a$ and $b$. We use $\mathbb{Z}_2^n$ to denote the set of $n$-bit binary strings.

A convex set $C$ is a \emph{convex cone} if $\lambda x \in C$ when $\lambda \geq 0$ and $x \in C$. The dual of the convex cone $C$, denoted $C^*$, is the set $\{ y : \inner{x}{y} \geq 0, \, \forall x \in C \}$.

A function $f: \mathbb{S}^n \to \mathbb{S}^m$ is said to be \emph{operator monotone} if
\[ f(X) \succeq f(Y) \quad \textup{ when } \quad X \succeq Y. \]
The set of operator monotone functions is a convex cone.

A \emph{polyhedron} is the solution set of a system of finitely many linear inequalities (or equalities). A \emph{polytope} is a bounded polyhedron.

The \emph{(quantum) partial trace over ${A_1}$}, denoted $\tr_{{A_1}}$, is defined as the unique linear transformation 
which satisfies 
\[ \tr_{A_1}(\rho_1 \otimes \rho_2) = \tr(\rho_1) \cdot \rho_2 \] for all $\rho_1 \in \mathbb{S}^{A_1}$ and $\rho_2 \in \mathbb{S}^{A_2}$. More explicitly, given any matrix $X \in {\mathbb{S}}^{A_1 \times A_2}$, we have 
\[ \tr_{A_1}(X) := \sum_{x_1 \in A_1} \left( e_{x_1}^* \otimes \id_{A_2} \right) X \left( e_{x_1} \otimes \id_{A_2} \right), \]
where $\set{ e_{x_1}: x_1 \in A_1}$ is the standard basis for $\C^{A_1}$. In
fact, the definition is independent of the choice of basis, so long as
it is orthonormal. 
{The adjoint of the partial trace is the transformation   
$\tr_A^*(X) = X \otimes \id_A$.} 

We also define the \emph{classical partial trace over $A_1$}, denoted $\tr_{{A_1}}: \C^{A_1 \times A_2} \to \C^{A_2}$, {as the linear transformation} 
\[ \tr_{A_1}(p) = (e_{A_1}^{\T} \otimes \id) \, p, \] 
where $e_{A_1}$ is the vector of all ones indexed by $x_1 \in A_1$. 
If $p$ is a probability vector over $A_1 \times A_2$, then $\tr_{A_1}(p)$ is the marginal probability vector of $p$ over $A_2$. 
{The adjoint of the classical partial trace is the transformation   
$\tr_A^*(p) = p \otimes e_A$.} 

We define the \emph{fidelity} of two nonnegative vectors $p,q \in \R_+^A$ as
\[ \rF(p,q) := \left( \sum_{x \in A} \sqrt{p_x} \sqrt{q_x} \right)^2 \]
and the fidelity of two positive semidefinite matrices $\rho_1$ and $\rho_2$ as
\[ \rF(\rho_1, \rho_2) := \norm{\sqrt{\rho_1}\sqrt{\rho_2}}_1^2. \]
Notice, $\rF(\rho_1, \rho_2) \geq 0$ with equality if and only if $\inner{\rho_1}{\rho_2} = 0$ and, if $\rho_1$ and $\rho_2$ are quantum states, $\rF(\rho_1, \rho_2) \leq 1$ with equality if and only if $\rho_1 = \rho_2$. An analogous statement can be made for the fidelity over probability vectors.

Another distance measure is the \emph{trace distance}. We define the trace distance between two probability vectors $p$ and $q$, denoted $\Delta(p, q)$, as
\[ \Delta(p, q) := \half \norm{p - q}_1. \] 
{This is also commonly known as the total variation distance.} 
We similarly define the trace distance between two quantum states $\rho_1$ and $\rho_2$ as
\[ \Delta(\rho_1, \rho_2) := \frac{1}{2} \norm{\rho_1 - \rho_2}_1. \]
Notice $\Delta(\rho_1, \rho_2) \geq 0$ with equality if and only if $\rho_1 = \rho_2$ and $\Delta(\rho_1, \rho_2) \leq 1$ with equality if and only if $\inner{\rho_1}{\rho_2} = 0$. The analogous statement can be made for the trace distance between probability vectors. 

We use the notation $\eig(X)$ to denote the set of (distinct) eigenvalues of a matrix $X$ {and $\Pi^{[\lambda]}_X$ to denote the projection onto the eigenspace of $X$ corresponding to the eigenvalue $\lambda \in \eig(X)$}. 

\subsection{Optimization classes}\label{ssect:opt} 

\subsubsection{Semidefinite programming} 
 
A natural {class} of optimization {problems} when studying quantum information is {semidefinite programming}. A {semidefinite program}, abbreviated
as $\SDP$, {is an optimization
problem with finitely many Hermitian matrix variables, a linear objective function of these variables,
and finitely many constraints enforcing positive semidefiniteness of some linear functions of these variables.
Every SDP can be put into the following standard form using some elementary reformulation tricks:} 
\[ \begin{array}{rrrcllllllllllllll}
\textrm{(P)} & \sup                         & \inner{C}{X} \\
                     & \textrm{subject to} & \calA(X) & = & b, \\
                     &                                   & X & \in & \Pos^n,  
\end{array} \] 
where $\calA: \Herm^n \to \R^m$ is linear, $C \in \Herm^n$, and $b \in \R^m$.
The SDPs that arise in quantum computation involve optimization
over complex matrices.
However, they may be transformed to the above standard form in a
straightforward manner, by observing that Hermitian matrices form
a real subspace of the vector space of~$n \times n$ complex matrices.
{We remark here that the data defining the optimization problems in this paper {are} always real and thus we can restrict ourselves to real matrix variables without loss of generality.}

{We can write the \emph{dual} of (P) as} 
\[ \begin{array}{rrrcllllllllllllll}
\textrm{(D)} & \inf                         & \inner{b}{y} \\
                     & \textrm{subject to} & \calA^*(y) - S & = & C, \\
                     &                                   & S & \in & \Pos^n, 
\end{array} \]
where $\calA^*$ is the adjoint of $\calA$. We refer to (P) as the primal problem and to (D) as its dual. It is straightforward to verify that the dual of (D) is (P).

We say~$X$ is \emph{feasible\/} for (P) if it satisfies the
constraints~$\calA(X) = b$ and~$X \in \Pos^n$, and~$(y,S)$ is feasible
for~(D) if~$\calA^*(y) - S = C$, and $S \in \Pos^n$.
The usefulness of defining the dual in the above manner is apparent in the following lemmas.

\begin{lemma}[Weak duality]
For every $X$ feasible for $\textup{(P)}$ and $(y,S)$ feasible for $\textup{(D)}$ we have
\[ \inner{C}{X} \leq \inner{b}{y}. \]
\end{lemma}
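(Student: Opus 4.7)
The plan is to establish the chain of equalities $\inner{b}{y} = \inner{X}{C} + \inner{X}{S}$ and then observe that the final term is nonnegative, so that $\inner{C}{X} \le \inner{b}{y}$. This is the standard weak-duality calculation, and the bulk of the work is simply threading the primal constraint, the definition of the adjoint, and the dual constraint through one inner product.

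More concretely, I would first use primal feasibility to write $\inner{b}{y} = \inner{\calA(X)}{y}$, and then invoke the definition of the adjoint map $\calA^*$ to move the operator across the inner product, obtaining $\inner{X}{\calA^*(y)}$. Next, using the dual constraint $\calA^*(y) = C + S$ together with linearity of the inner product, this becomes $\inner{X}{C} + \inner{X}{S}$. Rearranging gives
\[
\inner{b}{y} - \inner{C}{X} \;=\; \inner{X}{S}.
\]

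The only nontrivial ingredient is the nonnegativity of $\inner{X}{S}$ when both $X, S \in \Pos^n$. I would justify this by writing
\[
\inner{X}{S} \;=\; \tr(X S) \;=\; \tr\!\left(\sqrt{S}\, X\, \sqrt{S}\right),
\]
using cyclicity of the trace together with the Hermitian square roots of $X$ and $S$. Since $\sqrt{S}\, X\, \sqrt{S}$ is the conjugation of a positive semidefinite matrix and is therefore itself positive semidefinite, its trace is nonnegative. Combined with the previous identity this yields $\inner{C}{X} \le \inner{b}{y}$ for every feasible pair, completing the proof. There is no real obstacle here; the only place where one must be a bit careful is ensuring that the adjoint identity $\inner{\calA(X)}{y} = \inner{X}{\calA^*(y)}$ is applied with the correct (real) inner products on $\R^m$ and $\Herm^n$, which is exactly the defining property of $\calA^*$ as given earlier.
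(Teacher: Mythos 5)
The paper states weak duality without proof, as a standard fact of SDP duality. Your proof is correct and is the textbook argument: primal feasibility and the adjoint identity give $\inner{b}{y} = \inner{X}{\calA^*(y)} = \inner{X}{C} + \inner{X}{S}$, and the conjugation/trace argument for $\inner{X}{S} = \tr(\sqrt{S}\,X\,\sqrt{S}) \geq 0$ is a clean, valid justification of the key nonnegativity step.
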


Using weak duality, we can prove bounds on the optimal objective value of $\textup{(P)}$ and $\textup{(D)}$, i.e., the objective function value of any primal feasible solution yields a lower bound on $\textup{(D)}$ and the objective function value of any dual feasible solution yields an upper bound on $\textup{(P)}$.

Under mild conditions, we have that the optimal objective values of $\textup{(P)}$ and $\textup{(D)}$ coincide.

\begin{lemma}[Strong duality]
If the objective function of $\textup{(P)}$
is bounded from above
on the set of feasible solutions of $\textup{(P)}$ and there exists a strictly feasible solution, i.e., there exists $\bar{X} \succ 0$ such that $\calA(\bar{X}) = b$, then $\textup{(D)}$ has an optimal solution and the optimal objective values of $\textup{(P)}$ and $\textup{(D)}$ coincide.
\end{lemma}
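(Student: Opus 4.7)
The plan is to apply a separating hyperplane argument to produce an optimal dual solution, using the Slater condition (the strictly feasible $\bar{X} \succ 0$) to rule out a degenerate separation that would fail to correspond to a dual feasible point.

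First, I would set $p^*$ to be the primal supremum, finite by hypothesis, and introduce the convex set
\[
M := \{(\calA(X) - b,\; p^* - \inner{C}{X} + s) : X \in \Pos^n,\; s \geq 0\} \subseteq \R^m \times \R.
\]
By the definition of $p^*$, the open ray $\{(0, -\epsilon) : \epsilon > 0\}$ is disjoint from $M$, since $(0, -\epsilon) \in M$ would yield some $X \succeq 0$ with $\calA(X) = b$ and $\inner{C}{X} = p^* + s + \epsilon \geq p^* + \epsilon$, contradicting the supremum. The hyperplane separation theorem then supplies a nonzero $(y, \lambda) \in \R^m \times \R$ and some $c \in \R$ with
\[
\inner{y}{\calA(X) - b} + \lambda\bigl(p^* - \inner{C}{X} + s\bigr) \geq c \quad \textup{for all } X \succeq 0,\; s \geq 0,
\]
together with $-\lambda \epsilon \leq c$ for every $\epsilon > 0$; letting $\epsilon \to \infty$ forces $\lambda \geq 0$, and $\epsilon \to 0^+$ forces $c \geq 0$.

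Next, I would extract the algebraic content of the displayed inequality. Substituting $X = tZ$ with $Z \succeq 0$ and letting $t \to \infty$ gives $\inner{\calA^*(y) - \lambda C}{Z} \geq 0$ for every $Z \succeq 0$, i.e., $\calA^*(y) - \lambda C \succeq 0$; setting $X = 0$ and $s = 0$ gives $\lambda p^* \geq \inner{y}{b} + c \geq \inner{y}{b}$. In the desired case $\lambda > 0$, I would rescale by setting $y' := y/\lambda$ and $S := \calA^*(y') - C$, which is positive semidefinite, so $(y', S)$ is dual feasible and satisfies $\inner{y'}{b} \leq p^*$; combining this with weak duality $\inner{y'}{b} \geq p^*$ yields $\inner{y'}{b} = p^*$, showing that $(y', S)$ is a dual optimal solution attaining $p^*$.

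The main obstacle is to rule out the degenerate case $\lambda = 0$, and this is precisely where the Slater condition enters. If $\lambda = 0$, then $y \neq 0$, $\calA^*(y) \succeq 0$, and $\inner{y}{b} \leq 0$. Using $\bar{X} \succ 0$ with $\calA(\bar{X}) = b$ gives
\[
0 \geq \inner{y}{b} = \inner{y}{\calA(\bar{X})} = \inner{\calA^*(y)}{\bar{X}} \geq 0,
\]
so $\inner{\calA^*(y)}{\bar{X}} = 0$; since $\calA^*(y) \succeq 0$ and $\bar{X} \succ 0$, this forces $\calA^*(y) = 0$. After reducing without loss of generality to the case where $\calA$ is surjective (equivalently, its defining constraint matrices are linearly independent in $\Herm^n$, so that $\calA^*$ is injective), we conclude $y = 0$, contradicting $(y, \lambda) \neq 0$. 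Hence $\lambda > 0$, and the construction of the previous paragraph produces the required dual optimal solution.
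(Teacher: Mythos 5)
The paper does not prove this lemma at all: it is stated as background, with the duality theory deferred to the cited references (\cite{SDP}, \cite{TW08}). So there is no proof in the paper to compare against, and your argument should be judged on its own.

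Your proof is the standard Slater-condition separating-hyperplane argument for SDP strong duality, and it is essentially correct: the set $M$ is convex, the ray $\{(0,-\epsilon):\epsilon>0\}$ is disjoint from it, the extracted conditions $\lambda\ge 0$, $c\ge 0$, $\calA^*(y)-\lambda C\succeq 0$, and $\lambda p^*\ge \inner{y}{b}$ all follow as you state, the rescaling in the case $\lambda>0$ correctly produces a dual feasible $(y',S)$ that attains $p^*$ by weak duality, and the deduction $\calA^*(y)=0$ in the case $\lambda=0$ from $\inner{\calA^*(y)}{\bar X}=0$ with $\calA^*(y)\succeq 0$, $\bar X\succ 0$ is sound.

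One small point of hygiene: the reduction ``without loss of generality $\calA$ is surjective'' should be performed at the outset of the argument, not after the hyperplane $(y,\lambda)$ has already been constructed, since the vector $y$ lives in the codomain of $\calA$ and changing that codomain after the fact is awkward. (The reduction itself is legitimate: Slater implies $b\in\mathrm{range}(\calA)$, so redundant equality constraints can be dropped and dual solutions transfer with unchanged objective.) Alternatively, you can avoid the reduction entirely by invoking the \emph{proper} separation theorem for two disjoint convex sets: the case $\lambda=0$, $\calA^*(y)=0$ together with $b\in\mathrm{range}(\calA)$ forces $\inner{y}{\calA(X)-b}=0$ for every $X$, so the separating functional vanishes identically on both $M$ and the ray, which proper separation forbids. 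Either patch closes the argument cleanly.
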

A strictly feasible solution as in the above lemma is also called a
\emph{Slater point}. Semidefinite programming has a powerful and rich duality theory and the interested reader is referred to \cite{SDP}, \cite{TW08}, and the references therein. 

\subsubsection{Second-order cone  programming} 

The \emph{second-order cone} (or \emph{Lorentz cone}) in~$\R^n$, $n \ge
2$, is defined as
\[ \Lor^n := \set{(x,t) \in {\R^{n-1} \oplus \R} : t \geq \norm{x}_2}. \]
A \emph{second-order cone program}, denoted SOCP, is an optimization problem of the form
\[ \begin{array}{rrrcllllllllllllll}
\textrm{(P)} & \sup                         & \inner{c}{x} \\
                     & \textrm{subject to} & Ax & = & b, \\
                     &                                   & x & \in & \Lor^{n_1} \oplus \cdots \oplus \Lor^{n_k}, 
\end{array} \]
where $A$ is an $m \times (\sum_{i=1}^k n_k)$ matrix, $b \in \R^m$, {$c \in \R^{\sum_{i=1}^k n_k}$,} and $k$ is finite. We say that a feasible solution $\bar{x}$ is strictly feasible if $\bar{x}$ is in the interior of $\Lor^{n_1} \oplus \cdots \oplus \Lor^{n_k}$.

An SOCP also has a dual which can be written as
\[ \begin{array}{rrrcllllllllllllll}
\textrm{(D)} & \inf                         & \inner{b}{y} \\
                     & \textrm{subject to} & A^\top y - s & = & c, \\
                     &                                   & s & \in & \Lor^{n_1} \oplus \cdots \oplus \Lor^{n_k}. \\
\end{array} \]
Note that weak duality and strong duality also hold for SOCPs for the {above} definition of a strictly feasible
solution.

A related cone, called the \emph{rotated second-order cone}, is defined as
\[ \RL^n := \set{(a,b,x) \in {\R \oplus \R \oplus \R^{n-2}} : a, b \geq 0, \, 2 a b \geq \norm{x}_2^2}. \]
Optimizing over the rotated second-order cone is also called second-order cone programming because $(x,t) \in \Lor^{n}$ if and only if $(t/2,t,x) \in \RL^{n+1}$ and $(a,b,x) \in \RL^n$ if and only if $(x,a,b,a+b) \in \Lor^{n+1}$
and~$a,b \ge 0$. In fact, both second-order cone constraints can be cast as positive semidefinite constraints:
\[ t \geq \norm{x}_2 \iff \left[ \begin{array}{cc} t & x^\top \\ x & t \, \id \end{array} \right] \succeq 0 \quad \textup{ and } \quad a,b \geq 0, \, 2ab \geq \norm{x}_2^2 \iff \left[ \begin{array}{cc} 2a & x^\top \\ x & b \, \id \end{array} \right] \succeq 0. \]

{Despite second-order cone programming being a special case of semidefinite programming, there} are some notable differences. One is that the algorithms for solving second-order cone programs can be more efficient and robust than those for solving semidefinite programs. We refer the interested reader to~\cite{Stu99, Stu02, Mit03, AG03} and the references therein.

\subsubsection{Linear programming} 

A linear program, denoted LP, {is an optimization problem of the form} 
\[ \begin{array}{rrrcllllllllllllll}
\textrm{(P)} & \max                         & \inner{c}{x} \\
                     & \textrm{subject to} & Ax & = & b, \\
                     &                                   & x & \in & \R_+^n, 
\end{array} \]    
where $A$ is an $m \times n$ matrix, $c \in \R^n$ and $b \in \R^m$.

Linear programming is a special case of both second-order cone programming and semidefinite programming. This can be seen by casting a nonnegativity constraint  $t \geq 0$ as the SOC constraint $(0,t) \in \Lor^2$. Associated with every linear program is its dual which is defined as 
\[ \begin{array}{rrrcllllllllllllll}
\textrm{(D)} & \min                         & \inner{b}{y} \\
                     & \textrm{subject to} & A^{\T} y - s & = & c, \\
                     &                                   & s & \in & \R_+^n. 
\end{array} \]  
Note that in this special case, we do not require strict feasibility to guarantee strong duality. {If a linear program is feasible and its objective function is bounded over its feasible region, then} it and its dual attain an optimal solution and the optimal values always coincide. 

\subsection{Technical lemmas} \label{techlemmas}

In this subsection, we present a few lemmas which are helpful in the analysis in this paper.

\begin{lemma}[\cite{NST14}] \label{FidelityLemma}
For {every} $p, q \in \R_+^{A}$, we have
\[ \rF(p,q) = \max \{ \inner{X}{\sqrtt{p}} : \diag(X) = q, \, X \in \pos^{A} \}. \]
\end{lemma}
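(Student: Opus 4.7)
The plan is to establish the identity by proving matching upper and lower bounds, with an explicit optimizer for the lower bound.

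For the upper bound, given any feasible $X \in \pos^A$ with $\diag(X) = q$, I would use a Gram decomposition: since $X$ is positive semidefinite, there exist vectors $\{u_x : x \in A\}$ in some Euclidean space such that $X_{x,y} = u_x^{\T} u_y$. The diagonal constraint $X_{x,x} = q_x$ forces $\|u_x\|_2 = \sqrt{q_x}$. Then, because $\sqrtt{p}$ has entries $\sqrt{p_x}\sqrt{p_y}$,
\[ \inner{X}{\sqrtt{p}} \;=\; \sqrt{p}^{\T} X \sqrt{p} \;=\; \sum_{x,y \in A} \sqrt{p_x p_y}\, u_x^{\T} u_y \;=\; \Bigl\| \sum_{x \in A} \sqrt{p_x}\, u_x \Bigr\|_2^{2}. \]
Applying the triangle inequality and using $\|u_x\|_2 = \sqrt{q_x}$ gives
\[ \Bigl\| \sum_{x \in A} \sqrt{p_x}\, u_x \Bigr\|_2 \;\le\; \sum_{x \in A} \sqrt{p_x}\, \|u_x\|_2 \;=\; \sum_{x \in A} \sqrt{p_x q_x}, \]
so $\inner{X}{\sqrtt{p}} \leq \bigl(\sum_x \sqrt{p_x q_x}\bigr)^2 = \rF(p,q)$.

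For the lower bound, I would exhibit the explicit feasible solution $X^\star := \sqrtt{q} = \sqrt{q}\sqrt{q}^{\T}$. This is rank-one positive semidefinite and has $\diag(X^\star) = q$, so it is feasible. Evaluating the objective,
\[ \inner{X^\star}{\sqrtt{p}} \;=\; \bigl(\sqrt{p}^{\T} \sqrt{q}\bigr)^{2} \;=\; \Bigl(\sum_{x \in A} \sqrt{p_x q_x}\Bigr)^{2} \;=\; \rF(p,q), \]
which matches the upper bound and also certifies that the supremum is attained.

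I expect no real obstacle here: the entire argument reduces to the elementary observation that the columns of a Gram decomposition of $X$ have norms dictated by $\diag(X)$, combined with the triangle inequality, and to guessing the rank-one optimizer $\sqrtt{q}$. One could alternatively derive the same identity through SDP duality (the dual is $\min\{q^{\T} y : \Diag(y) \succeq \sqrtt{p}\}$, solved by $y = \sqrt{p_{\max}}^{-1}\cdots$-style scaling), but the direct primal argument above avoids any duality machinery and makes the optimizer transparent.
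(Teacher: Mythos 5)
Your argument is correct. Note that the paper itself does not prove this lemma: it cites \cite{NST14} and only records, in the remark immediately following, the \emph{dual} characterization $\rF(p,q) = \inf\{\inner{y}{q} : \Diag(y) \succeq \sqrtt{p}\}$ via the observation $\Diag(y) \succeq \sqrtt{p} \iff \sum_x p_x/y_x \leq 1$. Your route is the purely primal one: the Gram decomposition $X_{x,y} = u_x^\T u_y$ turns the diagonal constraint into a norm constraint $\norm{u_x}_2 = \sqrt{q_x}$, the objective into $\norm{\sum_x \sqrt{p_x}\, u_x}_2^2$, and the upper bound falls out of the triangle inequality; the rank-one feasible point $\sqrtt{q}$ then attains it. This sidesteps duality entirely and simultaneously certifies that the maximum is attained (so the $\max$ in the statement is justified, not just a $\sup$), which the dual route alone does not give without a separate attainment argument. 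Both perspectives are useful in the paper — the dual form is what feeds the point-game construction — but your primal proof is the shorter self-contained derivation of the lemma as stated.
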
 
Note that 
\[
\rF(p,q) 
\! = \! 
\inf_{y \in \R^A} \; \{ \inner{y}{q} : \Diag(y) \succeq \sqrtt{p} \}
\! = \!
\inf_{y > 0} \; \{ \inner{y}{q} : \inner{y^{-1}}{p} \leq 1 \}
\! = \! 
\inf_{y > 0} \; \{ \inner{y}{q} \inner{y^{-1}}{p} \}
\]
by using the observation
\[ \Diag(y) \succeq \sqrtt{p} \iff \id_A \succeq \Diag(y)^{-1/2} \sqrtt{p}   \Diag(y)^{-1/2} \iff  1 \geq \sum_{x \in A} \frac{p_x}{y_x}. \]
We use this characterization of the inequality $\Diag(y) \succeq \sqrtt{p}$ several times throughout this paper.

Notice that $\rF(p,q) = \inf_{y > 0} \; \{ \inner{y}{q} \inner{y^{-1}}{p} \}$ is the classical version of Alberti's Theorem~\cite{A83}, which states that $\rF (\rho, \sigma) = \inf_{X \succ 0} {\inner{X}{\rho} \inner{X^{-1}}{\sigma}}$ 
for quantum states $\rho$ and $\sigma$.

We can apply the same trick above to the inequality $\Diag(y) \otimes \id_A \succeq \kb{\psi}$, when $y > 0$ to get the equivalent condition $1 \geq \bra{\psi} \Diag(y)^{-1} \otimes \id_A \ket{\psi}$, 
which works for any $\ket{\psi} \in \C^{A \times A}$. In particular, we have the following lemma.

\begin{lemma} \label{FidLemma2}
For {every} $p \in \R_+^A$ and $\ket{\psi} := \sum_{x \in A} \sqrt{p_x} \, \ket{xx}$, we have 
\[ \{ y > 0 : \Diag(y) \succeq \sqrtt{p} \} = \{ y > 0 : \Diag(y) \otimes \id_A \succeq \kb{\psi} \}. \]
\end{lemma}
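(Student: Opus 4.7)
The plan is to reduce both sides of the set equality to the same scalar condition on $y$, namely $\sum_{x\in A} p_x/y_x \leq 1$, using the Schur-complement-style observation that is already hinted at in the paragraph preceding Lemma~\ref{FidelityLemma}.

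For the left-hand set, I would repeat the manipulation used right after Lemma~\ref{FidelityLemma}. Since $y>0$, the matrix $\Diag(y)$ is positive definite, so $\Diag(y)\succeq \sqrt{p}\sqrt{p}^{\T}$ is equivalent, after conjugating by $\Diag(y)^{-1/2}$, to
\[ \id_A \;\succeq\; \Diag(y)^{-1/2}\sqrt{p}\sqrt{p}^{\T}\Diag(y)^{-1/2}. \]
The right-hand side is rank one and positive semidefinite, so this operator inequality holds iff its unique nonzero eigenvalue (equivalently, its trace) is at most $1$. That trace equals $\sum_{x\in A} p_x/y_x$.

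For the right-hand set, I would run exactly the same argument one level up. Since $y>0$, the matrix $\Diag(y)\otimes \id_A$ is positive definite, so
\[ \Diag(y)\otimes \id_A \;\succeq\; \kb{\psi} \iff \id_{A\times A} \;\succeq\; \bigl(\Diag(y)^{-1/2}\otimes \id_A\bigr)\kb{\psi}\bigl(\Diag(y)^{-1/2}\otimes \id_A\bigr). \]
The right side is again rank one and positive semidefinite, so the inequality holds iff its trace is at most $1$. Expanding,
\[ \tr\!\left(\bigl(\Diag(y)^{-1}\otimes \id_A\bigr)\kb{\psi}\right) = \bra{\psi}\bigl(\Diag(y)^{-1}\otimes \id_A\bigr)\ket{\psi}, \]
and substituting $\ket{\psi}=\sum_{x\in A}\sqrt{p_x}\,\ket{xx}$ and using $\braket{xx}{x'x'}=\delta_{xx'}$ together with the diagonal action of $\Diag(y)^{-1}\otimes \id_A$ on $\ket{xx}$ gives
\[ \bra{\psi}\bigl(\Diag(y)^{-1}\otimes \id_A\bigr)\ket{\psi} = \sum_{x\in A} \frac{p_x}{y_x}. \]

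Thus both conditions reduce to $\sum_{x\in A} p_x/y_x \leq 1$, and the two sets coincide. There is no real obstacle here: the only thing to be slightly careful about is that both positive-definite pivots ($\Diag(y)$ on the left, $\Diag(y)\otimes \id_A$ on the right) are legitimate because $y>0$, which is already built into both sets.
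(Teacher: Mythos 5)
Your proof is correct and follows essentially the same route the paper takes: the paragraph immediately before the lemma reduces $\Diag(y)\otimes\id_A\succeq\kb{\psi}$ (when $y>0$) to $\bra{\psi}(\Diag(y)^{-1}\otimes\id_A)\ket{\psi}\le 1$, which together with the earlier reduction of $\Diag(y)\succeq\sqrtt{p}$ to $\sum_{x\in A}p_x/y_x\le 1$ is exactly your argument. You have simply spelled out the conjugation and rank-one trace step that the paper leaves implicit.
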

 
We also make use of the lemma below.

\begin{lemma}[\cite{NST14}] \label{mn}
For {every} $\beta_0, \beta_1 \in \prob^B$, we have 
\[ \sum_{y \in B} \max_{a \bit} \set{\beta_{a,y}} = 1 + \Delta(\beta_0, \beta_1). \] 
\end{lemma}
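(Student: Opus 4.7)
The plan is to apply the pointwise identity $\max\{a,b\} = \frac{1}{2}(a + b + |a-b|)$, which holds for any two real numbers, to each coordinate $y \in B$. Applying it with $a = \beta_{0,y}$ and $b = \beta_{1,y}$ gives
\[ \max_{a \in \{0,1\}} \{\beta_{a,y}\} = \tfrac{1}{2}\left(\beta_{0,y} + \beta_{1,y} + |\beta_{0,y} - \beta_{1,y}|\right). \]

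Next, I would sum this identity over $y \in B$ and split the right-hand side into three sums. Since $\beta_0$ and $\beta_1$ are probability distributions on $B$, the sums $\sum_y \beta_{0,y}$ and $\sum_y \beta_{1,y}$ each equal $1$. The remaining term $\sum_y |\beta_{0,y} - \beta_{1,y}|$ is precisely $\|\beta_0 - \beta_1\|_1 = 2\Delta(\beta_0, \beta_1)$ by definition of the trace distance on probability vectors. Combining these gives
\[ \sum_{y \in B} \max_{a \in \{0,1\}} \{\beta_{a,y}\} = \tfrac{1}{2}(1 + 1 + 2\Delta(\beta_0, \beta_1)) = 1 + \Delta(\beta_0, \beta_1), \]
which is the claim.

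There is no real obstacle here — the proof is a direct two-line computation using only the identity for the maximum of two reals and the definitions of probability distribution and trace distance. The only subtlety worth noting is that the lemma is specific to the classical (commuting) setting: the step $\max\{a,b\} = \frac{1}{2}(a+b+|a-b|)$ has no direct operator analogue, which is why an analogous identity for general positive semidefinite matrices $\beta_0, \beta_1$ would require more care.
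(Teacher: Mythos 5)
Your proof is correct and is the standard argument for this identity; the paper itself does not reprove the lemma (it cites it from \cite{NST14}), but the computation you give via $\max\{a,b\} = \tfrac{1}{2}(a+b+|a-b|)$ followed by summing and using $\sum_y \beta_{i,y}=1$ and $\Delta(\beta_0,\beta_1)=\tfrac{1}{2}\|\beta_0-\beta_1\|_1$ is exactly the expected two-line proof.
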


\section{A family of quantum coin-flipping protocols}
\label{family}

In this section we introduce the coin-flipping protocols examined in this paper. Intuitively, Alice ``commits'' to a bit $a$ (in superposition) by creating a state $\ket{\psi_a}$ and revealing its subsystems one at a time. Bob does the same, he ``commits'' to a bit $b$ by creating a state $\ket{\phi_b}$ and revealing its subsystems one at a time. Afterwards, they reveal their bits to each other and the outcome of the protocol is $a \oplus b$, if they both pass cheat detection. 

We now formally define the class of protocols considered in this paper.
 
\begin{protocol}[$\mathbf{\BCCF}$-protocol~\cite{NST14}]
A \emph{coin-flipping protocol based on bit-commitment\/}, denoted here as a $\BCCF$-protocol, is specified
by four finite sets
\[ 
A_0 := \{ 0, 1 \}, 
\quad 
A := A_1 \times A_2 \times \cdots \times A_{n},  
\quad 
B_0 := \{ 0, 1\}, 
\quad 
B := B_1 \times B_2 \times \cdots \times B_{n},  
\] 
two probability distributions $\alpha_0, \alpha_1$ over $A$, 
and two probability distributions $\beta_0, \beta_1$ over $B$. From these parameters, we define the  quantum states: 
\begin{eqnarray*} 
\ket{\psi} := \frac{1}{\sqrt 2} \sum_{a \bit} \ket{aa} \ket{\psi_a} \in \C^{A_0 \times A'_0 \times A \times A'} 
\quad & \text{where} & \quad 
\ket{\psi_a} := \sum_{x \in A} \sqrt{\alpha_{a,x}} \ket{xx} \in \C^{A \times A'}, \\ 
\ket{\phi} := \frac{1}{\sqrt 2} \sum_{b \bit} \ket{bb} \ket{\phi_b} \in \C^{B_0 \times B'_0 \times B \times B'} 
\quad & \text{where} & \quad 
\ket{\phi_b} := \sum_{y \in B} \sqrt{\beta_{b,y}} \ket{yy} \in \C^{B \times B'}, 
\end{eqnarray*} 
and $A'_0 := A_0$, $A' := A$, $B'_0 := B_0$, and $B' := B$ are copies.  

The preparation, communication, and cheat detection of the protocol proceed as follows: 
\begin{itemize}
\item Alice prepares the state 
$\ket{\psi}$ and Bob prepares the state $\ket{\phi}$. 
\item For $i$ from $1$ to $n$: Alice sends $\C^{A_i}$ to Bob who replies with $\C^{B_i}$.
\item Alice fully reveals her bit by sending $\C^{A'_0}$. She also sends $\C^{A'}$ which Bob uses later to check if she was honest. Bob then reveals his bit by sending $\C^{B'_0}$. He also sends $\C^{B'}$ which Alice uses later to check if he was honest.
\item Alice performs the measurement 
$(\Pi_{\rA,0}, \Pi_{\rA,1}, \Pi_{\rA, \abort})$ on the space $\Pos^{A_0 \times B'_0 \times B \times B'}$, where 
\[ \Pi_{\rA,0} := \sum_{b \bit} \kb{b} \otimes \kb{b}  \otimes \kb{\phi_b}, \quad
\Pi_{\rA,1} := \sum_{b \bit} \kb{{\bar b}} \otimes \kb{b} \otimes \kb{\phi_b}, \]
and $\Pi_{\rA, \abort} := \id - \Pi_{\rA,0} - \Pi_{\rA,1}$. 
\item Bob performs the measurement  
$(\Pi_{\rB,0}, \Pi_{\rB,1}, \Pi_{\rB, \abort})$ on the space 
$\Pos^{B_0 \times A'_0 \times A \times A'}$, where
\[ \Pi_{\rB,0} := \sum_{a \bit} \kb{a} \otimes \kb{a}  \otimes \kb{\psi_a}, \quad
\Pi_{\rB,1} := \sum_{a \bit} \kb{{\bar a}} \otimes \kb{a} \otimes \kb{\psi_a}, \]
and $\Pi_{\rB, \abort} := \id - \Pi_{\rB,0} - \Pi_{\rB,1}$. (These last two steps can be interchanged.)
\end{itemize}
\end{protocol} 

A six-round $\BCCF$-protocol is depicted in Figure~\ref{BCCFprotocol}. Note that the measurements check two things. First, it checks whether the outcome, $a \oplus b$, is $0$ or $1$. The first two terms determine this, i.e., whether $a = b$ or if $a \neq b$. Second, it checks whether the other party was honest. For example, if Alice's measurement projects onto a space where $b=0$ and Bob's messages are not equal to $\ket{\phi_0}$, then Alice has detected that Bob has cheated and aborts.
 
\begin{figure}[ht]
  \centering
   \includegraphics[width=6.25in]{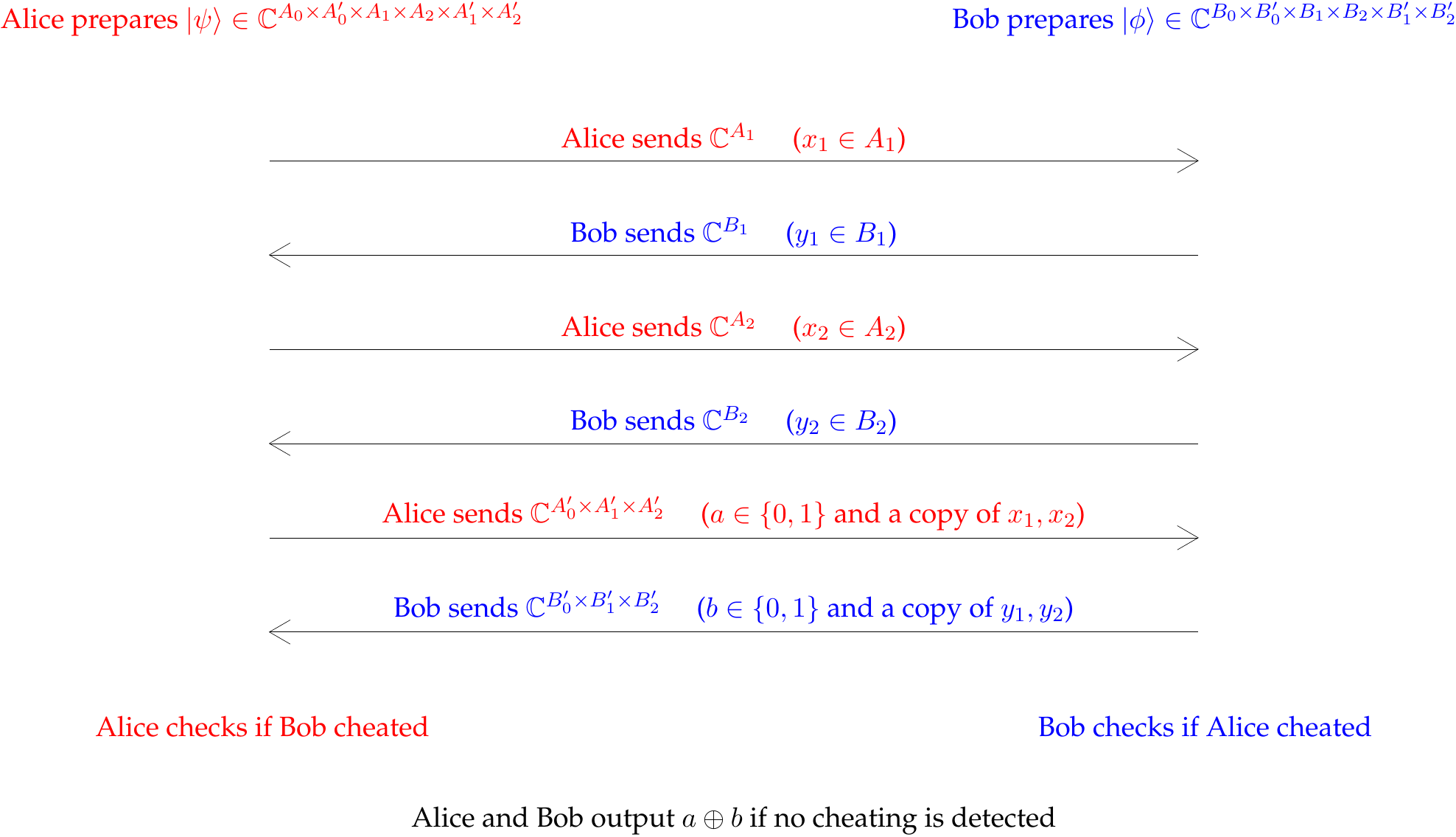} 
  \caption{A six-round $\BCCF$-protocol. Alice's actions are in red and Bob's actions are in blue.}
\label{BCCFprotocol}
\end{figure}

As is shown in Figure~\ref{BCCFprotocol}, we shall reserve the notation for indices: $a \in A_0$, $b \in B_0$, $x \in A$, $x_i \in A_i$, $y \in B$, and $y_i \in B_i$. We sometimes omit the sets when it is clear from context. 
 
\subsection{Formulating optimal quantum cheating strategies as semidefinite programs}
\label{ssect:form}

We can formulate strategies for cheating Bob and cheating Alice as semidefinite programs in the same manner as {Kitaev}, as discussed in Appendix~\ref{CFSDP}. The extent to which Bob can cheat is captured by the following lemma.

\begin{lemma}[\cite{NST14}]
Bob's optimal cheating probability for forcing honest Alice to accept the outcome $c \bit$ is given by the optimal objective value of the following semidefinite program:
\[ \begin{array}{rrrcllllllllllllll}
& P_{\B,c}^* \; = \; \sup                         & \langle \, \rho_F , \Pi_{\A,c} \, \rangle \\
                     & \textup{subject to} & \tr_{B_1}(\rho_1) & = & \tr_{A_1} \kb{\psi}, \\
                     &                                   & \tr_{B_j} (\rho_j) & = & \tr_{A_j} (\rho_{j-1}), & \forall j \in \{ 2, \ldots, n \}, \\
                     &                                   & \tr_{B' \times B'_0}(\rho_F) & = & \tr_{A' \times A'_0}(\rho_n), \\
                     &                                   & \rho_j & \in & \mathbb{S}_+^{A_0 \times A'_0 \times B_1 \times \cdots \times B_j \times A_{j+1} \times \cdots \times A_n \times A'}, & \forall j \in \{ 1, \ldots, n \}, \\
                     &                                   & \rho_F & \in & \mathbb{S}_+^{A_0 \times B'_0 \times B \times B'}. 
\end{array} \]
\end{lemma}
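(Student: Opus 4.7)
The plan is to adapt Kitaev's standard SDP formulation of cheating strategies to the concrete $\BCCF$-protocol. I would interpret each variable $\rho_j$ as the reduced state on all registers \emph{outside} Bob's private workspace, immediately after Bob's $j$-th message has been sent, and $\rho_F$ as the analogous reduced state on Alice's measurement registers after the final communication. Under this interpretation, the equality constraints simply encode the physical restriction that Bob may not act on any register Alice currently holds, while the objective records the probability that Alice's POVM element $\Pi_{\A,c}$ fires.

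For the upper-bound direction, I would argue that by Stinespring dilation any cheating strategy of Bob may be taken to consist of a pure initial state on a sufficiently large private register $Z$ together with $B_1$, followed by unitaries $U_2, \ldots, U_n, U_F$ acting on $Z$ and the appropriate message registers, interleaved with Alice's honest sends. Defining $\rho_j$ as the partial trace of the resulting global pure state over $Z$, the initial constraint $\tr_{B_1}(\rho_1) = \tr_{A_1}\kb{\psi}$ holds because Bob's preparation is independent of Alice's system; the recursive constraint $\tr_{B_j}(\rho_j) = \tr_{A_j}(\rho_{j-1})$ follows because $U_j$ is unitary on $(Z, A_j, B_j)$, so tracing out $(Z, B_j)$ after $U_j$ recovers the same reduced state as tracing out $(Z, A_j)$ beforehand. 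The final constraint is analogous, and the objective match $\inner{\rho_F}{\Pi_{\A,c}}$ is immediate.

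For the reverse direction, I would build an explicit Bob strategy from any feasible $(\rho_1, \ldots, \rho_n, \rho_F)$ via inductive use of Uhlmann's theorem. The base step uses $\ket{\psi}$ as a canonical purification of $\tr_{A_1}\kb{\psi}$ with purifying register $A_1$; since $\tr_{B_1}(\rho_1)$ equals this marginal, Uhlmann's theorem produces a purification of $\rho_1$ reachable from $\ket{\psi}\otimes\ket{0}$ by a unitary acting on $(A_1, B_1, Z_1)$ for an auxiliary register $Z_1$. Because Alice has already sent $A_1$ by this stage, this is a legitimate Bob operation. Iterating the construction across rounds, and applying it once more at the final step using $\tr_{B' \times B'_0}(\rho_F) = \tr_{A' \times A'_0}(\rho_n)$, yields unitaries $U_j, U_F$ forming a complete cheating strategy that realises Alice's outcome-$c$ probability $\inner{\rho_F}{\Pi_{\A,c}}$ exactly.

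The main obstacle is the careful bookkeeping of which registers are in whose possession at each stage, so as to certify that every Uhlmann unitary acts only on Bob-accessible registers and that the purifications chosen at successive rounds are compatible with one another. Once this is set up properly, the remaining verifications are mechanical, and the argument closely parallels Kitaev's original treatment of coin-flipping cheating strategies as well as the proof given in~\cite{NST14}.
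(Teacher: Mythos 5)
Your proposal follows the standard Kitaev argument for cheating SDPs, which is precisely the approach taken in~\cite{NST14} (which this lemma is cited from) and sketched in the paper's Appendix on Kitaev's formalism: the forward direction purifies Bob's strategy via Stinespring and records the reduced states on the registers outside Bob's control, and the reverse direction reconstructs Bob's unitaries by iterated application of Uhlmann's theorem, using each equality constraint as the certificate that the required marginals agree. The one point worth flagging explicitly in a full write-up is that in each inductive step you need to choose the new Uhlmann purification so that its restriction to the registers already ``used up'' in earlier rounds is consistent with the purification chosen in the previous step — otherwise the composed unitaries need not act on the state actually produced; this is exactly the ``careful bookkeeping'' you mention, and it is handled the same way in the reference. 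Aside from that, your argument is correct and matches the paper's route.
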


The actions of a cheating Bob and the variables in the SDP are depicted in Figure~\ref{BCCFprotocol6B}. 
  
\begin{figure}[ht]
  \centering
   \includegraphics[width=6.25in]{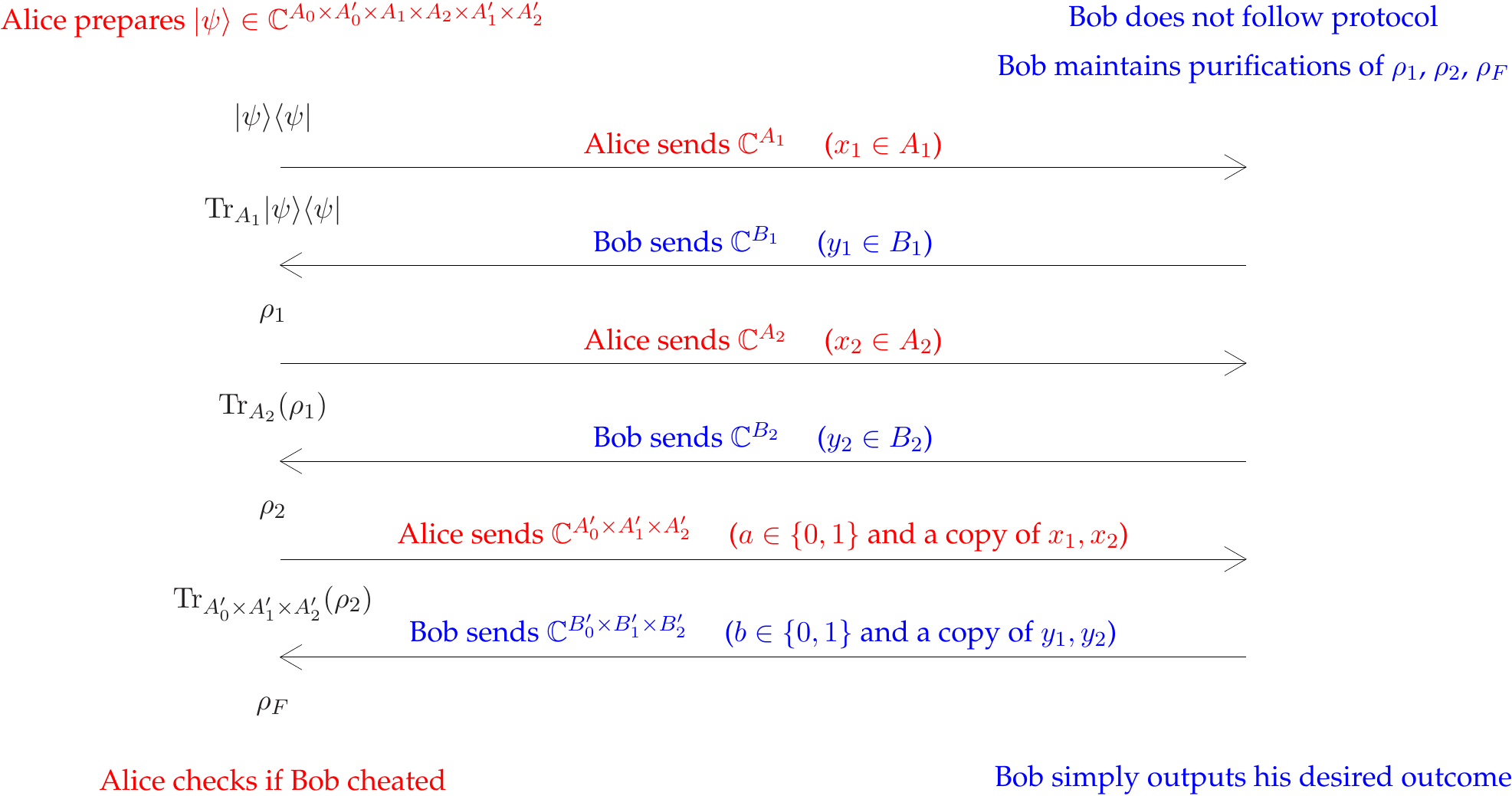} 
  \caption{Bob cheating in a six-round $\BCCF$-protocol.}
\label{BCCFprotocol6B}
\end{figure}
 
We now present a theorem showing that the cheating SDPs can have a certain, restricted form while retaining the same optimal objective value. {At high level, we cut down the algebraic representation of the feasible region. 
Surprisingly, we are able to reformulate the feasible region by a polytope defined below.}
   
\begin{definition} \label{BobsPolytope}
We define \emph{Bob's cheating polytope}, denoted $\calP_{\B}$, as the set of vectors 
$(p_{1}, p_{2}, \ldots, p_{n})$ satisfying
\[ \begin{array}{rrrcllllllllllllll}
                     &  & \tr_{B_1}(p_1) & = & e_{A_{1}}, \\
                     &  & \tr_{B_2} (p_2) & = & p_{1} \otimes e_{A_{2}}, \\
                     & & & \vdots \\
                     &  & \tr_{B_n} (p_n) & = & p_{n-1} \otimes e_{A_{n}}, \\
                     & & p_j & \in & \R_+^{A_{1} \times B_{1} \times \cdots \times A_{j} \times B_{j}}, \; \forAll j \in \{ 1, \ldots, n \},
\end{array} \]
where $e_{A_j}$ denotes the vector of all ones in the corresponding space $\C^{A_j}$.
\end{definition}

We now use Bob's cheating polytope to capture his optimal cheating probabilities. 
 
\begin{theorem}[Bob's reduced problems~\cite{NST14}]
\label{thm:reducedBob}
For the $\BCCF$-protocol defined by the parameters $\alpha_0, \alpha_1 \in \prob^A$ and $\beta_0, \beta_1 \in \prob^B$, we have
\[ P_{\B,0}^* = \max \left\{ \half \sum_{a \bit} \, \rF \left( (\alpha_a \otimes \id_B)^{\T} p_n, \, \beta_a \right) : (p_{1}, \ldots, p_{n}) \in \calP_{\B} \right\} \]
and
\[ P_{\B,1}^* = \max \left\{ \half \sum_{a \bit} \, \rF \left( (\alpha_a \otimes \id_B)^{\T} p_n, \, \beta_{\bar{a}} \right) : (p_{1}, \ldots, p_{n}) \in \calP_{\B} \right\}. \]
\end{theorem}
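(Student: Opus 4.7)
The plan is to prove the theorem by reducing the matrix-valued SDP for $P_{\B,c}^*$ to an optimization over classical probability vectors, namely the polytope $\calP_\B$, with an objective that simplifies into the claimed sum of fidelity functions. The proof exploits the highly symmetric, diagonal structure of $\ket{\psi}$ on the $A \times A'$ and $A_0 \times A_0'$ subsystems.

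First, I would apply a dephasing/symmetry argument to reduce the SDP variables to essentially classical form. Since $\kb{\psi}$ is already diagonal in the computational basis on the $A'$ and $A_0'$ registers (the off-diagonal terms on $A_0$-$A_0'$ and on $A$-$A'$ are perfectly correlated with the standard basis labels), and since the acceptance projector $\Pi_{\A,c}$ does not act on $A'$ or $A_0'$, the channels that measure the $A_0'$ register and each $A_i'$ register in the standard basis commute with every constraint of the SDP and preserve the objective value. Hence, without loss of optimality, each $\rho_j$ may be taken to be block-diagonal with respect to the computational basis of $A_0' \times A'$, turning these registers into classical labels. An analogous dephasing on the $B'$ and $B_0'$ registers, justified by the fact that each $\ket{\phi_b}$ is maximally classically correlated between $B$ and $B'$ and that Alice's measurements are rank-one, lets us likewise treat $B'$ and $B_0'$ as carrying classical outcomes.

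Next, I would match indices carefully to identify the reduced variables with the polytope $\calP_\B$. After dephasing, $\rho_j$ collapses to a nonnegative vector $p_j$ indexed by $A_1 \times B_1 \times \cdots \times A_j \times B_j$, obtained by pulling the $\frac{1}{2}$ prefactor from $\ket{\psi}$ and the $\alpha_{a,x}$ weights out of the matrix variable and into the objective. The initial condition $\tr_{B_1}(\rho_1) = \tr_{A_1}\kb{\psi}$ then becomes $\tr_{B_1}(p_1) = e_{A_1}$ (after the $\alpha$- and $\frac{1}{2}$-factors have been factored out and summed over $a \bit$), and the chain $\tr_{B_j}(\rho_j) = \tr_{A_j}(\rho_{j-1})$ becomes exactly $\tr_{B_j}(p_j) = p_{j-1} \otimes e_{A_j}$, establishing the bijection between the reduced feasible solutions and $\calP_\B$.

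The main obstacle is extracting the fidelity expression from the objective $\langle \rho_F, \Pi_{\A,c}\rangle$ after the reduction. I focus on $c = 1$; the case $c = 0$ is analogous. Since $\Pi_{\A,1} = \sum_b \kb{\bar b}_{A_0}\otimes\kb{b}_{B_0'}\otimes\kb{\phi_b}_{BB'}$, only the $(a, \bar a)$ block of $\rho_F$ on $A_0 \times B_0'$ contributes for each $a \bit$, giving a term $\bra{\phi_{\bar a}}\rho_F^{(a,\bar a)}\ket{\phi_{\bar a}}$. Maximizing this over $\rho_F^{(a,\bar a)} \succeq 0$ subject to the partial-trace constraint linking it to $\rho_n$ is itself a small SDP that can be handled by invoking the variational characterization of fidelity from Lemma~\ref{FidelityLemma}: the constraint forces the diagonal of $\rho_F^{(a,\bar a)}$ in the $\{\ket{yy}\}$ basis to equal the marginal distribution on $B$ that $\rho_n$ induces when Alice's bit is $a$, which is precisely the probability vector $(\alpha_a \otimes \id_B)^{\T} p_n$. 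The maximum then equals $\frac{1}{2}\rF((\alpha_a \otimes \id_B)^{\T} p_n, \beta_{\bar a})$, and summing over $a \bit$ produces the claimed formula for $P_{\B,1}^*$; the expression for $P_{\B,0}^*$ follows by replacing $\bar a$ with $a$ throughout.
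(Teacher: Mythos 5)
Your proposal has a genuine gap in the ``dephasing'' step, which is where all the difficulty lives.

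You assert that measuring the $A_0'$ register and each $A_i'$ register in the computational basis commutes with every constraint of Bob's cheating SDP. This is false for the initial constraint $\tr_{B_1}(\rho_1) = \tr_{A_1}\kb{\psi}$. The state $\ket{\psi} = \frac{1}{\sqrt 2}\sum_a\ket{aa}\ket{\psi_a}$ has genuine quantum coherences between $a$ and $a'$ on $A_0\times A_0'$ (and between $x$ and $x'$ on $A\times A'$ once $A_1$ is traced out); the fact that the off-diagonal terms are of the form $\ketbra{aa}{a'a'}$ means they are \emph{synchronized}, not absent. Dephasing $A_0'\times A'$ therefore changes $\tr_{A_1}\kb{\psi}$ (it kills all $a\neq a'$ and $x\neq x'$ coherences), so if $\rho_1$ is feasible, the dephased $\Phi(\rho_1)$ satisfies $\tr_{B_1}(\Phi(\rho_1)) = \Phi(\tr_{A_1}\kb{\psi}) \neq \tr_{A_1}\kb{\psi}$ and is no longer feasible. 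The same objection applies to your dephasing of $B'$: $\kb{\phi_b}$ is not diagonal on $B'$, so dephasing $B'$ changes $\Pi_{\A,c}$ and hence the objective. Moreover, even if a corrected version held, the reduced $\rho_j$ cannot be collapsed to a probability \emph{vector}: the paper's own restriction $\bar\rho_j = \sum_{x_1,\ldots,x_j}\kb{x_1,\ldots,x_j}\otimes\kb{\psi_{x_1,\ldots,x_j}}\otimes\Diag(p_j)$ retains a \emph{pure quantum state} $\kb{\psi_{x_1,\ldots,x_j}}$ on the $A_0\times A_0'\times A_{j+1}\cdots A_n\times A'_{j+1}\cdots A'_n$ registers, which is precisely what your claimed block-diagonality on $A_0'\times A'$ would rule out.

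More structurally, what you describe proves only the easy inequality $P_{\B,c}^* \geq \max\{\ldots\}$ (restricting the primal can only decrease a maximum); the hard direction is showing the restriction loses nothing. The paper does not obtain this by a primal symmetrization; it sandwiches: it restricts the \emph{dual} variables $W_j$ to a compatible block form, repeatedly applies the Subspace Lemma (Lemma~\ref{SubspaceLemma}) to replace each PSD constraint $W_j\otimes\id_{B_j}\succeq W_{j+1}\otimes\id_{A_{j+1}}$ by a scalar inequality on the $\kb{\psi_{x_1,\ldots,x_j}}$ subspace, and then uses strong duality to identify the restricted dual value $d_2^*$ with the restricted primal value, giving $P_{\B,c}^*\leq d_2^*\leq\text{(reduced value)}\leq P_{\B,c}^*$. (An alternative route, mentioned in the main text but not used in the appendix proof, is to exploit that $\{\lambda\,xx^*:\lambda>0\}$ is an extreme ray of the PSD cone to argue the primal blocks must be rank-one pure states.) Your final extraction of $\rF((\alpha_a\otimes\id_B)^\T p_n,\beta_{\bar a})$ from $\langle\rho_F,\Pi_{\A,c}\rangle$ via Uhlmann/Lemma~\ref{FidelityLemma} is essentially right, but it presupposes that $\tr_{A'\times A_0'}(\rho_n)$ is a diagonal matrix indexed by $(\alpha_a\otimes\id_B)^\T p_n$, which is exactly the structural fact the flawed dephasing step was supposed to provide.
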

      
We refer to these as \emph{Bob's reduced problems}. Note that we sometimes refer to them as \emph{Bob's reduced} SDPs,  implying we have replaced the fidelity with its SDP characterization from Lemma~\ref{FidelityLemma}. 

The above theorem can also be proved using the fact that the set $\set{ \lambda \, xx^*: \lambda > 0}$ is an extreme ray of the cone of positive semidefinite matrices. That is, {$\lambda \, x x^* \in \pos^A$ for every $\lambda > 0$ and,} if $X_1, X_2 \in \pos^A$ satisfy $X_1 + X_2 = \lambda \, xx^*$ for some $\lambda > 0$, then $X_1 = \lambda_1 \, xx^*$ and $X_2 = \lambda_2 \, xx^*$ for some $\lambda_1, \lambda_2 \geq 0$ satisfying $\lambda_1 + \lambda_2 = \lambda$. This proof relies on a reduction of the primal problem alone and can be found in~\cite{NST14}. In Appendix~\ref{app:reduced_app}, we give an alternative proof via duality theory since some of the structure of optimal dual solutions are required for the construction of the point games in Section~\ref{BCCFPG}. In that appendix we also give context to the variables in the cheating polytope by deriving the corresponding cheating strategy.

In a similar fashion, we formulate cheating strategies for Alice in the lemma below.

\begin{lemma}[\cite{NST14}]
Alice's optimal cheating probability for forcing honest Bob to accept the outcome $c \bit$ is given by the optimal objective value of the following semidefinite program:
\[ \begin{array}{rrrcllllllllllllll}
& P_{\A,c}^* \; = \; \sup
& \inner{\sigma_F}{\Pi_{\B,c} \otimes \id_{B'_0 \times B'}} \\
                     & \textup{subject to} & \tr_{A_1}(\sigma_1) & = & \kb{\phi}, \\
                     &                               & \tr_{A_j}(\sigma_j) & = & \tr_{B_{j-1}}(\sigma_{j-1}), \!\! & \!\! \forall j \in \{ 2, \ldots, n \}, \\
                     &                               & \tr_{A' \times A'_0}(\sigma_F) & = & \tr_{B_n}(\sigma_n), \\
                     &                               & \sigma_{j} & \in & \mathbb{S}_{+}^{B_{0} \times B'_{0} \times A_1 \times \cdots \times A_j \times B_j \times \cdots \times B_n \times B'}, \!\! &  \!\!                                  \forall j \in \{ 1, \ldots, n \}, \\
                     & & \sigma_{F} & \in & \mathbb{S}_{+}^{B_{0} \times B'_{0} \times A'_{0} \times A \times A' \times B'}.
\end{array} \]
\end{lemma}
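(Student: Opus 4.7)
The plan is to adapt the SDP formulation used for Bob's cheating probability to the mirror situation in which Bob is honest and Alice deviates. By Stinespring dilation, without loss of generality cheating Alice's strategy consists of a sequence of isometries on her private workspace, augmented with the register most recently received from Bob and producing the outgoing register to Bob. Honest Bob simply prepares $\ket{\phi}$ and transmits the registers $B_1, \ldots, B_n, B'_0, B'$ at the prescribed times, so his contribution is deterministic and fixed.

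First I would define $\sigma_j$ to be the reduced state, with Alice's private workspace traced out, on the registers $B_0 \times B'_0 \times A_1 \times \cdots \times A_j \times B_j \times \cdots \times B_n \times B'$ immediately after Alice transmits her $j$-th message $A_j$. Each SDP constraint then follows from a local consistency check. Before any exchange, Bob's marginal is $\kb{\phi}$ on $B_0 \times B'_0 \times B \times B'$, and Alice's generation of $A_1$ cannot act on Bob's registers, giving $\tr_{A_1}(\sigma_1) = \kb{\phi}$. For $j \geq 2$, Alice's $j$-th isometry acts on her workspace together with the register $B_{j-1}$ just received, producing $A_j$; tracing out $A_j$ from $\sigma_j$ therefore yields the Bob-accessible state immediately before this isometry, which is exactly $\tr_{B_{j-1}}(\sigma_{j-1})$. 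Applying the analogous argument to Alice's final isometry, which ingests $B_n$ and outputs her reveal registers $A'_0$ and $A'$, produces the constraint $\tr_{A' \times A'_0}(\sigma_F) = \tr_{B_n}(\sigma_n)$. Since Bob's measurement acts only on $B_0 \times A'_0 \times A \times A'$, the probability of outcome $c$ is $\inner{\sigma_F}{\Pi_{\B,c} \otimes \id_{B'_0 \times B'}}$, matching the stated objective.

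The \emph{completeness} direction, that every cheating strategy yields a feasible tuple, is immediate from the construction above. The main work lies in \emph{soundness}: given an arbitrary feasible $(\sigma_1, \ldots, \sigma_n, \sigma_F)$, one must exhibit an explicit Alice strategy achieving it. I would proceed by induction on $j$, constructing Alice's $j$-th isometry via Uhlmann's theorem: the constraint $\tr_{A_j}(\sigma_j) = \tr_{B_{j-1}}(\sigma_{j-1})$ forces two purifications on Bob's registers to coincide, so there exists an isometry on Alice's side that converts one purification into the other, and this isometry becomes Alice's $j$-th operation. The last round is handled identically using $\tr_{A' \times A'_0}(\sigma_F) = \tr_{B_n}(\sigma_n)$. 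The principal obstacle is bookkeeping: ensuring that the purifying registers are chosen consistently across rounds so that the resulting Uhlmann isometries compose into a single well-defined sequential strategy on one enlarged Alice workspace. This is notationally heavy but otherwise routine, and is the only nontrivial piece; otherwise the argument is fully parallel to the one for Bob's reduced problem.
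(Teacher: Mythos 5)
Your proof is correct and follows the standard Kitaev-style derivation that the paper invokes via the citation to \cite{NST14} and lays out generically in Appendix~\ref{CFSDP}: define $\sigma_j$ as the marginal on the registers accessible to the honest party, derive the consistency constraints from the fact that Alice's isometries act only on her own side, and apply Uhlmann's theorem in the reverse direction to reconstruct an explicit cheating strategy from any feasible point. You correctly identify the only substantive step, namely keeping the Uhlmann isometries composable, which is handled exactly as you suggest by defining the global purification at step $j$ inductively as the image of the step-$(j-1)$ purification under the previously constructed isometry.
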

 
Similar to cheating Bob, we can reduce the feasible region to a polytope, defined below. 
   
\begin{definition} \label{AlicePolytope}
We define \emph{Alice's cheating polytope}, denoted $\calP_{\A}$, as the set of vectors 
$(s_{1}, s_{2}, \ldots, s_{n}, s)$ satisfying
\[ \begin{array}{rrrcllllllllllllll}
                     &                               & \tr_{A_1}(s_1) & = & 1, \\
                     &                               & \tr_{A_2}(s_2) & = & s_1 \otimes e_{B_{1}}, \\
                     & & & \vdots \\
                     &                               & \tr_{A_n}(s_n) & = & s_{n-1} \otimes e_{B_{n-1}}, \\
                     &                               & \tr_{A'_{0}}(s) & = & s_n \otimes e_{B_{n}}, \\
                     &                               & s_{1} & \in & \R_{+}^{A_{1}}, \\
                     &                               & s_{j} & \in & \R_{+}^{A_{1} \times B_{1} \times \cdots \times B_{j-1} \times A_{j}}, \; \forAll j \in \{ 2, \ldots, n \}, \\
                     & & s & \in & \R_{+}^{A \times B \times A'_{0}},
\end{array} \]
where $e_{B_j}$ is the vector of all ones in the corresponding space $\C^{B_j}$.
\end{definition}

We can use this polytope to capture Alice's optimal cheating probabilities. 
 
\begin{theorem}[Alice's reduced problems~\cite{NST14}]
\label{thm:reducedAlice}
For the $\BCCF$-protocol defined by the parameters $\alpha_0, \alpha_1 \in \prob^A$ and $\beta_0, \beta_1 \in \prob^B$, we have
\[ P_{\A,0}^* = \max \left\{ \half \sum_{a \bit} \sum_{y \in B} \beta_{a,y} \;  \rF(s^{(a,y)}, \alpha_{a}) \; : \;
(s_{1}, \ldots, s_{n}, s) \in \calP_{\A} \right\} \]
and
\[ P_{\A,1}^* = \max \left\{ \half \sum_{a \bit} \sum_{y \in B} \beta_{\bar{a},y} \;  \rF(s^{(a,y)}, \alpha_{a}) \; : \;
(s_{1}, \ldots, s_{n}, s) \in \calP_{\A} \right\}, \] 
where $s^{(a,y)}$ is the projection of $s$ onto the fixed indices $a$ and $y$. 
\end{theorem}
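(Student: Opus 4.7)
The plan is to mirror the duality-based proof of Theorem~\ref{thm:reducedBob} (cf.\ Appendix~\ref{app:reduced_app}) and adapt it to Alice's cheating SDP. I would first write the Lagrangian dual of Alice's program and verify strong duality: the primal is trivially bounded by $P_{\A,c}^* \le 1$, and a strictly feasible solution can be produced by tensoring $\kb{\phi}$ with suitably normalized identity blocks in the $A_1 \times \cdots \times A_n \times A_0' \times A'$ registers. Weak duality then yields an upper bound of the form stated in the theorem, and strong duality upgrades it to equality.

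The key reduction exploits the purity of Bob's honest initial state. Because $\kb{\phi}$ on $B_0 \times B_0' \times B \times B'$ is rank one, the constraint $\tr_{A_1}(\sigma_1) = \kb{\phi}$ forces $\sigma_1 = \kb{\phi} \otimes \tau_1$ for some density operator $\tau_1$ on $A_1$, and a dephasing-symmetrization argument on $A_1$ lets us restrict to $\tau_1 = \Diag(s_1)$ with $s_1 \in \prob^{A_1}$, since every remaining register is either measured in a fixed basis or traced against identity. I would then propagate by induction on $j$: conditioning on the classical values of the $A_i, B_i$ registers revealed so far, the residual state on the remaining $B$-registers is again proportional to a pure product factor of $\kb{\phi_b}$, so the purification argument iterates and each new $A_j$ is absorbed as a classical weight $s_j \in \R_+^{A_1 \times B_1 \times \cdots \times B_{j-1} \times A_j}$. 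The consistency constraints $\tr_{A_j}(\sigma_j) = \tr_{B_{j-1}}(\sigma_{j-1})$ become exactly the linear equalities defining $\calP_{\A}$, and the analogous step at the final round produces the vector $s \in \R_+^{A \times B \times A_0'}$ together with a residual positive semidefinite coherence block on the $A'$ register.

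For the objective, the projector $\Pi_{\B,0} = \sum_a \kb{a}_{B_0} \otimes \kb{a}_{A_0'} \otimes \kb{\psi_a}_{A \times A'}$ contracted with $\sigma_F$ pulls out, for each $a \bit$ and $y \in B$, a factor $\tfrac{1}{2}\beta_{a,y}$ (from the $B_0$-projection onto $\ket{a}$ composed with the overlap of $\kb{\phi_a}$ on the remaining $B$-registers) times an expression of the form $\inner{X^{(a,y)}}{\sqrtt{\alpha_a}}$ with $X^{(a,y)} \succeq 0$ and $\diag(X^{(a,y)}) = s^{(a,y)}$; the supremum over admissible $X^{(a,y)}$ is $\rF(s^{(a,y)}, \alpha_a)$ by Lemma~\ref{FidelityLemma}, yielding the reduced formula for $P_{\A,0}^*$. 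The $P_{\A,1}^*$ case is identical except that $\Pi_{\B,1}$ projects $B_0$ onto $\ket{\bar a}$, which replaces $\beta_a$ by $\beta_{\bar a}$. The main obstacle will be the inductive propagation step: unlike Bob's case, where the anchoring constraint $\tr_{A_1}\kb{\psi}$ is a classical mixture, Alice contends with coherent superpositions spanning the entire $B \times B'$ system, so careful bookkeeping — or, more cleanly, arguing on the dual side that the dual optimum attains a block form compatible with the $\kb{\phi_b} \otimes \Diag(s_j)$ decomposition before translating back to the primal — is needed to recover exactly $\calP_{\A}$.
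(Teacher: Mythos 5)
Your lower-bound direction (exhibiting restricted primal solutions) is sound, but the primal-side upper-bound sketch has concrete gaps. (i) The purity argument does not iterate: $\tr_{A_1}(\sigma_1) = \kb{\phi}$ rank-one does give $\sigma_1 = \kb{\phi} \otimes \tau_1$, but the next anchor $\tr_{B_1}(\kb{\phi})$ is a rank-$|B_1|$ mixture of mutually orthogonal states $\kb{\phi_{y_1}}$ (distinguished by the $B'_1$ register), so the claim that ``the residual state on the remaining $B$-registers is again proportional to a pure product factor of $\kb{\phi_b}$'' is simply false and the purity step works only once. You would need to switch to a block-diagonality argument keyed to $B'$, which is essentially the extreme-ray primal reduction the paper attributes to~\cite{NST14}, not a purity argument. (ii) The dephasing step $\tau_1 = \Diag(s_1)$ is not a free basis-measurement reduction: $A_1$ is ultimately measured by honest Bob inside the coherent projector $\kb{\psi_a}$ on $A \times A'$, not in the computational basis on $A_1$ alone, so ``every remaining register is measured in a fixed basis or traced against identity'' is not accurate. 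The diagonality of the intermediate variables $s_j$ coexists with a genuinely non-diagonal $\sigma_F$; that coherence is exactly what yields $\rF(s^{(a,y)}, \alpha_a)$ via Lemma~\ref{FidelityLemma}. Restricting to $\Diag(s_j)$ is the content of the theorem, not a preliminary symmetrization. (iii) Alice's primal SDP has no Slater point: any $\sigma_1 \succ 0$ would give $\tr_{A_1}(\sigma_1) \succ 0$, contradicting the rank-one constraint. Strong duality must be invoked from strict feasibility of the dual, not of the primal as you assert.

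You do correctly flag the induction as the sticking point and suggest the dual-side route, which is the paper's actual proof: exhibit explicit primal solutions $\bar{\sigma}_j = \sum \kb{y_1,\ldots,y_{j-1}} \otimes \kb{\phi_{y_1,\ldots,y_{j-1}}} \otimes \Diag(s_j)$ for the $\geq$ direction; then for $\leq$, restrict the dual variables $Z_j$ to a block form over $\ket{x_1,y_1,\ldots,x_{j-1},y_{j-1}}$, collapse each PSD constraint to a scalar inequality via the Subspace lemma (Lemma~\ref{SubspaceLemma}) paired against $\ket{\phi_{y_1,\ldots,y_{j-1}}}$, and take the dual of the resulting reduced problem, which by Lemma~\ref{FidelityLemma} is the claimed maximization over $\calP_{\A}$. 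That fallback is the right move; it just needs to be carried out rather than deferred.
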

 
We refer to these as \emph{Alice's reduced problems} or \emph{Alice's reduced} SDPs when using the SDP characterization of the fidelity function from Lemma~\ref{FidelityLemma}. {Context of the variables in Alice's  cheating polytope and a proof of the above theorem are in Appendix~\ref{app:reduced_app}. 
 
\begin{remark}
We can see from Theorems~\ref{thm:reducedBob} and~\ref{thm:reducedAlice} that switching $\beta_0$ and $\beta_1$ switches the values of $P_{\B, 0}^*$ and $P_{\B, 1}^*$ and it also switches the values of $P_{\A, 0}^*$ and $P_{\A, 1}^*$. We make use of this symmetry several times in this paper.
\end{remark} 

{As an example, and for future reference, we write the dual of Bob's reduced cheating SDP for forcing outcome $1$ and the dual for Alice's reduced cheating SDP for forcing outcome $0$, respectively, below}  
\[ \begin{array}{rrcclrrrcll}
\textrm{} & \inf                         & \tr_{A_1}(w_1) & & 
& \quad  
& \inf                         & z_1 \\

                     & \textup{s.t.} & w_1 \otimes e_{B_1} & \geq & \tr_{A_2}(w_2), 
& \quad 
& \textup{s.t.} & z_1 \cdot e_{A_1} & \geq & \tr_{B_1}(z_2), \\

                     & & w_2 \otimes e_{B_2} & \geq & \tr_{A_3}(w_3), 
& \quad 
&                               & z_2 \otimes e_{A_2} & \geq & \tr_{B_2}(z_3), \\

                     & & & \vdots 
& \quad 
& & & & \vdots \\

                     &                               & w_{n} \otimes e_{B_{n}} & \geq & \half \sum_{a \in \zo} \alpha_a \otimes v_a,
& \quad 
&                               & z_n \otimes e_{A_n} & \geq & \tr_{B_n}(z_{n+1}), \\

                     & & \Diag(v_a) & \succeq & \sqrtt{\beta_{\bar{a}}}, \;\; \forall a, 
& \quad 
& & \Diag(z_{n+1}^{(y)}) & \succeq & \half \beta_{a,y} \sqrtt{\alpha_a}, \;\; \forall a, y.  
\end{array} \] 

{The structure of the reduced problems was an observation after numerically solving some cheating SDP examples.} We note that there are some similarities between the reduced problems above and the optimal solutions of the cheating SDPs for the weak coin-flipping protocols in~\cite{Moc05}. The protocols Mochon considers in~\cite{Moc05} also give rise to ``reduced problems'' being the maximization of fidelity functions over a polytope. However, the analysis is much cleaner in Mochon's work since the objective function only involves a single fidelity function as opposed to the linear combination of fidelity functions that arise {for} $\BCCF$-protocols. This difference is due to the fact that weak coin-flipping protocols often allow a stronger cheat detection step than those for strong coin-flipping. Having a single fidelity function allowed Mochon to construct an optimal solution using a  dynamic programming approach. The structure of the objective functions in the reduced problems above for $\BCCF$-protocols {has not so far revealed an obvious way} to solve it using dynamic programming, making this family of protocols harder to analyze.
   
\section{Point games for $\BCCF$-protocols} \label{BCCFPG}
 
In this section, we develop the point games corresponding to $\BCCF$-protocols. Although this section is self-contained, {an} interested reader may wish to see our Appendix~\ref{CFSDP} for a review of point games or consult the work of~\cite{Moc07, {AharonovCGKM14}}. 
 
{To summarize the idea behind point games, we take a feasible dual solution for Bob cheating towards $1$ and the same for Alice cheating towards $0$ and consider the behaviour of their eigenvalues. When pairing certain eigenvalues from Bob with those from Alice, we obtain a collection of finitely-many weighted points in the two-dimensional nonnegative orthant. The points have a time-ordering to them and the transitions from one time step to the next are called ``{moves}'' or simply ``{transitions}'' and the rules for these transitions can be described independently from the protocol description. In this section, we examine the set of allowable moves for point games {derived} from $\BCCF$-protocols in the manner described above, and use them to find a protocol independent definition.} 
 
We start by examining Kitaev's lower bound involving the quantities $P_{\B,1}^*$ and $P_{\A,0}^*$. Since we are concerned with strong coin-flipping, the choice of Bob desiring outcome $1$ and Alice desiring outcome $0$ for this part is somewhat arbitrary. However, this way we can compare them to point games for other classes of weak coin-flipping protocols (see~\cite{Moc07}). We later show that we lose no generality in choosing these two values, as we consider all four values simultaneously by viewing the point games in pairs.
 
{The dual for Bob's cheating SDP for forcing outcome $1$ is given by 
\[ \begin{array}{rrrcllllllllllllll}
\textrm{} &  P_{\B, 1}^* \; = \;  \inf                             & \inner{W_1}{\tr_{A_1} \kb{\psi}} \\
                     & \textup{subject to} & W_j \otimes \id_{B_j} & \succeq & W_{j+1} \otimes \id_{A_{j+1}}, \quad \forAll \, j \in \set{1, \ldots, n-1}, \\
                     &                               & W_n \otimes \id_{B_n} & \succeq & W_{n+1} \otimes \id_{A'} \otimes \id_{A'_0}, \\
                     &                               & W_{n+1} \otimes \id_{B'} \otimes \id_{B'_0} & \succeq & \Pi_{\A,1}, \\
                     &                               & W_j & \in & \mathbb{S}^{A_0 \times A'_0 \times B_1 \times \cdots \times B_{j-1} \times A_{j+1} \times \cdots \times A_n \times A'}, \\
                     & & & & \forAll j \in \{ 1, \ldots, n \}, \\
                     &                               & W_{n+1} & \in & \mathbb{S}^{A_0 \times B}, \\
\end{array} \]
and the dual for Alice's cheating SDP for forcing outcome $0$ is given by
\[ \begin{array}{rrrcllllllllllllll}
\textrm{} & P_{\A, 0}^* \; = \; \inf                                & \inner{Z_1}{\kb{\phi}}  \\
                     & \textup{subject to} & Z_j \otimes \id_{A_j} & \succeq & Z_{j+1} \otimes \id_{B_j}, \quad \forAll j \in \set{1, \ldots, n}, \\
                     &                               & Z_{n+1} \otimes \id_{A'} \otimes \id_{A'_0} & \succeq & \Pi_{\B,0} \otimes \id_{B'_0} \otimes \id_{B'}, \\
                     & & Z_j & \in & \mathbb{S}^{B_0 \times B'_0 \times A_1 \times \cdots \times A_{j-1} \times B_j \times \cdots \times B_n \times B'}, \\
                     & & & & \forAll j \in \{ 1, \ldots, n, n+1 \}. 
\end{array} \] 
\noindent From SDP strong duality, we know that for {every} $\delta > 0$, we can choose $(W_1, \ldots, W_{n+1})$ feasible {for} the dual of Bob's cheating SDP and $(Z_1, \ldots, Z_{n+1})$ feasible {for} the dual of Alice's cheating SDP such that 
\[ 
\left( P_{\B,1}^* + \delta \right) \left( P_{\A,0}^* + \delta \right)
> 
\inner{W_1 \otimes Z_1}{\tr_{A_1} (\kb{\psi} \otimes \kb{\phi})}. \] 
For brevity, we define $\ket{\xi_j}$ and $\ket{\xi'_j}$ equal to $\ket{\psi}\ket{\phi}$ (with the spaces permuted accordingly) to be the states of the protocol {before Alice's $j$'th message} and {before Bob's $j$'th message}, respectively, when they follow the protocol honestly. From the dual constraints, we have 
\[ 
\inner{W_j \otimes Z_j}{\tr_{A_j} \kb{\xi_j}} 
\geq
\inner{W_j \otimes Z_{j+1}}{\tr_{B_j} \kb{\xi'_j}}  
\geq 
\inner{W_{j+1} \otimes Z_{j+1}}{\tr_{A_{j+1}} \kb{\xi_{j+1}}} 
\] 
for $j \in \{ 1, \ldots, n-1  \}$, and for the last few messages we have 
\begin{eqnarray*}
\inner{W_{n} \otimes Z_{n}}{\tr_{A_{n}} \kb{\xi_{n}}} 
& \geq & 
\inner{W_{n} \otimes Z_{n+1}}{\tr_{B_{n}} \kb{\xi'_{n}}} \\
& \geq & 
\inner{W_{n+1} \otimes  Z_{n+1}}{\tr_{A'_0 \times A'} \kb{\xi_{n+1}}} \\
& \geq & 
\inner{W_{n+1} \otimes \Pi_{\B,0}}{\tr_{B'_0 \times B'} \kb{\xi'_{n+1}}} \\
& \geq & 
\inner{\Pi_{\A,1} \otimes \Pi_{\B,0}}{\kb{\xi_{n+2}}} 
\end{eqnarray*}
and the last quantity equals $0$ since Alice and Bob never output different outcomes when they are both honest. 
Note that these are dual variables from the original cheating SDPs, not the reduced version. The dual variables for the reduced version are scaled eigenvalues of the corresponding dual variables above. However, we do reconstruct Kitaev's proof above using the reduced SDPs in Section~\ref{LB}.

As was done in~\cite{Moc07}, we use the function $\prob : \pos^A \times \pos^B \times \pos^{A \times B} \to \R$,  defined as 
\[ \Prob(X,Y, \sigma) := \sum_{\lambda \in \eig(X)} \sum_{\mu \in \eig(Y)} \inner{\Pi_{X}^{[\lambda]} \otimes \Pi_{Y}^{[\mu]}}{\sigma} \, \pg{\lambda}{\mu}, \]
where $\pg{\lambda}{\mu} : \R^2 \to \{ 0, 1 \}$ denotes the function that takes value $1$ on input $(\lambda, \mu)$ and $0$ otherwise. Note this function has \emph{finite support} which are the \emph{points} in the point game. The quantity 
\[ \inner{\Pi_{X}^{[\lambda]} \otimes \Pi_{Y}^{[\mu]}}{\sigma} \] 
is said to be the associated \emph{probability} of the point $\pg{\lambda}{\mu}$. 

To create a point game for a $\BCCF$-protocol, we use the points that arise from feasible dual solutions in the following way:
\[ \begin{array}{rcll}
p_0 & := & \Prob(\Pi_{\A,1}, \Pi_{\B,0}, \kb{\xi_{n+2}}), \\
p'_1 & := & \Prob(W_{n+1}, \Pi_{\B,0}, \tr_{B'_0 \times B'} \kb{\xi'_{n+1}}), \\
p_1 & := & \Prob(W_{n+1}, Z_{n+1}, \tr_{A'_0 \times A'} \kb{\xi'_{n+1}}), \\
p'_{(n+2)-j} & := & \Prob(W_j, Z_{j+1}, \tr_{B_j} \kb{\xi'_{j}}), & \forAll j \in \{ 1, \ldots, n \}, \\
p_{(n+2)-j} & := & \Prob(W_j, Z_j, \tr_{A_j} \kb{\xi_{j}}), & \forAll j \in \{ 1, \ldots, n \},
\end{array} \]
noting that the $i$'th point corresponds to the $i$'th last message in the protocol. This gives rise to the point game moves (or transitions):
\[ p_0 \to p'_1 \to p_1 \to \cdots \to p'_j \to p_j \to \cdots \to p'_{n+1} \to p_{n+1}, \]
which we give context to in the next subsection. The reason we define point games in reverse time order is so that they always have the same starting state and it is shown later that the final point captures the two objective function values of the corresponding dual feasible solutions. The reverse time order ensures that we always start with the same $p_0$ and aim to get a desirable last point, instead of the other way around. 

First, we calculate $\Prob(W_j, Z_j, \tr_{A_j} \kb{\xi_{j}})$, for $j \in \set{1, \ldots, n}$.

\begin{definition} 
For a string $z \in \{ 0,1 \}^*$, we define $p(z)$ as the probability of string $z$ being revealed during an honest run of a fixed $\BCCF$-protocol. 
\end{definition}

To capture these probabilities, we use the following (unnormalized) states defined from the honest states in a $\BCCF$-protocol. 

\begin{definition}  
For $x = (x_1, \ldots, x_n) \in A$, $y = (y_1, \ldots, y_n) \in B$, and $j \in \{ 1, \ldots, n \}$, define 
\[ \ket{\psi_{x_1, \ldots, x_j}} := \frac{1}{\sqrt 2} \sum_{x_{j+1} \in A_{j+1}} \cdots \sum_{x_n \in A_n} \sum_{a \in \zo} \sqrt{\alpha_{a,x}} \, \ket{aa} \ket{x_{j+1}, \ldots, x_n}\ket{x_{j+1}, \ldots, x_n} \]
and 
\[ \ket{\phi_{y_1, \ldots, y_j}} := \frac{1}{\sqrt 2} \sum_{y_{j+1} \in B_{j+1}} \cdots \sum_{y_n \in B_n} \sum_{b \in \zo} \sqrt{\beta_{b,y}} \, \ket{bb} \ket{y_{j+1}, \ldots, y_n}\ket{y_{j+1}, \ldots, y_n}. \]
\end{definition} 

Note we have $p(x_1, \ldots, x_j) = \braket{\psi_{x_1, \ldots, x_j}}{\psi_{x_1, \ldots, x_j}}$, for all $(x_1, \ldots, x_j) \in A_1 \times \cdots \times A_j$, and $p(y_1, \ldots, y_j) = \braket{\phi_{y_1, \ldots, y_j}}{\phi_{y_1, \ldots, y_j}}$, for all $(y_1, \ldots, y_j) \in B_1 \times \cdots \times B_j$, for $j \in \{ 1, \ldots, n \}$. 

From the proof of the reduced problems {in Appendix~\ref{app:reduced_app}}, we can assume an optimal choice of $W_j$ has  eigenvalues $\frac{w_{j, x_1, y_1, \ldots, y_{j-1}, x_j}}{p(x_1, \ldots, x_j)}$, where $w_j$ is the corresponding variable in the dual of Bob's reduced cheating SDP. {Note that we do not need to worry about the case when ${p(x_1, \ldots, x_j) = 0}$ (nor the division by $0$) since this implies $w_{j, x_1, y_1, \ldots, y_{j-1}, x_j} = 0$.} The same argument holds in the following cases whenever there is an issue of dividing by $0$. The positive eigenvalues have  respective eigenspace {projections} 
\[ \Pi_{W_j}^{[x_1, y_1, \ldots, y_{j-1}, x_j]} := \kb{x_1, y_1, \ldots, y_{j-1}, x_j} \otimes \kb{\tilde{\psi}_{x_1, \ldots , x_j}}, \]
where $\ket{\tilde{\psi}_{x_1, \ldots , x_j}}$ is $\ket{{\psi}_{x_1, \ldots , x_j}}$ normalized. The other eigenvalues do not contribute to the points (this can be verified since these eigenvalues already contribute to probabilities adding to $1$). Similarly, an optimal choice of $Z_j$ has eigenvalues $\frac{z_{j, x_1, y_1, \ldots, x_{j-1}, y_{j-1}}}{p(y_1, \ldots, y_{j-1})}$, where $z_j$ is the corresponding variable in the dual of Alice's reduced cheating SDP, with respective eigenspaces
\[ \Pi_{Z_j}^{[x_1, y_1, \ldots, x_{j-1}, y_{j-1}]} := \kb{x_1, y_1, \ldots, x_{j-1}, y_{j-1}} \otimes \kb{\tilde{\phi}_{y_1, \ldots , y_{j-1}}}, \]
where $\ket{\tilde{\phi}_{y_1, \ldots , y_{j-1}}}$ is $\ket{{\phi}_{y_1, \ldots , y_{j-1}}}$ normalized. From these eigenspaces, we can compute
\begin{eqnarray*}
& & \inner{\Pi_{W_j}^{[x'_1, y'_1, \ldots, y'_{j-1}, x'_j]} \otimes \Pi_{Z_{j}}^{[x_1, y_1, \ldots, x_{j-1}, y_{j-1}]}}{\tr_{A_j} \kb{\xi_{j}}} \\
& = & \delta_{x_1, x'_1} \cdots \delta_{x_{j-1}, x'_{j-1}} \delta_{y_1, y'_1} \cdots \delta_{y_{j-1}, y'_{j-1}} \, p(x_1, y_1, \ldots, y_{j-1}, x_j).
\end{eqnarray*}
Thus, we can write the point $p_{(n+2)-j} := \Prob(W_j, Z_j, \tr_{A_j} \kb{\xi_{j}})$ as
\[ \sum_{x_1 \in A_1} \sum_{y_1 \in B_1} \cdots \sum_{y_{j-1} \in B_{j-1}} \sum_{x_j \in A_j} p(x_1, y_1, \ldots, y_{j-1}, x_j) \, \pg{\frac{w_{j, x_1, y_1, \ldots, y_{j-1}, x_j}}{p(x_1, \ldots, x_j)}}{\frac{z_{j, x_1, y_1, \ldots, x_{j-1}, y_{j-1}}}{p(y_1, \ldots, y_{j-1})}}. \]
We can similarly write $p'_{(n+2)-j} := \Prob(W_j, Z_{j+1}, \tr_{B_j} \kb{\xi'_{j}})$ as
\[ \sum_{x_1 \in A_1} \sum_{y_1 \in B_1} \cdots \sum_{y_{j} \in B_j} \sum_{x_j \in A_j} p(x_1, y_1, \ldots, x_j, y_j) \, \pg{\frac{w_{j, x_1, y_1, \ldots, y_{j-1}, x_j}}{p(x_1, \ldots, x_j)}}{\frac{z_{j+1, x_1, y_1, \ldots, x_{j}, y_{j}}}{p(y_1, \ldots, y_{j})}}. \]
 
The first three points are different from above as they correspond to the last few messages in the protocol (which are quite different from the first $2n$ messages). Nonetheless, the process is the same and we can calculate them to be
\begin{eqnarray*} 
p_{1} & = & \sum_{a \in \zo} \sum_{x \in A} \sum_{y \in B} p(x,a) \, p(y) \pg{v_{a,y}}{\frac{z_{n+1, x,y}}{p(y)}}, \\
p'_{1} & = & \sum_{b \in \zo} \sum_{y \in B} \half p(y, \bar{b}) \pg{v_{b,y}}{0} + \sum_{b,y} \half p(y, {b}) \pg{v_{b,y}}{1}, \\
p_{0} & = & \half \, \pg{1}{0} + \half \, \pg{0}{1},
\end{eqnarray*}
noting $z_{n+1, x,y} > 0$ when $p(y) > 0$.

We call any point game constructed from dual feasible solutions in this manner a \emph{$\BCCF$-point game}. In the next subsection, we describe rules for moving from one point to the next in any $\BCCF$-point game yielding a protocol independent definition.
 
\subsection{Describing $\BCCF$-point games using basic moves} \label{basicmoves}

Below are some basic point moves (or transitions) as Mochon describes them in $\cite{Moc07}$. 
{
\begin{definition}[Basic moves]
\quad
\begin{eqnarray*}
\textit{Point raising:} & & q \pg{w}{z} \to q \pg{w}{z'}, \text{ for } z \leq z', \\ 
\textit{Point merging:} & & q_1 \pg{w}{z_1} + q_2 \pg{w}{z_2} \to (q_1 + q_2) \pg{w}{\frac{q_1 z_1 + q_2 z_2}{q_1 + q_2}}, \\ 
\textit{Point splitting:} & & (q_1 + q_2) \pg{w}{\frac{q_1 + q_2}{\left( \frac{q_1}{z_1} \right) + \left( \frac{q_2}{z_2} \right)}} \to q_1 \pg{w}{z_1} + q_2 \pg{w}{z_2}, \text{ for } z_1, z_2 \neq 0. 
\end{eqnarray*} 
\end{definition}
}
An example of point splitting and point raising can be seen in Figure~\ref{QPG1} and examples of point mergings can be seen in Figures~\ref{QPG2}~and~\ref{QPG3}. Using a slight abuse of the definition of point splitting, if we perform a point split then raise the points, we still refer to this as a point split (for reasons that will be clear later). Also, we can merge or split on more than two points by repeating the process two points at a time.

These are moves in the second coordinate (keeping the first coordinate fixed) called \emph{vertical moves}, and we similarly define \emph{horizontal moves} acting on the first coordinate (keeping the second coordinate fixed).

Mochon gives a rough interpretation of these moves in~\cite{Moc07}. We can think of point raising as receiving a message, point merging as generating a message, and point splitting as checking a message via quantum measurement. These interpretations apply to the family of weak coin-flipping protocols in~\cite{Moc05}, and we show they also  apply to $\BCCF$-protocols.

Below are some special cases of these moves which are useful when describing $\BCCF$-point games. 
{\begin{eqnarray*}
\textit{Probability splitting:} & & (q_1 + q_2) \pg{z}{w} \to q_1 \pg{z}{w} + q_2 \pg{z}{w}, \\ 
\textit{Probability merging:} & & q_1 \pg{z}{w} + q_2 \pg{z}{w} \to (q_1 + q_2) \pg{z}{w}, \\ 
\textit{Aligning:} & & q_1 \pg{z_1}{w_1} + q_2 \pg{z_2}{w_2} \to q_1 \pg{\max \{ z_1, z_2 \}}{w_1} + q_2 \pg{\max \{ z_1, z_2 \}}{w_2}. 
\end{eqnarray*}} 
Probability splitting is the special case of point splitting where the resulting points have the same value and probability merging is the special case of point merging where the resulting points have the same value. Aligning is just raising two points to the maximum of the two (usually so a merge can be performed on the other coordinate).

We now show that each move in a $\BCCF$-point game can be described using basic moves. Consider the first transition:
\[ \half \pg{1}{0} + \half \pg{0}{1} \to \sum_{b \in \zo} \sum_{y \in B} \half p(y, \bar{b}) \pg{v_{b,y}}{0} + \sum_{b \in \zo} \sum_{y \in B} \half p(y, {b}) \pg{v_{b,y}}{1}, \]
which can be described in two steps. First,
\[ \half \pg{0}{1} \to \sum_{b \in \zo} \sum_{y \in B} \half p(y, {b}) \pg{v_{b,y}}{1}, \]
is just probability splitting followed by point raising (in the first coordinate). The {transition} 
\[ \half \pg{1}{0} \to \sum_{b \in \zo} \sum_{y \in B} \half p(y, \bar{b}) \pg{v_{b,y}}{0}, \]
is a point splitting. To see this, recall the dual constraint 
$\Diag(v_{a}) \succeq \sqrtt{\beta_{\bar{a}}}$, for $a \in \zo$. 
We have seen that this is equivalent to the condition 
$\sum_{y \in B} \frac{\beta_{\bar{a}, y}}{v_{a,y}} \leq 1$, 
when $v_a > 0$, which is the condition for a point split. Technically, a point split would have this inequality satisfied with equality, but we can always raise the points such that we get an inequality. As explained earlier, we just call this a point split.

We can interpret the point raise as Alice accepting Bob's last message $b$, and the point split as Alice checking Bob's state at the end of the protocol using her measurement. Note that these are the last two actions of a $\BCCF$-protocol.

We can do something similar for the second transition below 
\[ \sum_{b} \sum_{y} \frac{p(y, \bar{b})}{2} \pg{v_{b,y}}{0} + \!\! \sum_{b} \sum_{y} \frac{p(y, {b})}{2} \pg{v_{b,y}}{1}
\to \!\!
\sum_{a} \sum_{x} \sum_{y} p(x,a) \, p(y)  \pg{v_{a,y}}{\frac{z_{n+1, x,y}}{p(y)}}. \] 
To get this, for every $b \in \zo$, $y \in \supp(\beta_{b})$, we point split
\[ \pg{v_{b,y}}{1} \to \sum_{x \in A} \alpha_{b,x} \pg{v_{b,y}}{\frac{2 z_{n+1, x,y}}{\beta_{b,y}}}. \]
This is a valid point split since we have the dual constraint $\Diag \left( \frac{2 z_{n+1}^{(y)}}{\beta_{b,y}} \right) \succeq \sqrtt{\alpha_{b}}$, for all $b \in \zo, y \in \supp(\beta_b)$. 
Note that $z_{n+1}$ does not depend on $b$ so there are some consistency requirements when performing these splits.

The points at this stage can be seen in Figure~\ref{QPG1} for the special case of a four-round, i.e., $n=1$, $\BCCF$-protocol with $|A| = |B| = 2$ (noting that $p(y,b) = \half \beta_{b,y}$).
 
\begin{figure}[ht]
  \centering
   \includegraphics[width=6.5in]{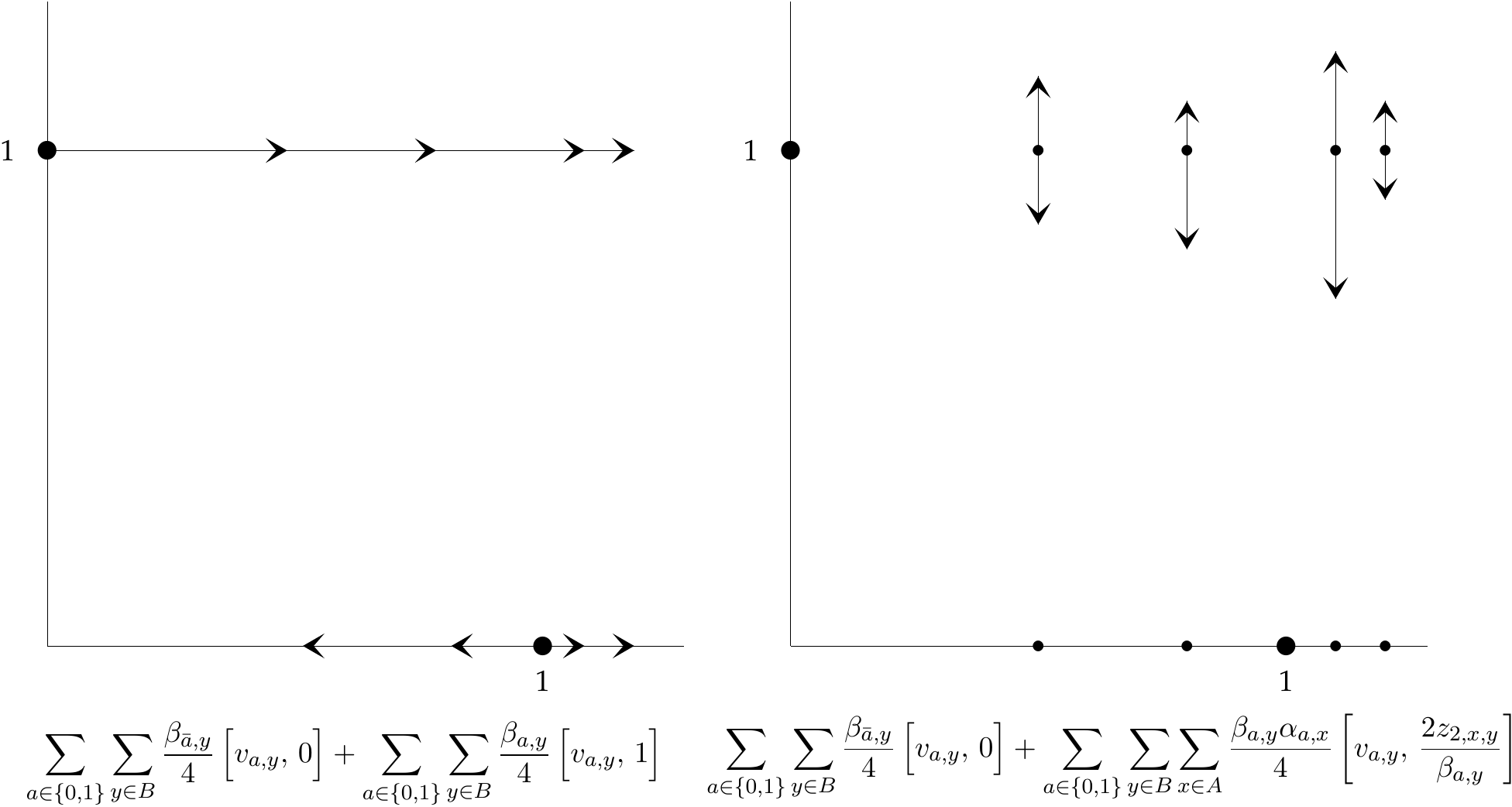} 
  \caption{After the point splits in a $\BCCF$-point game.}
\label{QPG1}
\end{figure}

For the other points, we perform the probability splitting:
\[ \sum_{b \in \zo} \sum_{y \in B} \half \, p(y, \bar{b}) \pg{v_{b,y}}{0} \to \sum_{b \in \zo} \sum_{y \in B} \sum_{x \in A} \half \,  p(y, \bar{b}) \, \alpha_{b,x} \pg{v_{b,y}}{0}, \]
yielding the current state
\[ \sum_{b \in \zo} \sum_{y \in B} \sum_{x \in A} \half \,  \alpha_{b,x} \left( p(y, \bar{b}) \pg{v_{b,y}}{0} + p(y, {b}) \pg{v_{b,y}}{\frac{2 z_{n+1, x,y}}{\beta_{b,y}}} \right). \]
Merging the part in the brackets yields
\[ \sum_{b \in \zo} \sum_{y \in B} \sum_{x \in A} \half \alpha_{b,x} p(y) \pg{v_{b,y}}{\frac{z_{n+1, x,y}}{p(y)}}
= \sum_{a \in \zo} \sum_{y \in B} \sum_{x \in A} p(x,a) \, p(y)  \pg{v_{a,y}}{\frac{z_{n+1, x,y}}{p(y)}}, \]
where the quantity on the right just relabelled $b$ as $a$. The transitions here were point splitting, point merging, and point raising (from the dual constraint on $z_{x,y}$, we can think of it as being a maximum over $a$, corresponding to a raise). These correspond to Bob checking Alice, Bob generating $b$, and Bob receiving $a$, respectively.

Fortunately, the rest of the transitions are straightforward. To explain the transition
\[ \sum_{a \in \zo} \sum_{y \in B} \sum_{x \in A} p(x,a) \, p(y)  \pg{v_{a,y}}{\frac{z_{n+1, x,y}}{p(y)}}
\to
\sum_{y \in B} \sum_{x \in A} p(x,y) \pg{\frac{w_{n, x_1, y_1, \ldots, y_{n-1}, x_n}}{p(x)}}{\frac{z_{n+1, x,y}}{p(y)}},
\] 
all we do is merge $a$, then align $y_n \in B_{n}$ in the first coordinate. To see why this is valid, we have the dual constraint $w_{n, x_1, y_1, \ldots, y_{n-1}, x_n} \geq \sum_{a \in \zo} \half \alpha_{a,x} \, v_{a,y} = \sum_{a \in \zo} p(x,a) \, v_{a,y}$. 
This corresponds to Alice generating $a$ and receiving Bob's message $y_n \in B_{n}$. This is depicted in Figure~\ref{QPG2}.
  
\begin{figure}[ht]
  \centering
   \includegraphics[width=6.5in]{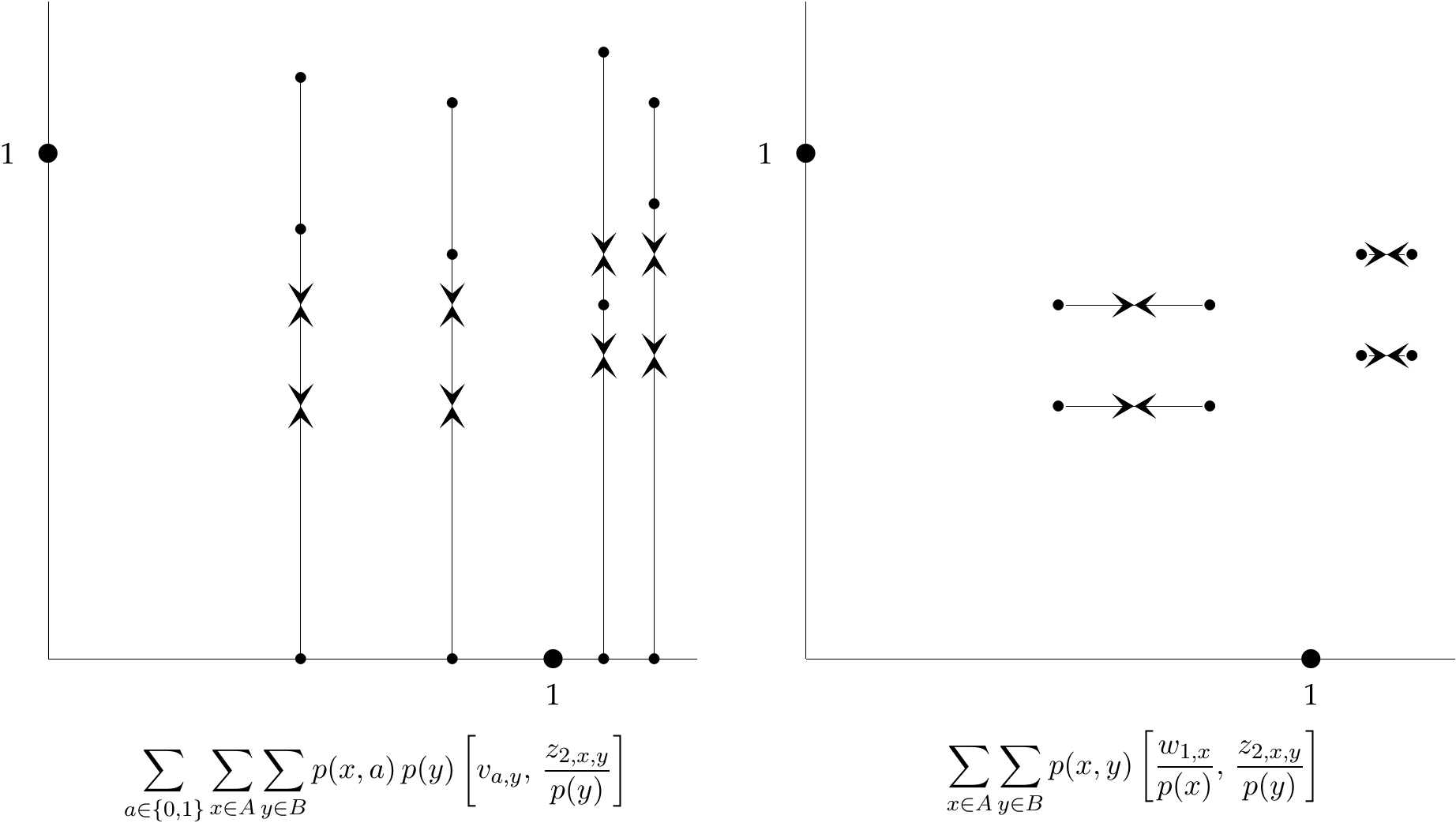} 
  \caption{After the first two merges in a $\BCCF$-point game.}
\label{QPG2}
\end{figure}

We show one more transition and the rest follow similarly. To show the transition
\begin{eqnarray*}
& & \sum_{y} \sum_{x} p(x,y) \pg{\frac{w_{n, x_1, y_1, \ldots, y_{n-1}, x_n}}{p(x)}}{\frac{z_{n+1, x,y}}{p(y)}} \\
& \to &
\sum_{y_1} \cdots \sum_{y_{n-1}} \sum_{x} p(x_1, y_1, \ldots, y_{n-1}, x_n) \pg{\frac{w_{n, x_1, y_1, \ldots, y_{n-1}, x_n}}{p(x)}}{\frac{z_{n,x_1, y_1, \ldots, x_{n-1}, y_{n-1}}}{p(y_1, \ldots, y_{n-1})}},
\end{eqnarray*}
we merge on $y_n \in B_n$ then align $x_n \in A_n$ in the second coordinate. The dual constraint corresponding to this is
$z_{n,x_1, y_1, \ldots, x_{n-1}, y_{n-1}} \geq \sum_{y_n \in B_n} z_{n+1, x,y}$. 
We can continue in this fashion until we get to the last {points} 
\[ \sum_{x_1 \in A_1} p(x_1) \pg{\frac{w_{1, x_1}}{p(x_1)}}{z_1}, \]
where $z_1$ is Alice's dual objective function value.
If we merge on $x_1$, we get Bob's dual objective function value in the first coordinate 
\[ \pg{\dsum_{x_1 \in A_1} w_{1, x_1}}{z_1}. \]  
Therefore, if $(w_1, \ldots, w_n, v_0, v_1)$ is feasible for the dual of Bob's reduced cheating SDP and if 
$(z_1, \ldots, z_n, z_{n+1})$ is feasible for the dual of Alice's reduced cheating SDP, then the final point of the point game is comprised of the two dual objective function values, as seen in Figure~\ref{QPG3}. 

\begin{figure}[ht] 
  \centering
   \includegraphics[width=6.5in]{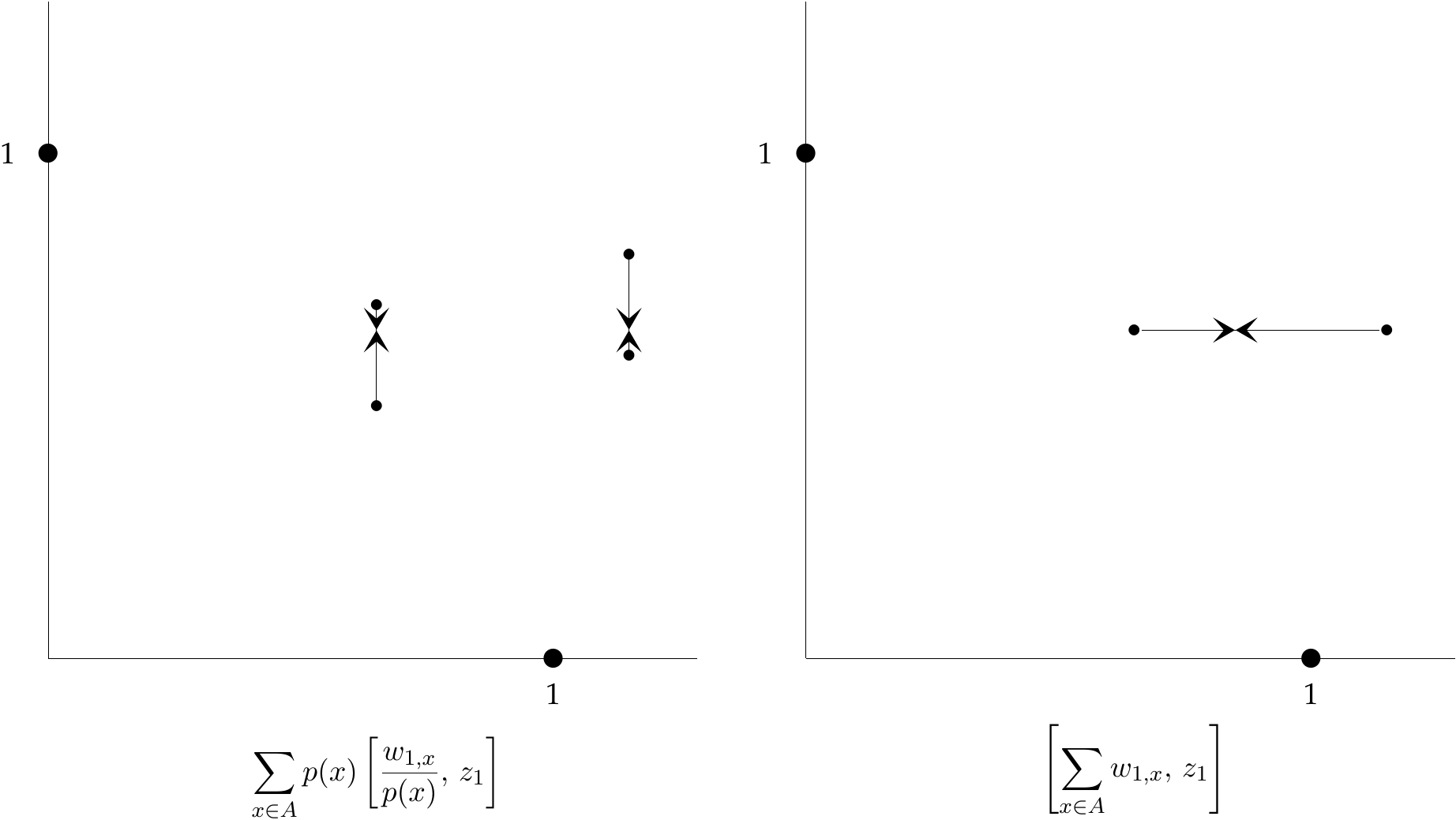} 
  \caption{The last few moves of a $\BCCF$-point game.}
\label{QPG3}
\end{figure}

We summarize this entire process as a list of basic moves in Point Game~\ref{PGBM}. 

Therefore, an optimal assignment of variables in the duals of the reduced cheating SDPs corresponds to a minimal {final point in the point game}. We now argue that these duals attain an optimal solution. Since the optimal objective values are bounded above by $1$, we can upper bound the values on all of the variables in the duals accordingly (it can be shown that $v_{a,y} \leq 2 |A|$, for all $a \in \zo, y \in B$ and the rest of the variables in the four duals are bounded above by $1$). Also, they are bounded below by $0$ from the positive semidefiniteness constraints. Since we are optimizing a continuous function over a compact set, we have that an optimal solution exists. 
  
\newpage 
\begin{pointgame}[$\BCCF$-point game with final point $\pg{\zeta_{\B}}{\zeta_{\A}}$ from basic moves] 
\label{PGBM}
\[ \begin{array}{lll}
& \dfrac{1}{2} \pg{1}{0} + \dfrac{1}{2} \pg{0}{1} & \\
\to &
\dsum_{a \in \zo} \quarter \, \pg{1}{0} + \dsum_{a \in \zo} \sum_{y \in B} \quarter \beta_{{a},y} \, \pg{0}{1}
& \textup{prob. splitting} \\
\to &
\dsum_{a \in \zo} \sum_{y \in B} \quarter \beta_{\bar{a},y} \, \pg{v_{a,y}}{0} + \dsum_{a \in \zo} \sum_{y \in B} \quarter \beta_{{a},y} \, \pg{0}{1} & \textup{point splitting} \\
\to &
\dsum_{a \in \zo} \sum_{y \in B} \quarter \beta_{\bar{a},y} \, \pg{v_{a,y}}{0} + \dsum_{a \in \zo} \sum_{y \in B} \quarter \beta_{{a},y} \, \pg{v_{a,y}}{1} & \textup{point raises} \\
\to &
\dsum_{a \in \zo} \sum_{y \in B} \quarter \beta_{\bar{a},y} \, \pg{v_{a,y}}{0} + \dsum_{a \in \zo} \sum_{y \in B} \sum_{x \in A} \quarter \beta_{{a},y} \alpha_{a,x} \, \pg{v_{a,y}}{\frac{2 z_{n+1, x,y}}{\beta_{a,y}}} & \textup{point splitting} \\
\to &
\dsum_{a \in \zo} \sum_{y \in B} \sum_{x \in A} \left( \quarter \beta_{\bar{a},y} \alpha_{a,x} \, \pg{v_{a,y}}{0} + \quarter \beta_{{a},y} \alpha_{a,x} \, \pg{v_{a,y}}{\frac{2 z_{n+1, x,y}}{\beta_{a,y}}} \right) & \textup{prob. splitting} \\
= &
\dsum_{a \in \zo} \sum_{y \in B} \sum_{x \in A} \quarter \alpha_{a,x} \left( \beta_{\bar{a}, y} \, \pg{v_{a,y}}{0} + \beta_{a,y} \, \pg{v_{a,y}}{\frac{2 z_{n+1, x,y}}{\beta_{a,y}}} \right) & \textup{} \\
\to &
\dsum_{a \in \zo} \sum_{y \in B} \dsum_{x \in A} \quarter \alpha_{a,x} \left( \dsum_{b \in \zo} \beta_{b,y} \right) \pg{v_{a,y}}{\frac{z_{n+1, x,y}}{p(y)}} & \textup{merges} \\
= &
\dsum_{a \in \zo} \dsum_{y \in B} \dsum_{x \in A} p(x,a) \, p(y)  \pg{v_{a,y}}{\frac{z_{n+1, x,y}}{p(y)}} & \textup{} \\
\to &
\dsum_{y \in B} \dsum_{x \in A} p(x,y) \pg{\frac{w_{n, x_1, y_1, \ldots, y_{n-1}, x_n}}{p(x)}}{\frac{z_{n+1, x,y}}{p(y)}} & \textup{merge $a$,} \\
& & \textup{then align $y_n$} \\
\quad \\
\to & \dsum_{y_1, \ldots, y_{n-1}} \dsum_{x \in A} p(x, y_1, \ldots, y_{n-1}) \pg{\frac{w_{n, x_1, y_1, \ldots, y_{n-1}, x_n}}{p(x)}}{\frac{z_{n, x_1, y_1, \ldots, x_{n-1}, y_{n-1}}}{p(y_1, \ldots, y_{n-1})}} & \textup{merge $y_n$,} \\
& & \textup{then align $x_n$} \\
& \quad \vdots \\
\to &
\dsum_{x_1 \in A_1} p(x_1) \pg{\frac{w_{1, x_1}}{p(x_1)}}{\zeta_{\A}} & \textup{merge $y_1$,} \\
& & \textup{then align $x_1$} \\
\to &
1 \pg{\zeta_{\B}}{\zeta_{\A}} & \textup{merge $x_1$}. \\
\end{array} \]
\end{pointgame}
 
{An example of an (optimal) $\BCCF$-point game can be found in Appendix~\ref{PGexample} corresponding to a four-round $\BCCF$-protocol with all four cheating probabilities equal to $3/4$. Note this four-round protocol is equivalent to a three-round protocol in~\cite{KN04} where we have set $\alpha_0 = \alpha_1$ to make the first message ``empty''.}
 
\subsection{Point game analysis} 
\label{PGanalysis}

From the point game description, we see that the only freedom is in how we choose the point splits since the rest of the points are determined from the merges and aligns. We {expand on} this idea {when developing} the succinct {forms} of the duals of the reduced SDPs in Subsection~\ref{succinct}. In each of the succinct forms of these duals, the only freedom is in how we choose to satisfy the positive semidefiniteness constraints. Once these variables {are} fixed, there {is} an obvious way to choose an optimal assignment of the rest of the variables. Coincidentally, the last constraints in each dual correspond to the point splits in the point game.

This brings us to the following protocol independent definition of $\BCCF$-point games.
 
\begin{definition}[$\BCCF$-point game (protocol independent)]
A $\BCCF$-point game defined on the parameters $\alpha_0, \alpha_1 \in \prob^A$ and $\beta_0, \beta_1 \in \prob^B$, with final point $\pg{\zeta_{\B}}{\zeta_{\A}}$, is any point game of the form
\[ p_0 := \half \pg{1}{0} + \half \pg{0}{1} \to p_1 \to p_2 \to \cdots \to p_m := \pg{\zeta_{\B}}{\zeta_{\A}}, \]
where the transitions are exactly the basic moves as described in {Point Game~\ref{PGBM}}. 
\end{definition}

As mentioned above, one only has the freedom to choose how the points are split at the beginning, the rest of the points are determined. Thus, every choice of point splitting yields a potentially different point game (keeping $\alpha_0, \alpha_1 \in \prob^A$ and $\beta_0, \beta_1 \in \prob^B$ fixed). A $\BCCF$-point game is defined by $\alpha_0, \alpha_1 \in \prob^A$ and $\beta_0, \beta_1 \in \prob^B$ which are the same parameters that uniquely define a $\BCCF$-protocol. However, there could be many point games corresponding to these same parameters. The analogous concept for $\BCCF$-protocols is that there could be many cheating strategies for the same protocol. Of course, there is an optimal cheating strategy just as there is an optimal $\BCCF$-point game.

The above definition is protocol independent since we have defined starting points, an ending point, and a description of how to move the points around. Indeed, the ``rules'' for the point moves correspond exactly to dual feasible solutions with objective function values being the two coordinates of the final point. This yields the following lemma which is the application of weak and strong duality in the language of protocols and point games.

\begin{lemma} \label{WCFLEMMA}
Suppose $\pg{\zeta_{\B,1}}{\zeta_{\A,0}}$ is the final point of a $\BCCF$-point game defined on the parameters ${\alpha_0, \alpha_1 \in \prob^A}$ and $\beta_0, \beta_1 \in \prob^B$. Then
\[ P_{\B,1}^* \leq \zeta_{\B,1} \aand P_{\A,0}^* \leq \zeta_{\A,0}, \]
where $P_{\B,1}^*$ and $P_{\A,0}^*$ are the optimal cheating probabilities for Bob forcing $1$ and Alice forcing $0$, respectively, in the corresponding $\BCCF$-protocol.
Moreover, there exists a $\BCCF$-point game with final point $\pg{P_{\B,1}^*}{P_{\A,0}^*}$.
\end{lemma}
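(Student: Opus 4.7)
The plan is to establish both directions by exhibiting a tight correspondence between the basic-move rules of a $\BCCF$-point game and the constraints of the duals of the reduced cheating SDPs for Bob forcing $1$ and Alice forcing $0$. The core observation, already implicit in Subsection~\ref{basicmoves}, is that every allowable sequence of transitions in Point Game~\ref{PGBM} is uniquely parameterized by variables $(v_0, v_1)$, $(w_1, \ldots, w_n)$, and $(z_1, \ldots, z_{n+1})$, and that the conditions making these transitions legal are exactly the dual constraints of the two reduced SDPs.

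First I would handle the inequality direction. Given an arbitrary $\BCCF$-point game ending at $\pg{\zeta_{\B,1}}{\zeta_{\A,0}}$, I would read off from the sequence of point splits the candidate dual variables $v_a$ (arising from the split of $\pg{0}{1}$ into the $\pg{v_{a,y}}{\cdot}$ points) and $z_{n+1}$ (arising from the subsequent split on $x$). Using Lemma~\ref{FidelityLemma} (or more precisely the equivalence $\Diag(y) \succeq \sqrtt{p} \iff \inner{y^{-1}}{p} \leq 1$ derived just after it), each such split is equivalent to one of the PSD-type constraints $\Diag(v_a) \succeq \sqrtt{\beta_{\bar a}}$ or $\Diag(z_{n+1}^{(y)}) \succeq \tfrac{1}{2}\beta_{a,y}\sqrtt{\alpha_a}$. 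Similarly, the aligns and merges producing $w_j$ and $z_j$ for $j \leq n$ are precisely the remaining linear dual inequalities listed in the displayed duals of Subsection~\ref{ssect:form}. Consequently the resulting tuples $(w_1,\ldots,w_n,v_0,v_1)$ and $(z_1,\ldots,z_{n+1})$ are feasible for the duals of Bob's and Alice's reduced cheating SDPs, respectively, and the two coordinates of the final merged point equal the two dual objective values $\sum_{x_1} w_{1,x_1}$ and $z_1$. Applying weak duality together with Theorems~\ref{thm:reducedBob} and~\ref{thm:reducedAlice} then yields $P_{\B,1}^* \leq \zeta_{\B,1}$ and $P_{\A,0}^* \leq \zeta_{\A,0}$.

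For the ``moreover'' part, I would run the construction in reverse. The two reduced cheating SDPs are bounded above by $1$, and (as noted at the end of Subsection~\ref{basicmoves}) each dual can be shown to admit an optimal solution by boundedness of its variables and continuity of the objective over the resulting compact feasible region; equivalently, a Slater point can be produced directly. Strong duality thus furnishes optimal dual tuples whose objective values are $P_{\B,1}^*$ and $P_{\A,0}^*$. Feeding these back into the point-game template of Point Game~\ref{PGBM}, with the point splits chosen according to the optimal $v_a$ and $z_{n+1}$, produces a valid $\BCCF$-point game whose final point is $\pg{P_{\B,1}^*}{P_{\A,0}^*}$.

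The main conceptual obstacle, and the step that deserves the most care, is verifying that the correspondence between splits and PSD constraints is a genuine bijection at the level of the point-game rules: one must confirm that the loosening described after the $\Diag(v_a) \succeq \sqrtt{\beta_{\bar a}}$ discussion (``technically a point split would have this inequality satisfied with equality, but we can always raise the points'') does not break the bookkeeping when the same $z_{n+1}$ must be consistent across all $b \in \zo$ and $y \in \supp(\beta_b)$. The routine calculations of merges and aligns should then go through exactly as displayed in Point Game~\ref{PGBM}.
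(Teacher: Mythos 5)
Your proposal follows essentially the same route as the paper: Subsection~\ref{basicmoves} establishes the exact correspondence between admissible point-game moves and dual feasible tuples $(w_1,\ldots,w_n,v_0,v_1)$ and $(z_1,\ldots,z_{n+1})$, the paper observes right before Lemma~\ref{WCFLEMMA} that the lemma is ``the application of weak and strong duality in the language of protocols and point games,'' and the attainment of an optimal dual solution is argued via compactness exactly as you describe. Your concern about $z_{n+1}$ being shared across the splits over $b$ and $y$ is acknowledged in the paper as well (``Note that $z_{n+1}$ does not depend on $b$ so there are some consistency requirements''), so the bookkeeping does go through.
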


In this paper, we are concerned with bounding the bias of strong coin-flipping protocols, and therefore would like to bound all four cheating probabilities. Recall that Alice and Bob's two cheating probabilities are swapped when $\beta_0$ and $\beta_1$ are swapped. This motivates the following definition.

\begin{definition}[$\BCCF$-point game pair]
Suppose we have a $\BCCF$-point game defined on the parameters $\alpha_0, \alpha_1 \in \prob^A$ and $\beta_0, \beta_1 \in \prob^B$ with final point $\pg{\zeta_{\B, 1}}{\zeta_{\A, 0}}$. Also, suppose we have another $\BCCF$-point game defined by the parameters $\alpha_0, \alpha_1 \in \prob^A$ and $\beta'_0 {:=} \beta_1$, $\beta'_1 {:=} \beta_0 \in \prob^B$ with final point $\pg{\zeta_{\B, 0}}{\zeta_{\A, 1}}$. We call the two point games a $\BCCF$-point game pair, defined by the parameters $\alpha_0, \alpha_1 \in \prob^A$ and $\beta_0, \beta_1 \in \prob^B$, with final point $\pgf{\zeta_{\B,0}}{\zeta_{\B,1}}{\zeta_{\A,0}}{\zeta_{\A,1}}$.
\end{definition}
It is worth commenting that $\BCCF$-point game pairs are defined over certain parameters even though one of the point games in the pair is  defined over swapped parameters.

Using Lemma~\ref{WCFLEMMA}, we have the following theorem. 
\begin{theorem} \label{SCFTHEOREM}
Suppose $\pgf{\zeta_{\B,0}}{\zeta_{\B,1}}{\zeta_{\A,0}}{\zeta_{\A,1}}$ is the final point of a $\BCCF$-point game pair defined on the parameters $\alpha_0, \alpha_1 \in \prob^A$ and $\beta_0, \beta_1 \in \prob^B$. Then
\[ P_{\B,0}^* \leq \zeta_{\B,0}, \quad P_{\B,1}^* \leq \zeta_{\B,1}, \quad P_{\A,0}^* \leq \zeta_{\A,0}, \; \textup{ and } \; P_{\A,1}^* \leq \zeta_{\A,1}, \]
where $P_{\B,0}^*$, $P_{\B,1}^*$, $P_{\A,0}^*$, $P_{\A,1}^*$ are the optimal cheating probabilities for  the corresponding $\BCCF$-protocol.
Moreover, there exists a $\BCCF$-point game pair with final point $\pgf{P_{\B,0}^*}{P_{\B,1}^*}{P_{\A,0}^*}{P_{\A,1}^*}$.
\end{theorem}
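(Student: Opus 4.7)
The plan is to reduce Theorem~\ref{SCFTHEOREM} to two applications of Lemma~\ref{WCFLEMMA}, exploiting the swap symmetry $\beta_0 \leftrightarrow \beta_1$ recorded in the remark following Theorem~\ref{thm:reducedAlice}. The definition of a $\BCCF$-point game pair bundles together two $\BCCF$-point games: one on parameters $(\alpha_0,\alpha_1,\beta_0,\beta_1)$ with final point $\pg{\zeta_{\B,1}}{\zeta_{\A,0}}$, and one on the ``$\beta$-swapped'' parameters $(\alpha_0,\alpha_1,\beta_1,\beta_0)$ with final point $\pg{\zeta_{\B,0}}{\zeta_{\A,1}}$. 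Lemma~\ref{WCFLEMMA} already handles each individual game, so essentially all that needs to be done is to translate the bounds obtained from the swapped game back into statements about the original $\BCCF$-protocol.

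First, I would apply Lemma~\ref{WCFLEMMA} directly to the first game in the pair, immediately obtaining $P_{\B,1}^* \leq \zeta_{\B,1}$ and $P_{\A,0}^* \leq \zeta_{\A,0}$ for the $\BCCF$-protocol associated with $(\alpha_0,\alpha_1,\beta_0,\beta_1)$. Second, I would apply the same lemma to the second game in the pair, which yields bounds of the form $\widetilde{P}_{\B,1}^* \leq \zeta_{\B,0}$ and $\widetilde{P}_{\A,0}^* \leq \zeta_{\A,1}$, where $\widetilde{P}$ denotes the cheating probabilities for the protocol defined by the swapped parameters $(\alpha_0,\alpha_1,\beta_1,\beta_0)$. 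Invoking the remark after Theorem~\ref{thm:reducedAlice}, which states that swapping $\beta_0$ and $\beta_1$ exchanges $P_{\B,0}^*$ with $P_{\B,1}^*$ and $P_{\A,0}^*$ with $P_{\A,1}^*$, I can identify $\widetilde{P}_{\B,1}^* = P_{\B,0}^*$ and $\widetilde{P}_{\A,0}^* = P_{\A,1}^*$. Combining the two pairs of inequalities gives all four bounds simultaneously.

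For the ``moreover'' part, I would apply the existence claim in Lemma~\ref{WCFLEMMA} twice. It produces a $\BCCF$-point game on $(\alpha_0,\alpha_1,\beta_0,\beta_1)$ with final point $\pg{P_{\B,1}^*}{P_{\A,0}^*}$, and a second $\BCCF$-point game on $(\alpha_0,\alpha_1,\beta_1,\beta_0)$ with final point $\pg{\widetilde{P}_{\B,1}^*}{\widetilde{P}_{\A,0}^*} = \pg{P_{\B,0}^*}{P_{\A,1}^*}$. By definition, these two point games together constitute a $\BCCF$-point game pair defined on $(\alpha_0,\alpha_1,\beta_0,\beta_1)$, and its final point is exactly $\pgf{P_{\B,0}^*}{P_{\B,1}^*}{P_{\A,0}^*}{P_{\A,1}^*}$.

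There is no real obstacle here: Theorem~\ref{SCFTHEOREM} is a straightforward bookkeeping upgrade of Lemma~\ref{WCFLEMMA} from a single ``weak-coin-flipping-style'' pair of cheating probabilities to the full strong-coin-flipping quartet, made possible by the built-in $\beta$-swap symmetry of the $\BCCF$-protocol family. The only point requiring any care is to verify that, in the second application, ``the corresponding $\BCCF$-protocol'' in Lemma~\ref{WCFLEMMA} refers to the protocol with swapped parameters, so that the swap identity from the remark must be invoked before the bounds can be re-expressed in terms of the original protocol's cheating probabilities $P_{\B,0}^*$ and $P_{\A,1}^*$.
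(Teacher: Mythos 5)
Your proposal is correct and takes exactly the approach the paper intends: the paper simply says ``Using Lemma~\ref{WCFLEMMA}, we have the following theorem'' without writing out the details, and your argument---two applications of Lemma~\ref{WCFLEMMA}, one to each game in the pair, with the $\beta$-swap remark translating the second game's bounds back to the original protocol---is precisely the bookkeeping being left implicit.
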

 
\section{A related family of classical coin-flipping protocols} \label{class}

In this section, we describe a family of classical protocols which is the classical counterpart to quantum $\BCCF$-protocols. That is, we choose messages according to the underlying probability distributions (instead of in a superposition) and we have a modified cheat detection step at the end of the protocol.
 
We now describe the protocol.

\begin{protocol}[Classical $\BCCF$-protocol]
\quad
\begin{itemize}
\item Alice chooses $a \in A_0$ uniformly at random and samples $x \in A$ with probability $\alpha_{a,x}$.
\item Bob chooses $b \in B_0$ uniformly at random and samples $y \in B$ with probability $\beta_{a,y}$.
\item For $i$ from $1$ to $n$: Alice sends $x_i \in A_i$ to Bob who replies with $y_i \in B_i$.
\item Alice fully reveals her bit by sending $a \in A_0$ to Bob. If $x \not\in \supp(\alpha_a)$, Bob aborts.
\item Bob fully reveals his bit by sending $b \in B_0$ to Alice. If $y \not\in \supp(\beta_b)$, Alice aborts.
\item The outcome of the protocol is $a \oplus b$, if no one aborts.
\end{itemize}
\end{protocol}
 
The rest of this section illustrates the connections between this classical protocol and the quantum version.

\subsection{Formulating optimal classical cheating strategies as linear programs}

{We can similarly formulate optimal cheating strategies in the classical protocols as optimization problems. In this case, we use linear programming as shown in the lemma below.} 
 
\begin{lemma}
For the classical $\BCCF$-protocol defined by the parameters $\alpha_0, \alpha_1 \in \prob^A$, ${\beta_0, \beta_1 \in \prob^B}$, we can write the  cheating probabilities {for Alice and Bob, each forcing outcome $0$, as} 
\[ P_{\A,0}^* = \max \left\{ 
\half \sum_{a \in A'_0} \sum_{y \in B} \sum_{x \in \supp(\alpha_a)} \beta_{a,y} s_{a,x,y}
: (s_{1}, \ldots, s_{n}, s) \in \calP_{\A} \right\}, \]
and  
\[ P_{\B,0}^* = \max \left\{ 
\half \sum_{a \in A'_0} \sum_{y \in \supp(\beta_a)} \sum_{x \in A} \alpha_{a,x} \, p_{n,x,y}
: (p_{1}, \ldots, p_{n}) \in \calP_{\B} \right\}, \] 
respectively. 
{We obtain $P_{\A,1}^*$ and $P_{\B,1}^*$ by switching the roles of $\beta_0$ and $\beta_1$.} 
\end{lemma}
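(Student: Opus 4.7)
The plan is to show that classical cheating strategies for Alice (resp.\ Bob) are in bijection with the feasible points of $\calP_{\A}$ (resp.\ $\calP_{\B}$), and that the cheating probabilities become the stated linear functions on these polytopes. The same polytopes that arose from reducing the quantum cheating SDPs turn out to encode, in their purely combinatorial reading, exactly the causality (non-signalling) conditions that any adaptive classical adversary must satisfy.

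For $P_{\A, 0}^*$ I would parametrize Alice's adversarial strategy by conditional distributions $\pi_j(x_j \mid x_1, y_1, \ldots, y_{j-1})$ at each round $j$, together with $r(a \mid x, y)$ used to reveal her bit at the end. Setting $s_{j,\, x_1, y_1, \ldots, y_{j-1}, x_j}$ equal to the product of the first $j$ of these conditionals, marginalizing out $x_j$ gives Alice's probability of having committed to $x_1, \ldots, x_{j-1}$; since $y_{j-1}$ is revealed only after $x_{j-1}$, this marginal must be independent of $y_{j-1}$ and must equal $s_{j-1,\, x_1, y_1, \ldots, x_{j-1}}$, which is precisely the constraint $\tr_{A_j}(s_j) = s_{j-1} \otimes e_{B_{j-1}}$. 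An analogous marginalization over $a$ of $s_{a, x, y} := \pi_n(x \mid y) \, r(a \mid x, y)$ yields $\tr_{A'_0}(s) = s_n \otimes e_{B_n}$, so every strategy lands in $\calP_{\A}$; the converse follows by reading off the conditional probabilities as ratios of consecutive $s_j$'s. Against honest Bob, who samples $(b, y)$ with probability $\tfrac{1}{2} \beta_{b, y}$, the joint distribution of the transcript is $\tfrac{1}{2} \beta_{b, y} \, s_{a, x, y}$, and Alice wins exactly on the event $\{a = b\} \cap \{x \in \supp(\alpha_a)\}$ (the latter ensuring Bob does not abort). Summing over this event gives the stated objective, and maximizing over $\calP_{\A}$ produces $P_{\A, 0}^*$.

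The analysis of $P_{\B, 0}^*$ is symmetric: $p_{j,\, x_1, y_1, \ldots, x_j, y_j}$ encodes Bob's probability of having replied $y_1, \ldots, y_j$ given Alice's messages $x_1, \ldots, x_j$, the base constraint $\tr_{B_1}(p_1) = e_{A_1}$ expresses that $p_1(x_1, \cdot)$ is a probability distribution for each $x_1$, and $\tr_{B_j}(p_j) = p_{j-1} \otimes e_{A_j}$ enforces that Bob's earlier replies cannot depend on Alice's later message $x_j$. Since Bob learns $a$ before disclosing $b$, the only choice compatible with outcome $0$ is $b = a$, in which case he wins iff $y \in \supp(\beta_a)$; against honest Alice, summing the joint probability $\tfrac{1}{2} \alpha_{a, x} \, p_{n, x, y}$ over this event yields the second claimed LP. The formulas for $P_{\A, 1}^*$ and $P_{\B, 1}^*$ then follow by swapping $\beta_0 \leftrightarrow \beta_1$, as per the remark after Theorem~\ref{thm:reducedAlice}. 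The only real bookkeeping obstacle is verifying that the causality conditions unpack exactly into the tensor-product constraints defining the polytopes---the $e_{B_{j-1}}$ and $e_{A_j}$ factors being the formal encoding of ``independent of $y_{j-1}$'' and ``independent of $x_j$''---but this is essentially definitional and introduces no genuine difficulty, since no duality theory is needed here.
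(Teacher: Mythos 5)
Your proof is correct and follows essentially the same approach as the paper: a direct, duality-free argument identifying points of the cheating polytopes with causality-respecting conditional probability families, and reading the winning probability off the joint transcript distribution. The paper details only Bob's case and defers Alice's as "almost identical," while you flesh out both, but the underlying reasoning is the same.
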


\begin{proof}
We shall prove this for the case of cheating Bob as the case for cheating Alice is almost identical. By examining Alice's cheat detection, we see that if we switch the roles of $\beta_0$ and $\beta_1$ then we also switch $P_{\B,0}^*$ and $P_{\B,1}^*$, so we only need to prove the $P_{\B,0}^*$ case.

After receiving the first message from Alice, Bob must choose a message to send. He can do this probabilistically by choosing $y_1 \in B_1$ with probability $p_{1, x_1, y_1}$, yielding the first constraint in Bob's cheating polytope. Notice that his message can depend on Alice's first message. We can similarly argue that the probabilities with which he chooses the rest of his messages are captured by the rest of the constraints in the cheating polytope with the exception of the last message. For the last message, we assume that Bob replies with $b=a$, where $a \in A_0$ was Alice's last message, if he desires outcome $0$ and $b = \bar{a}$ otherwise. Therefore, this decision is deterministic and is not represented by the cheating polytope.

All that remains is to explain the objective function. Since Bob chooses his last message deterministically, the quantity $\half \alpha_{a,x} \, p_{n,x,y}$ is the probability that Alice reveals $(x,a)$ and Bob reveals $(y,a)$. If he reveals $y$ when $\beta_{a,y} = 0$, he gets caught cheating, otherwise, his choice of $b$ is accepted. Therefore the objective function captures the total probability Alice accepts an outcome of $0$. \qed
\end{proof}

These are very similar to the quantum cheating probabilities except for the nonlinearity in the objective functions. For example, in the quantum setting, cheating Alice's objective function is $\dfrac{1}{2} \dsum_{a \bit} \dsum_{y \in B} \beta_{a,y} \;  \rF(s^{(a,y)}, \alpha_{a})$ and for the classical setting, it is
$\dfrac{1}{2} \dsum_{a \bit} \dsum_{y \in B} \beta_{a,y} \; \inner{s^{(a,y)}}{e_{\supp(\alpha_a)}}$, where $e_{\supp(\alpha_a)}$ is the $0,1$-vector taking value $1$ only on the support of $\alpha_a$.
We have a similar observation for Bob. What is surprising is that we can capture the communication for both settings with the same {polytope}.

To better understand this connection, we can write the objective function of Alice's reduced cheating SDP (for the quantum case) as
\[ \half \sum_{a \in \zo} \sum_{y \in B} \beta_{a,y} \, \inner{\sqrtt{s^{(a,y)}}}{\sqrtt{\alpha_a}}. \]
Then the objective function for Alice's cheating LP can be recovered if we replace $\sqrtt{\alpha_{a}}$ with $\Diag(e_{\supp(\alpha_a)})$.
Suppose we define a new projection
\[ \Pi_{\B,0} := \sum_{a \in \zo} \kb{a} \otimes \kb{a} \otimes \Diag(e_{\supp(\alpha_a)}) \otimes \id_{B'}. \]
A quick check shows that we can repeat the entire proof of the reduced cheating problems (in the quantum case) with this new projection if we also replace each occurrence of $\sqrtt{\alpha_a}$ with $\Diag(e_{\supp(\alpha_a)})$. Similar statements can be made if we redefine the other projections as
\[ \Pi_{\B,1} := \sum_{a \in \zo} \kb{\bar{a}} \otimes \kb{a} \otimes \Diag(e_{\supp(\alpha_a)}) \otimes \id_{B'}, \]
\[ \Pi_{\A,0} := \sum_{a \in \zo} \kb{{a}} \otimes \kb{a} \otimes \Diag(e_{\supp(\beta_a)}) \otimes \id_{A'}, \]
\[ \Pi_{\A,1} := \sum_{a \in \zo} \kb{\bar{a}} \otimes \kb{a} \otimes \Diag(e_{\supp(\beta_a)}) \otimes \id_{A'}. \]

This provides two insights. First, it proves that if we weaken the quantum cheat detection, we recover the optimal cheating probabilities for the corresponding classical protocol. Second, it gives us a recipe for developing the point games for the classical version. Notice that the eigenvalues of the dual variables are the same as in the quantum case, it is just that we have the stronger constraints:
\[ \begin{array}{ccc}
\Diag(v_{a}) \succeq \Diag(e_{\supp(\beta_{\bar{a}})}) & \textup{compared to} & \Diag(v_{a}) \succeq \sqrtt{\beta_{\bar{a}}}, \\
\Diag(z_{n+1}^{(y)}) \succeq \half \beta_{a,y} \Diag(e_{\supp(\alpha_a)}) & \textup{compared to} & \Diag(z_{n+1}^{(y)}) \succeq \half \beta_{a,y} \sqrtt{\alpha_a}.
\end{array} \]
Any solution of the {constraint} on the left satisfies the respective constraint on the right since
\[ \Diag(e_{\supp(\beta_{\bar{a}})}) \succeq \sqrtt{\beta_{\bar{a}}} \aand \half \beta_{a,y} \Diag(e_{\supp(\alpha_a)}) \succeq \half \beta_{a,y} \sqrtt{\alpha_a}. \]
Since the dual feasible regions are smaller in the classical case, we get that the optimal objective value cannot be less than the quantum version {since they share the same objective function}. This makes sense since the classical protocol has a weaker cheat detection step and we could have larger cheating probabilities. We can think of the classical case having more general strategies since the cheat detection step in the quantum version rules out certain strategies from being optimal. In this sense, the classical primal feasible regions are larger {than those in the quantum version} and the classical dual feasible regions are smaller.
This is similar to the relationship between the duality of convex sets. We have that $C_1 \subseteq C_2$ implies $C_1^* \supseteq C_2^*$ and the converse holds if $C_1$ and $C_2$ are closed convex cones. 
This relationship is depicted in Figure~\ref{PDFR}. 
  
\begin{figure}[ht] 
  \centering
   \includegraphics[width=5in]{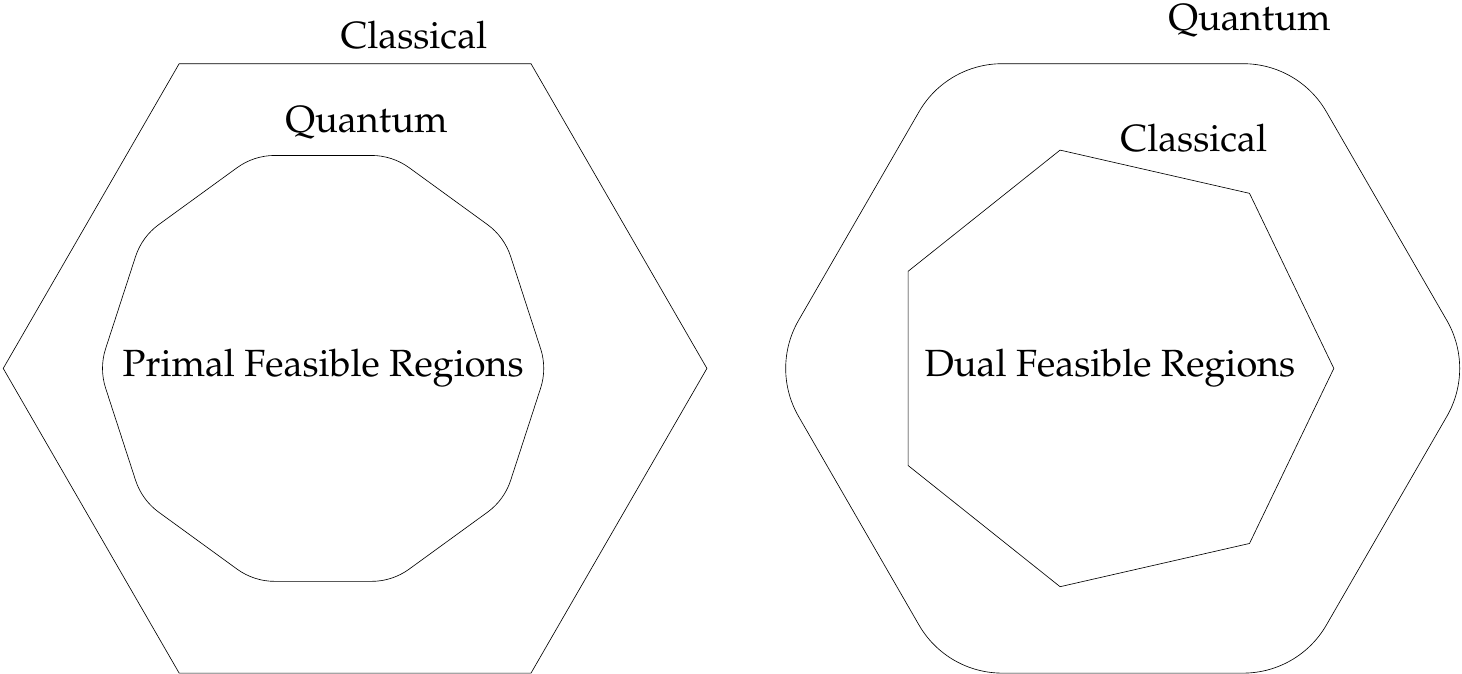} 
  \caption{{Relationship between the primal and dual feasible regions of the quantum and classical cheating strategy formulations.}}
\label{PDFR}
\end{figure}

\subsection{Point games for classical $\BCCF$-protocols}
\label{ssect:CPG}

In this subsection, we develop the classical analog to the quantum $\BCCF$-point games. Using these ``classical point games'', we prove that at least one party can cheat with probability $1$ in any classical $\BCCF$-protocol. A closer analysis shows that both cannot cheat with probability $1$, which holds true for quantum $\BCCF$-protocols as well.

Since point games are defined in terms of dual SDPs, we use the above embedding of the classical cheating LPs into SDPs to construct classical $\BCCF$-point games. Due to the similarities, very little about the quantum $\BCCF$-point games needs to be changed to attain classical  $\BCCF$-point games; we only need to change the definitions of Alice and Bob's projections. Of course, the dual solutions may be different due to the stronger constraints for the classical version.
The only differences are in the first few points (corresponding to the last few steps in Kitaev's proof that involve the projections). A quick calculation shows that these points are the same as well. The reason for this is because, in Bob's projections, we replace $\kb{\psi_a}$ with $\Diag(e_{\supp(\alpha_a)}) \otimes \id_{A'}$, but they have the same inner product with the honest state of the protocol
\[ \inner{\kb{\psi_a}}{\kb{\psi_a}} = \inner{\kb{\psi_a}}{\Diag(e_{\supp(\alpha_a)}) \otimes \id_{A'}} = 1. \]
A similar argument holds for Alice's projections as well.

Thus, the only difference between the classical point games are the values of the points, which are derived from slightly different dual constraints. Let us examine the point splits. In the quantum case, these are derived from the constraints
\[ \Diag(v_{a}) \succeq \sqrtt{\beta_{\bar{a}}} \aand  \Diag(z_{n+1}^{(y)}) \succeq \half \beta_{a,y} \sqrtt{\alpha_a}, \; \forall a \in \zo, y \in B. \]
In the classical case, the corresponding constraints are
\[ \Diag(\tilde{v}_{a}) \succeq \Diag(e_{\supp(\beta_{\bar{a}})}) \aand  \Diag(\tilde{z}_{n+1}^{(y)}) \succeq \half \beta_{a,y} \Diag(e_{\supp(\alpha_a)}), \; \forall a \in \zo, y \in B. \]
It is easy to see that $\tilde{v}_{a} = e_{\supp(\beta_{\bar{a}})}$ and
\[ \tilde{z}_{n+1, x,y} = \left\{ \begin{array}{rcl}
\half \beta_{0,y} & \textup{ if } & x \in \supp(\alpha_0) \setminus \supp(\alpha_1), \\
\half \beta_{1,y} & \textup{ if } & x \in \supp(\alpha_1) \setminus \supp(\alpha_0), \\
\half \max_{a \in \zo} \{ \beta_{a,y} \} & \textup{ if } & x \in \supp(\alpha_0) \cap \supp(\alpha_1), \\
0 & & \textup{otherwise,}
\end{array} \right. \]
are optimal assignments of these variables. Recall the two point splittings:
\[ \half \pg{1}{0} \to \sum_{b \in \zo} \sum_{y \in B} \quarter \beta_{\bar{a},y} \pg{\tilde{v}_{a,y}}{0} \aand \pg{v_{a,y}}{1} \to \sum_{x \in A} \alpha_{a,x} \pg{v_{a,y}}{\frac{2 \tilde{z}_{n+1, x,y}}{\beta_{a,y}}}. \]
We see that these are just probability splittings using the optimal assignment above (with possibly a point raise in the case of $x \in \supp(\alpha_0) \cap \supp(\alpha_1)$). These probability splittings are in contrast to the point splittings in the quantum case.
The rest of the constraints are the same as in the quantum case and correspond to point merging, probability merging, and aligning. Therefore, the only difference between quantum $\BCCF$-point games and the classical version is that non-trivial point splittings are allowed in the quantum version. Therefore, we get the following definition.

\begin{definition}[Classical $\BCCF$-point game (protocol independent)] 
A classical $\BCCF$-point game defined on the parameters $\alpha_0, \alpha_1 \in \prob^A$ and $\beta_0, \beta_1 \in \prob^B$, with final point $\pg{\zeta_{\B}}{\zeta_{\A}}$, is a quantum $\BCCF$-point game defined by the same parameters and having the final point $\pg{\zeta_{\B}}{\zeta_{\A}}$ but the point splittings are trivial (i.e., they are probability splittings).
\end{definition}
Using this definition, we define \emph{classical $\BCCF$-point game pairs} analogously to the quantum version.

To complete the picture, we now present the classical version of Theorem~\ref{SCFTHEOREM}.
\begin{theorem} \label{SCFTHEOREMc}
Suppose $\pgf{\zeta_{\B,0}}{\zeta_{\B,1}}{\zeta_{\A,0}}{\zeta_{\A,1}}$ is the final point of a classical $\BCCF$-point game pair defined on the parameters $\alpha_0, \alpha_1 \in \prob^A$ and $\beta_0, \beta_1 \in \prob^B$. Then
\[ P_{\B,0}^* \leq \zeta_{\B,0}, \quad P_{\B,1}^* \leq \zeta_{\B,1}, \quad P_{\A,0}^* \leq \zeta_{\A,0}, \; \textup{ and } \; P_{\A,1}^* \leq \zeta_{\A,1}, \]
where $P_{\B,0}^*$, $P_{\B,1}^*$, $P_{\A,0}^*$, $P_{\A,1}^*$ are the optimal cheating probabilities for  the corresponding classical $\BCCF$-protocol.
Moreover, there exists a classical $\BCCF$-point game pair with final point $\pgf{P_{\B,0}^*}{P_{\B,1}^*}{P_{\A,0}^*}{P_{\A,1}^*}$.
\end{theorem}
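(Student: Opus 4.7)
The plan is to mirror the argument for Theorem~\ref{SCFTHEOREM} but replacing semidefinite programming duality with linear programming duality, exploiting the fact that the classical cheating probabilities are already cast as linear programs over the same cheating polytopes $\calP_{\A}$ and $\calP_{\B}$. In broad strokes, each feasible classical $\BCCF$-point game will be shown to package a feasible dual solution to one of the classical cheating LPs (for Alice forcing $0$ and Bob forcing $1$), and conversely, strong LP duality will supply the point games realizing the optimal values. The pairing argument then handles all four cheating probabilities at once.

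First, I would make the LP-duality interpretation precise. Following Subsection~\ref{ssect:CPG}, I would embed the classical cheating LPs into the SDP framework by replacing $\sqrtt{\alpha_a}$ with $\Diag(e_{\supp(\alpha_a)})$ (and similarly for $\beta$) in the projections $\Pi_{\A,c}, \Pi_{\B,c}$. The duals of these (now LP-representable) cheating problems have exactly the same structure as the duals of the reduced quantum SDPs, except that the ``fidelity-like'' constraints $\Diag(v_a) \succeq \sqrtt{\beta_{\bar a}}$ and $\Diag(z_{n+1}^{(y)}) \succeq \tfrac{1}{2}\beta_{a,y}\sqrtt{\alpha_a}$ are replaced by their diagonal analogues $\Diag(\tilde v_a) \succeq \Diag(e_{\supp(\beta_{\bar a})})$ and $\Diag(\tilde z_{n+1}^{(y)}) \succeq \tfrac{1}{2}\beta_{a,y}\Diag(e_{\supp(\alpha_a)})$. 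These are precisely the dual constraints that Subsection~\ref{ssect:CPG} shows correspond to \emph{trivial} point splits (probability splits with possible raises), matching the definition of a classical $\BCCF$-point game.

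For the inequality direction, given a classical $\BCCF$-point game with final point $\pg{\zeta_{\B,1}}{\zeta_{\A,0}}$, I would read off the values $(\tilde w_j, \tilde v_a)$ and $(\tilde z_j)$ appearing as the first and second coordinates of its intermediate points. Exactly as in Section~\ref{BCCFPG}, each basic move (point raising, merging, splitting, aligning, probability splitting/merging) corresponds to one of the linear constraints in the duals of Bob's and Alice's reduced cheating LPs; the only place a nontrivial point split would have been used in the quantum case is now restricted to trivial probability splits, which is exactly the restriction the classical (diagonal) dual constraints impose. Hence $(\tilde w_j, \tilde v_a)$ and $(\tilde z_j)$ are feasible for the respective dual LPs with objective values $\zeta_{\B,1}$ and $\zeta_{\A,0}$, and weak LP duality yields $P_{\B,1}^* \le \zeta_{\B,1}$ and $P_{\A,0}^* \le \zeta_{\A,0}$. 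Applying the same argument to the second point game of the pair, defined with $\beta_0$ and $\beta_1$ swapped, yields the remaining two inequalities.

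For the existence direction, I would invoke strong duality for linear programs, which (unlike the SDP case) requires no Slater point: the cheating LPs are clearly feasible and bounded above by $1$, so their duals attain optimal solutions with matching objective values. Taking an optimal dual solution to the $P_{\A,0}^*$ LP together with an optimal dual solution to the $P_{\B,1}^*$ LP and plugging them into the template of Point Game~\ref{PGBM} (with trivial splits) produces a classical $\BCCF$-point game with final point exactly $\pg{P_{\B,1}^*}{P_{\A,0}^*}$; doing the same for the swapped protocol gives the paired point game with final point $\pg{P_{\B,0}^*}{P_{\A,1}^*}$. The main delicate step I expect is the bookkeeping that verifies each classical dual constraint really does match a basic move \emph{with trivial splits only}---in particular, checking that the optimal $\tilde z_{n+1,x,y}$ given by the case analysis in Subsection~\ref{ssect:CPG} (the $\max$ over $a$ on $\supp(\alpha_0)\cap\supp(\alpha_1)$) is consistent with the aligning and merging steps, so that the whole sequence of moves in Point Game~\ref{PGBM} goes through without invoking a nontrivial split.
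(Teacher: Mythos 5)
Your proposal is correct and follows essentially the same approach the paper takes: embed the classical cheating LPs into the SDP framework by replacing $\sqrtt{\alpha_a}$ and $\sqrtt{\beta_a}$ with their diagonal indicator analogues, observe that the dual constraints then force the point splits in Point Game~\ref{PGBM} to be trivial (probability splits with possible raises), and conclude via weak and strong duality — the latter needing no Slater condition in the LP setting — exactly as in Lemma~\ref{WCFLEMMA} and Theorem~\ref{SCFTHEOREM}. The "delicate step" you flag (the $\max$ over $a$ on $\supp(\alpha_0)\cap\supp(\alpha_1)$) is handled in the paper simply by noting that this case produces a probability split followed by a point raise, both of which are permitted moves, so no additional bookkeeping is required.
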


Figure~\ref{crystal} depicts the intricate connections between quantum and classical $\BCCF$-protocols and their point games.
 
\begin{figure}[ht]
  \centering
   \includegraphics[width=6.25in] {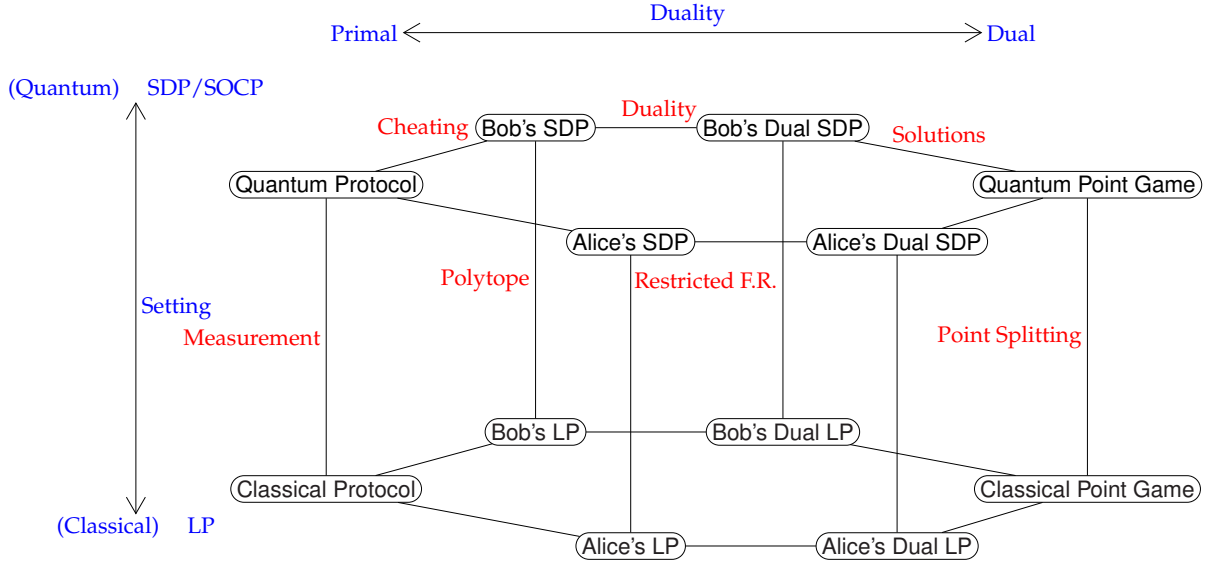} 
  \caption{Crystal structure of $\BCCF$-protocols. F.R. denotes ``feasible region''.}
\label{crystal}
\end{figure}

This crystal illustrates an analogy between physical theories and certain classes of convex optimization problems.
More specifically, we see that the generalization of classical mechanics to quantum mechanics is analogous
to the generalization of linear optimization to semidefinite optimization. In the rest of this subsection,
we elaborate on this idea and explain the benefits of viewing cryptographic protocols with this perspective.
 
It was shown in \cite{NST14} that the cheating SDPs can be written as SOCPs. The fact that the cheating in our protocols can be modelled as second-order cone programs hints that our protocol is using only a well-structured ``part'' of quantum mechanics. Indeed, apart from the initialization of the states at the beginning and the measurements at the end, our protocols only exchange quantum systems. Thus, our {protocols are} conceptually not using every aspect of quantum mechanics at every opportunity. Furthermore, the fact that the reduced problems are ``almost linear programs'' hints that our {protocols are} ``almost classical'', which is indeed the case. As we have discussed, only the measurement at the end makes the $\BCCF$-protocol ``quantum''. Thus, there is some philosophical connections between how ``quantum'' the protocol is and how ``SDP'' is the cheating formulation. This is a purely philosophical statement, of course, but it could provide insights towards protocol design. For example, in \cite{Moc07}, it was shown how to create quantum \emph{weak coin-flipping protocols} with arbitrarily small bias using SDPs. Since the structure of {his} protocols is implicit in the analysis and is very complicated, perhaps a better understanding of the SDPs used could shed light of the specific quantum mechanical behaviours to look for in designing such a protocol.
 
On a more quantitative side, once could use optimization theory to provide a measure of the complexity of a protocol. For example, in \cite{NST14}, we provide an SOCP representation for the hypograph of the fidelity function (characterizing the cheating probabilities in the reduced problems). Recall that the hypograph of a concave function is a convex set. Also, the dimension of the hypograph of $\rF(á, q) : \R^n_+ \to \R$ is equal to $n$ (assuming $q > 0$). Since the hypograph is $O(n)$-dimensional and convex, there exists a self-concordant barrier function for the set with complexity parameter $O(n)$, shown by Nesterov and Nemirovski \cite{NN94}. {The details of such functions are not necessary for this paper, but we mention that such a function allows} the derivation of interior-point methods for the underlying convex optimization problem which use $O(\sqrt{n} \log(1/\epsilon))$ iterations, where $\epsilon$ is an accuracy parameter. This suggests that we can use the complexity parameter of the self-concordant barrier function of the objective function characterizing cheating in the protocol as a complexity measure for the protocol. Such a measure could also {lend} itself to more general theories. For example, if one were to consider coin-flipping in a theory generalizing quantum mechanics, then one could still measure the complexity of the protocol by considering the complexity parameter of the objective function in a class of optimization problems possibly generalizing semidefinite optimization. Considering this paper uses restrictions of semidefinite optimization to characterize sub-quantum behaviour, it would not be surprising if generalizations of semidefinite optimization would characterize super-quantum behaviour. 
  
\subsection{Security analysis of classical $\BCCF$-protocols}
\label{ssect:ClassicalSecurity}

We start by giving an alternative proof that these classical protocols have bias $\eps = 1/2$ using the language of point games.

\begin{lemma} \label{lemCPG}
Suppose we have the following point game
\[ p_0 := \half \pg{0}{1} + \half \pg{1}{0} \to p_1 \to \cdots \to p_{m-1} \to p_m := \pg{\zeta_{\B}}{\zeta_{\A}}, \]
where each move is either point raising, point merging, probability merging, or probability splitting. Then $\zeta_{\B} \geq 1$ or $\zeta_{\A} \geq 1$.
\end{lemma}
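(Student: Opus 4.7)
The plan is to exhibit a simple invariant that is preserved by every move the lemma allows, and to verify that this invariant forces the final point to satisfy $\max(\zeta_{\B}, \zeta_{\A}) \geq 1$.  The candidate invariant is: \emph{every point $\pg{w}{z}$ with positive probability in the current configuration satisfies $\max(w,z) \geq 1$.}  This clearly holds at the initial configuration $p_0 = \half\pg{0}{1} + \half\pg{1}{0}$, since both points have $\max$-coordinate exactly $1$.

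I would then verify that each of the four allowed moves preserves the invariant.  Probability merging and probability splitting merely redistribute weight among copies of the same point, so they preserve the invariant trivially.  Point raising replaces $\pg{w}{z}$ by $\pg{w}{z'}$ with $z' \geq z$ (or symmetrically in the first coordinate), which can only increase $\max(w,z)$.  The substantive case is point merging.  For a vertical merge, $q_1\pg{w}{z_1} + q_2\pg{w}{z_2} \to (q_1+q_2)\pg{w}{\bar z}$ with $\bar z = (q_1 z_1 + q_2 z_2)/(q_1+q_2)$, and the preservation argument is a two-line case split: if $w \geq 1$ we are done, and otherwise the invariant applied to each parent forces $z_1, z_2 \geq 1$, hence $\bar z \geq 1$.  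Horizontal merges are handled symmetrically.

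This is the heart of the argument, and it is also precisely the place where the exclusion of point splitting is essential.  In a point split the parent has the weighted harmonic mean $\frac{q_1+q_2}{q_1/z_1 + q_2/z_2}$ as its $z$-value, which can lie above $1$ even when one child $z$-coordinate lies strictly below $1$; so point splitting is the unique basic move that can degrade the invariant.  I expect the main (mild) obstacle to be simply identifying the correct invariant; once the invariant is written down, all verifications are one- or two-line calculations.

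Iterating preservation along the chain $p_0 \to p_1 \to \cdots \to p_m$, the final configuration $p_m = \pg{\zeta_{\B}}{\zeta_{\A}}$ satisfies the invariant, giving $\max(\zeta_{\B}, \zeta_{\A}) \geq 1$ and hence $\zeta_{\B} \geq 1$ or $\zeta_{\A} \geq 1$, as claimed.
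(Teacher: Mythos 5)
Your proof is correct and takes essentially the same approach as the paper: the key insight in both is that the property ``every point in the configuration has a coordinate $\geq 1$'' is preserved by raises, probability splits/merges, and point merges. The paper phrases this as a minimal-counterexample contradiction argument rather than as an invariant preserved forward by induction, but the two are logically equivalent and the verification of the merge case is identical.
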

 
\begin{proof}
Suppose for {the purpose of} contradiction that $\zeta_{\B}, \zeta_{\A} < 1$ and let $i \in \set{1, \ldots, m}$ be the smallest index such that $p_i$ has a point of the form $\pg{\zeta_{\B,i}}{\zeta_{\A,i}}$ with $\zeta_{\B,i}, \zeta_{\A,i} < 1$. Since $p_{i-1}$ has no such points, $\pg{\zeta_{\B,i}}{\zeta_{\A,i}}$ could not have been generated from a point raise, a probability merge, nor a probability split. Thus, $p_{i-1} \to p_i$ must be a point merge and suppose without loss of generality, it acted on the first coordinate. Then $p_{i-1}$ has two points $q_1 \pg{\zeta_1}{\zeta_{\A,i}}$ and $q_2 \pg{\zeta_2}{\zeta_{\A,i}}$ with $\dfrac{q_1 \, \zeta_1 + q_2 \, \zeta_2}{q_1 + q_2} = \zeta_{\B,i} < 1$ implying $\zeta_1 < 1$ or $\zeta_2 < 1$, 
a contradiction to the minimality of the choice of $i$. \qed
\end{proof}
   
Using the above lemma and Theorem~\ref{SCFTHEOREMc}, we have that classical $\BCCF$-protocols are insecure; at least one party can cheat with probability $1$. 
 
There are two special cases of classical protocols we consider in greater detail. Recall the points in the point game (before merging on $a$ in the first coordinate)
\begin{equation}
\dsum_{a \in \zo} \dsum_{y \in B} \dsum_{x \in A} p(x,a) \, p(y) \pg{v_{a,y}}{\frac{z_{n+1, x,y}}{p(y)}}. \label{step}
\end{equation}

The first case we consider is when $\alpha_0, \alpha_1, \beta_0, \beta_1 > 0$. Then we can set $v_{a,y} = 1$ for all ${a \in \zo}$, ${y \in B}$ and $z_{n+1, x,y} = \half \max_{a \in \zo} \beta_{a,y}$ for all $x \in A, y \in B$. After the merges and aligns, we have the final point being
\[ \pg{1}{\sum_{y \in B} \max_{a \in \zo} \half \beta_{a,y}} = \pg{1}{\half + \half \Delta(\beta_0, \beta_1)}, \]
using Lemma~\ref{mn}.
We can see that this is a $\BCCF$-point game with an optimal assignment of dual variables. Thus, Bob can cheat towards $1$ perfectly and Alice can force $0$ with probability $\half + \half \Delta(\beta_0, \beta_1)$ as seen on the left in Figure~\ref{AliceCPG}. These two quantities are invariant under switching $\beta_0$ and $\beta_1$, thus $P_{\B,0}^* = P_{\B,1}^* = 1$ and ${P_{\A,0}^* = P_{\A,1}^* = \half + \half \Delta(\beta_0, \beta_1)}$. The corresponding optimal cheating strategies in the classical $\BCCF$-protocol are obvious by noticing the cheat detection step does nothing when the vectors have full support. Bob can send anything during the first $n$ messages and then return $b = a$. Alice can send $a$ corresponding to her best guess of $b$ from her information about $y \in B$, i.e., she can cheat with the probability she can infer $b$ from $y \in B$. An interesting observation is that this is a valid point game pair for both the classical and quantum versions for \emph{any} $\alpha_0, \alpha_1 \in \prob^A$ and $\beta_0, \beta_1 \in \prob^B$ {since the dual feasible regions for the classical formulations are contained in the dual feasible regions of the quantum formulations.}
Therefore, we have that $P_{\A,0}^*, P_{\A,1}^* \leq \half + \half \Delta(\beta_0, \beta_1)$ for every quantum $\BCCF$-protocol as well. This bound can be interpreted as follows. Suppose we change the order of the messages in the $\BCCF$-protocol in Alice's favour, so that Bob's first $n$ messages are sent first, followed by \emph{all} of Alice's messages, then finally Bob's last message. Then Alice's new cheating probability would be $\half + \half \Delta(\beta_0, \beta_1)$ and is an obvious upper bound on the amount she can cheat in the original protocol (since she gets information about $b$ sooner than intended). This argument works for the classical and quantum versions.

It may seem that classical protocols favour a cheating Bob, but this is not always the case. Consider the case when $\beta_0 \perp \beta_1$ and $\alpha_0, \alpha_1 > 0$. Then $\frac{z_{n+1, x,y}}{p(y)} = 1$ for all ${y \in \supp(\beta_0) \cup \supp(\beta_1)}$, thus the second coordinate equals $1$ for all points in~(\ref{step}), and remains that way until the end of the point game. This proves Alice can cheat with probability $1$, which is obvious since Bob's first message fully reveals $b$ and she can always pass the cheat detection step. The extent to which Bob can cheat depends on the choice of $\alpha_0$ and $\alpha_1$ and can be calculated as 
\[ \dsum_{x_1 \in A_1} \max_{a \in \zo} \sum_{x_2 \in A_2} \cdots \sum_{x_n \in A_n} \half \alpha_{a,x} = \half + \half \Delta \left( \tr_{A_2 \times \cdots \times A_n}(\alpha_0), \tr_{A_2 \times \cdots \times A_n}(\alpha_1) \right), \]
using Lemma~\ref{mn}. This is a distance measure between the two marginal distributions over Alice's first message $x_1$. This point game is depicted on the right in Figure~\ref{AliceCPG}. Bob can cheat with this probability since he can choose $b$ equal to his best guess for ${\bar{a}}$ from his information about $x_1$. Once his first message is sent, he must keep his choice of $b$ or he will be caught cheating with certainty. These cheating probabilities do not depend on $\beta_0$ or $\beta_1$, so we have $P_{\A,0}^* = P_{\A,1}^* = 1$ and $P_{\B,0}^* = P_{\B,1}^* = \dfrac{1}{2} + \dfrac{1}{2} \Delta \left(\tr_{A_2 \times \cdots \times A_n}(\alpha_0),  \tr_{A_2 \times \cdots \times A_n}(\alpha_1) \right)$. 
Therefore, a classical $\BCCF$-protocol could favour either party. 

\begin{figure}[ht]
  \centering
   \includegraphics[width=6.5in] {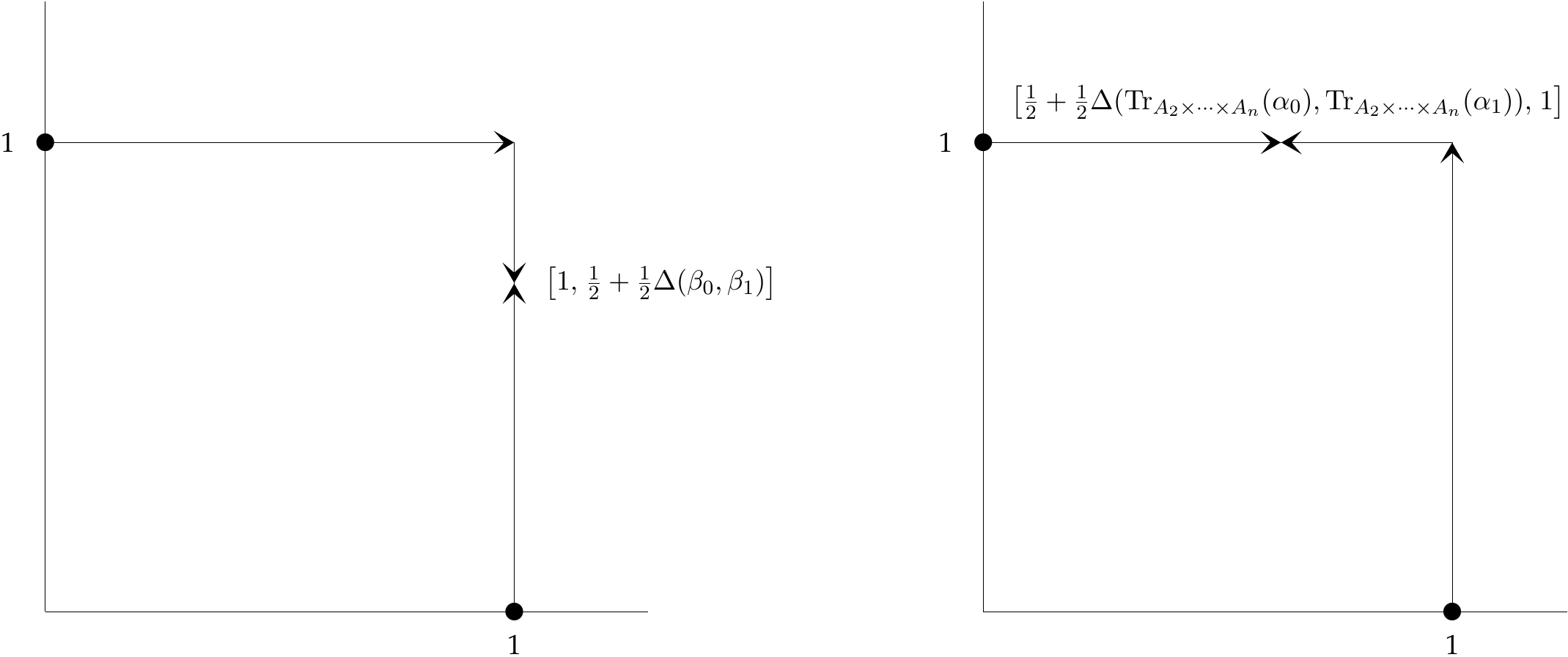} 
  \caption{{Classical $\BCCF$-point game examples. Left: A classical $\BCCF$-point game favouring cheating Bob. Right: A classical $\BCCF$-point game favouring cheating Alice.}}
  \label{AliceCPG}
\end{figure}

This raises the question: Can we find a $\BCCF$-protocol such that both parties can perfectly control the outcome? We now argue that no such classical, and hence no such quantum, $\BCCF$-protocol exists.
{Suppose for the purpose of} contradiction that this is the case. Then we must have
\[ 1 = P_{\A,0}^* \leq \half + \half \Delta(\beta_0, \beta_1) \leq 1 \]
which implies $\beta_0 \perp \beta_1$. {Then} the only way for Bob to cheat with probability $1$ is to have complete information about $a$ after Alice's first message, implying the orthogonality condition 
\[ {\tr_{A_2 \times \cdots \times A_n}(\alpha_0) \perp  \tr_{A_2 \times \cdots \times A_n}(\alpha_1)}. \] 
This can only be the case when $\alpha_0 \perp \alpha_1$ and in this case, {as} we have argued before, that Alice must stick to her choice of $a$ after her first message. Since she has no information about $b$ before the start of the protocol, she can only cheat with probability $1/2$, a contradiction.  

Therefore, there is no classical {$\BCCF$-protocol} where both Alice and Bob can cheat perfectly, and hence no quantum protocol. Along with the fact that classical protocols are insecure, this proves the following theorem. 

\begin{theorem} \label{Thm:security}
In {every} quantum $\BCCF$-protocol, at most one party can cheat with probability $1$. In {every} classical $\BCCF$-protocol, exactly one party can cheat with probability $1$.
\end{theorem}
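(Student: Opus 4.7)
The plan is to split the claim into two implications: (i) in every classical $\BCCF$-protocol at least one party cheats perfectly, and (ii) in every quantum $\BCCF$-protocol (and hence every classical one) at most one party cheats perfectly. Together they give ``exactly one'' classically and ``at most one'' quantumly.

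For (i), I would appeal directly to Lemma~\ref{lemCPG} via Theorem~\ref{SCFTHEOREMc}. By Theorem~\ref{SCFTHEOREMc} there exists a classical $\BCCF$-point game pair with final point $\pgf{P_{\B,0}^*}{P_{\B,1}^*}{P_{\A,0}^*}{P_{\A,1}^*}$. Each of the two point games in the pair uses only point raisings, point merges, probability merges, and probability splits (no nontrivial point splits, by the definition of a classical $\BCCF$-point game). Hence Lemma~\ref{lemCPG} applies to each, giving $\max\{P_{\B,1}^*, P_{\A,0}^*\} \geq 1$ from one game and $\max\{P_{\B,0}^*, P_{\A,1}^*\} \geq 1$ from the other, so at least one party can cheat with probability $1$.

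For (ii), the strategy is the contradiction argument already outlined in the paragraph preceding the theorem; the task is to assemble it carefully. Suppose, for a contradiction, that \emph{both} parties can cheat with probability $1$ in some quantum $\BCCF$-protocol. The first ingredient is the universal bound
\[ P_{\A,0}^*,\ P_{\A,1}^* \;\leq\; \tfrac{1}{2} + \tfrac{1}{2}\Delta(\beta_0,\beta_1), \]
which holds for the quantum cheating SDPs because the dual feasible solution with $\tilde v_{a,y} = 1$ and $\tilde z_{n+1,x,y} = \tfrac{1}{2}\max_{a}\beta_{a,y}$ (constructed in the discussion of the ``$\alpha_0,\alpha_1,\beta_0,\beta_1>0$'' classical example) lies in the quantum dual feasible region as well, and its objective evaluates via Lemma~\ref{mn} to the stated quantity. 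Since $P_{\A,0}^*=P_{\A,1}^*=1$, this forces $\Delta(\beta_0,\beta_1)=1$, i.e.\ $\beta_0\perp\beta_1$.

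By the symmetric bound obtained from switching the roles of Alice and Bob (and applying it to only the first message), $P_{\B,0}^*=P_{\B,1}^*=1$ forces Bob's first-message information to fully determine $a$, which gives
\[ \tr_{A_2\times\cdots\times A_n}(\alpha_0) \;\perp\; \tr_{A_2\times\cdots\times A_n}(\alpha_1). \]
Because the marginals of $\alpha_0$ and $\alpha_1$ on $A_1$ have disjoint support, $\alpha_0\perp\alpha_1$. Now examine Alice's situation: since $\supp(\alpha_0)\cap\supp(\alpha_1)=\emptyset$, her value of $x$ at the outset already determines the unique bit $a$ she can reveal without being caught by Bob's final projection, so she is committed to $a$ before seeing any of Bob's messages. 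Her only freedom is the choice of the distribution used to produce $x$, which is independent of Bob's bit $b$; thus $P_{\A,0}^* \leq 1/2$, contradicting $P_{\A,0}^*=1$. The main obstacle is executing the last step (``Alice is committed'') rigorously at the level of the reduced LP/SDP rather than informally: I would do this by observing that any feasible $(s_1,\ldots,s_n,s)\in\calP_{\A}$ with nonzero objective for \emph{both} $c=0$ and $c=1$ would require $s^{(a,y)}$ to have support in both $\supp(\alpha_0)$ and $\supp(\alpha_1)$, which is precluded by $\alpha_0\perp\alpha_1$ together with the fidelity terms in Theorem~\ref{thm:reducedAlice}, yielding the required contradiction.
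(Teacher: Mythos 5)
Your proposal follows the paper's proof step-for-step: part (i) is exactly the paper's appeal to Lemma~\ref{lemCPG} through Theorem~\ref{SCFTHEOREMc}, and part (ii) reproduces the contradiction argument in Subsection~\ref{ssect:ClassicalSecurity}, including the universal bound $P_{\A,c}^* \leq \frac{1}{2} + \frac{1}{2}\Delta(\beta_0,\beta_1)$ (which the paper also obtains from a dual/point-game solution valid for any parameters), the deduction $\beta_0 \perp \beta_1$, the marginal orthogonality of the $\alpha_a$, and the conclusion that Alice is committed after her first message and hence limited to $1/2$. Your suggested rigorization of the final step via the supports of $s^{(a,y)}$ in $\calP_\A$ is a reasonable fleshing-out of the paper's informal sketch, but the overall route and key ingredients are the same.
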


\newpage 
\section{Using classical protocols to lower bound the quantum bias} \label{LB}

In this section, we prove that no quantum $\BCCF$-protocol can have bias $\eps = 1/\sqrt{2} - 1/2$. More specifically, we prove that only protocols that share optimal cheating probabilities with their classical counterpart can saturate Kitaev's lower bound on the product of the cheating probabilities. This shows yet another connection between quantum and classical $\BCCF$-protocols. 

We start with rederiving Kitaev's lower bound using the reduced SDPs. The duals of Bob's and Alice's reduced SDPs, each for {forcing outcome $0$, are} 
\[ \begin{array}{rrcclrrrcll}
\textrm{} & \inf                         & \tr_{A_1}(w_1) & & 
& \quad  
& \inf                         & z_1 \\

                     & \textup{s.t.} & w_1 \otimes e_{B_1} & \geq & \tr_{A_2}(w_2), 
& \quad 
& \textup{s.t.} & z_1 \cdot e_{A_1} & \geq & \tr_{B_1}(z_2), \\

                     & & w_2 \otimes e_{B_2} & \geq & \tr_{A_3}(w_3), 
& \quad 
&                               & z_2 \otimes e_{A_2} & \geq & \tr_{B_2}(z_3), \\

                     & & & \vdots 
& \quad 
& & & & \vdots \\

                     &                               & w_{n} \otimes e_{B_{n}} & \geq & \half \sum_{a \in \zo} \alpha_a \otimes v_a,
& \quad 
&                               & z_n \otimes e_{A_n} & \geq & \tr_{B_n}(z_{n+1}), \\

                     & & \Diag(v_a) & \succeq & \sqrtt{\beta_a}, \;\; \forall a, 
& \quad 
& & \Diag(z_{n+1}^{(y)}) & \succeq & \half \beta_{a,y} \sqrtt{\alpha_a}, \;\; \forall a, y, 
\end{array} \] 
respectively. 
Let $(w_1, \ldots, w_n, v_0, v_1)$ be optimal for Bob's dual above and let $(z_1, \ldots, z_{n+1})$ be optimal for Alice's dual above. We have 
\[ P_{\B,0}^* P_{\A,0}^*
= \tr_{A_1}(w_1) \, z_1 
= \inner{\tr_{A_1}(w_1)}{z_1} 
= \inner{w_1}{z_1 \cdot e_{A_1}} 
\geq \inner{w_1}{\tr_{B_1}(z_2)} 
= \inner{w_1 \otimes e_{B_1}}{z_2}. 
\] 
In a similar manner as was done in Section~\ref{BCCFPG}, we can alternate through most of the vector inequality dual constraints to show that 
\[ P_{\B,0}^* P_{\A,0}^* \geq \inner{w_n \otimes e_{B_n}}{z_{n+1}}. 
\] 
We bound the quantity $\inner{w_n \otimes e_{B_n}}{z_{n+1}}$ using the rest of the dual constraints, albeit in a slightly different manner. We decompose $z_{n+1} = \sum_{y \in B} z_{n+1}^{(y)} \otimes e_y$ and use the rest of the dual constraints to get 
\begin{eqnarray*}
\inner{w_n \otimes e_{B_n}}{z_{n+1}}
& \geq & \half \sum_{a \in \zo} \sum_{y \in B} \inner{\alpha_a \otimes v_a}{z_{n+1}^{(y)} \otimes e_y} \\
& = & \half \sum_{a \in \zo} \sum_{y \in B} \inner{\sqrtt{\alpha_a}}{\Diag(z_{n+1}^{(y)})} \inner{\Diag(v_a)}{e_y e_y^{\T}} \\ 
& \geq & 
\half \sum_{a \in \zo} \sum_{y \in B} \inner{\sqrtt{\alpha_a}}{\half \beta_{a,y} \sqrtt{\alpha_a}} \inner{\Diag(v_a)}{e_y e_y^{\T}} \\ 
& = & \quarter \sum_{a \in \zo} \inner{\Diag(v_a)}{\Diag(\beta_a)} \\
& = & \quarter \sum_{a \in \zo} \inner{\Diag(v_a)}
{\sqrtt{\beta_a}} \\
& \geq & \quarter \sum_{a \in \zo} \inner{\sqrtt{\beta_a}}
{\sqrtt{\beta_a}} \\
& = & \half.
\end{eqnarray*}
Therefore, we get Kitaev's lower bound $P_{\A,0}^* P_{\B,0}^* \geq 1/2$ implying that $P_{\A,0}^* \geq 1/\sqrt{2}$ or $P_{\B,0}^* \geq 1/\sqrt{2}$. We get the inequality $P_{\A,1}^* P_{\B,1}^* \geq 1/2$ by switching $\beta_0$ with $\beta_1$ in the proof above (and the dual variables accordingly). 

Using these two lower bounds, we show that it is impossible to have a quantum $\BCCF$-protocol with bias $\eps = 1/\sqrt{2} - 1/2$ by proving Kitaev's bounds can only be saturated with protocols where one party can cheat perfectly. More specifically, we show that if there exist four dual solutions that saturate both of Kitaev's bounds
\[ P_{\A,0}^* P_{\B,0}^* \geq 1/2 \aand P_{\A,1}^* P_{\B,1}^* \geq 1/2, \]
{then all four of the dual solutions must also be feasible in the duals of the classical versions.} 
  
\begin{theorem} \label{Kitproof}
Suppose a quantum $\BCCF$-protocol satisfies ${P_{\A,0}^* \, P_{\B,0}^* = 1/2}$ and ${P_{\A,1}^* \, P_{\B,1}^* = 1/2}$. Then the cheating probabilities are the same as in the corresponding classical protocol {defined on the same parameters}.
\end{theorem}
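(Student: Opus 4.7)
The plan is to exploit the tightness in the reproduced proof of Kitaev's bound: if $P_{\B,0}^* P_{\A,0}^* = 1/2$, then at any pair of optimal dual solutions every inequality in the chain must hold with equality, and these equality conditions will force the optimal quantum dual variables to values that are also feasible for the duals of the classical cheating LPs. Combined with the easy inequality $P_{\B,c}^* \leq \tilde{P}_{\B,c}^*$ (and analogously for Alice), which follows from the observation that any classical dual solution is quantum dual feasible since $\Diag(e_{\supp(\beta_a)}) \succeq \sqrtt{\beta_a}$ and $\half \beta_{a,y}\Diag(e_{\supp(\alpha_a)}) \succeq \half \beta_{a,y}\sqrtt{\alpha_a}$, this will give the reverse inequality and hence equality of the cheating probabilities.

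The key step is the complementary-slackness deduction. Fix optimal dual solutions $(w_1, \ldots, w_n, v_0, v_1)$ and $(z_1, \ldots, z_{n+1})$ in the two reduced quantum duals. Equality in the step $\inner{\Diag(v_a)}{\sqrtt{\beta_a}} \geq \inner{\sqrtt{\beta_a}}{\sqrtt{\beta_a}}$ together with $\Diag(v_a) \succeq \sqrtt{\beta_a}$ gives $\inner{\Diag(v_a) - \sqrtt{\beta_a}}{\sqrtt{\beta_a}} = 0$. Since both matrices are positive semidefinite, their product vanishes; post-multiplying by $\sqrt{\beta_a}$ and using $\sqrt{\beta_a}^{\T}\sqrt{\beta_a} = 1$ yields $\Diag(v_a)\sqrt{\beta_a} = \sqrt{\beta_a}$, that is, $v_{a,y} = 1$ for every $y \in \supp(\beta_a)$. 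Hence $\Diag(v_a) \succeq \Diag(e_{\supp(\beta_a)})$, matching the classical dual constraint. The analogous argument at the step $\inner{\sqrtt{\alpha_a}}{\Diag(z_{n+1}^{(y)})} \geq \half \beta_{a,y}$ gives $z_{n+1,x,y} = \half \beta_{a,y}$ for every $x \in \supp(\alpha_a)$ and every $y \in \supp(\beta_a)$; for the remaining $y$, nonnegativity follows from $\Diag(z_{n+1}^{(y)}) \succeq 0$, so overall $\Diag(z_{n+1}^{(y)}) \succeq \half \beta_{a,y}\Diag(e_{\supp(\alpha_a)})$.

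Since all remaining dual constraints chaining $w_j$ and $z_j$ are vector inequalities shared between the quantum and classical formulations, the quantum optimal dual tuples are feasible for the classical duals. Weak duality for the classical LPs then gives $\tilde{P}_{\B,0}^* \leq \tr_{A_1}(w_1) = P_{\B,0}^*$ and $\tilde{P}_{\A,0}^* \leq z_1 = P_{\A,0}^*$, and together with the reverse inequalities we obtain $P_{\B,0}^* = \tilde{P}_{\B,0}^*$ and $P_{\A,0}^* = \tilde{P}_{\A,0}^*$. Repeating the argument with $\beta_0$ and $\beta_1$ swapped and invoking the second Kitaev saturation $P_{\A,1}^* P_{\B,1}^* = 1/2$ yields equality of the outcome-$1$ cheating probabilities. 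The main obstacle is handling the complementary-slackness step cleanly: one must verify that the equality of inner products extracted from a single step of Kitaev's chain really does collapse to the pointwise identities on the supports that match the classical LP constraints, rather than some weaker SDP-level conditions.
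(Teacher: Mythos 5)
Your proof is correct and follows essentially the same approach as the paper: saturate every inequality in the rederived Kitaev chain, deduce that the optimal quantum dual solutions must satisfy the stronger classical dual constraints on the relevant supports, and conclude by weak duality together with the easy containment of classical dual feasible regions inside quantum ones. The only difference is cosmetic — at the complementary-slackness step you invoke the fact that two positive semidefinite matrices with zero inner product have zero product, whereas the paper instead proves uniqueness of the minimizer $v_a = e_{\supp(\beta_a)}$ via strict convexity of $\sum_y \beta_{a,y}/v_{a,y}$; both routes yield the pointwise identity $v_{a,y}=1$ on $\supp(\beta_a)$ and the analogous identity for $z_{n+1}$.
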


\begin{proof} 
{By looking at the proof of Kitaev's bound above, we see} that if it were saturated, then every inequality must hold with equality. {Therefore, we know $\Diag(v_a) \succeq \sqrtt{\beta_a}$ 
has no slack on the subspace spanned by $\sqrtt{\beta_a}$, i.e., 
$
\inner{\Diag(v_a) - \sqrtt{\beta_a}}{\sqrtt{\beta_a}} = 0$, {or equivalently,} $\inner{\Diag(v_{a})}{\sqrtt{\beta_a}} = 1$,  
for both $a \in \zo$}. Consider $v_{a} = e_{\supp(\beta_a)}$ which satisfies {$\Diag(v_a) \succeq \sqrtt{\beta_a}$} and the condition {${\inner{\Diag(v_{a})}{\sqrtt{\beta_a}} = 1}$}. We show this choice is unique (on $\supp(\beta_a)$). Consider the optimization problem 
\begin{eqnarray*}
& & \inf \left\{ \inner{\Diag(v_a)}{\sqrtt{\beta_a}} : \Diag(v_a) \succeq \sqrtt{\beta_a} \right\} \\
& = & \inf \left\{ \sum_{y \in \supp(\beta_a)} v_{a,y} \beta_{a,y} : \sum_{y \in \supp(\beta_a)} \frac{\beta_{a,y}}{v_{a,y}} \leq 1, \; v_{a,y} > 0 \right\}.
\end{eqnarray*}
Obviously $v_{a} = e_{\supp(\beta_a)}$ is an optimal solution since $1$ is a lower bound on the optimal objective value. Suppose there are two different optimal solutions $v'$ and $v''$. {Notice that} $\half v' + \half v''$ has the same objective value, but satisfies the constraint
$\sum_{y \in \supp(\beta_a)} \frac{\beta_{a,y}}{v_{a,y}} \leq 1$ 
with strict inequality since the function $\sum_{y \in \supp(\beta_a)} \frac{\beta_{a,y}}{v_{a,y}}$ is strictly convex. Thus, we can scale $\half v' + \half v''$ to get a better objective function value, a contradiction.
Therefore, if Kitaev's bound is saturated, we must have $v_{a,y} = 1$ for all $a \in \zo$, $y \in \supp(\beta_a)$.

We argue the same about Alice's dual variables $z_{n+1}^{(y)}$.
If Kitaev's inequalities are saturated, we have 
{$\inner{\sqrtt{\alpha_a}}{\Diag(z_{n+1}^{(y)}) - \half \beta_{a,y} \sqrtt{\alpha_a}} = 0$, or just, $\inner{\sqrtt{\alpha_a}}{\Diag(z_{n+1}^{(y)})} = \half \beta_{a,y}$, 
for all $a,y$ such that $v_{a,y} > 0$, i.e., for all $y \in \supp(\beta_{a})$}. 
Similar to the arguments above, we need $[z_{n+1}^{(y)}]_x = \half \beta_{a,y}$ for $a \in \zo$, $x \in \supp(\alpha_a)$, and $y \in \supp(\beta_a)$. 

To summarize, if we have Kitaev's bounds saturated, then the optimal dual solutions satisfy $\Diag(v_a) \succeq \Diag(e_{\supp(\beta_a)})$ and $\Diag(z_{n+1}^{(y)}) \succeq \half \beta_{a,y} \, \Diag(e_{\supp(\alpha_a)})$, for all $a \in \zo, y \in B$, which are exactly the constraints in the dual {LPs} for the classical version. {Therefore, the protocol must have the property that relaxing the cheat detections in $\Pi_{\A,0}$ and $\Pi_{\B,0}$ (obtaining the classical cheat detections) preserves the two cheating probabilities}. We can repeat the same argument with Alice and Bob cheating towards $1$ and get the two corresponding classical cheating probabilities. Therefore, we have all four cheating probabilities are equal to those of the corresponding classical protocol, as desired. \qed
\end{proof} 
 
Since every classical protocol allows one party to cheat perfectly, we obtain Corollary~\ref{CorSec}, that $\eps = 1/\sqrt{2} - 1/2$ is impossible for any $\BCCF$-protocol. 

The proof of Theorem~\ref{Kitproof} gives necessary conditions on classical protocols that saturate Kitaev's bound. Note from the condition on $z_{n+1}^{(y)}$, we have $[z_{n+1}]_{x,y} = \half \beta_{a,y}$ when $\beta_{a,y} , \alpha_{a,x} > 0$. 
In the case when $\alpha_0, \alpha_1, \beta_0, \beta_1 > 0$, then $\beta_0$ must equal $\beta_1$. This makes sense since Bob can easily cheat with probability $1$, but if $\beta_0 \neq \beta_1$, then Alice could cheat with probability greater than $1/2$. In the case when $\alpha_0 \perp \alpha_1$, the condition above tells us nothing, but it is easy to see that Alice fully reveals $a$ in the first message, thus she can cheat with probability $1/2$ and Bob can cheat with probability $1$.

\section{Conclusions}
\label{conc}

We studied the security of quantum coin-flipping protocols based on bit-commitment utilizing SDP formulations of cheating strategies. These SDPs allowed us to use concepts from convex optimization to further our understanding of the security of such protocols. In particular, using a reduction of the SDPs and duality theory, we were able to find the classical protocol counterpart and develop a family of point games corresponding to each of the classical and quantum protocols.

Using the connections between classical and quantum $\BCCF$-protocols, we were able to show that a bias of $\eps = 1/\sqrt{2} - 1/2$ is impossible for $\BCCF$-protocols using a modified {proof} of Kitaev's lower bound.

An open problem is to find the optimal cheating strategies for a general $n$-round $\BCCF$-protocol. This can be accomplished by finding closed-form optimal solutions to the cheating SDPs or the reduced cheating SDPs. Very few highly interactive protocols, such as $\BCCF$-protocols, have descriptions of optimal cheating strategies and therefore having such for this family of protocols would be very interesting.

A benefit of knowing the optimal strategies would be to help resolve the problem of finding the smallest bias for $\BCCF$-protocols. In~\cite{NST14}, we analyzed $\BCCF$-protocols from a computational perspective. We computationally checked the bias of over $10^{16}$ four and six-round $\BCCF$-protocols and based on the findings we conjecture that having all four cheating probabilities strictly less than $3/4$ is impossible.

A related open problem is to find an explicit construction of optimal protocols for coin-flipping and bit-commitment. We can accomplish both of these tasks by finding {an explicit} construction of optimal weak coin-flipping protocols (see~\cite{CK09, CK11}), so this would be very rewarding. Technically, such a construction is implicit in~\cite{Moc07}, however it involves many reductions and is quite complicated.
   
\section*{Acknowledgements}
We thank Andrew Childs, Michele Mosca, Peter H\o yer, and John Watrous for their comments and suggestions. A.N.'s research is supported in part by NSERC Canada, CIFAR, ERA (Ontario), QuantumWorks, and MITACS. A part of this work was completed
at Perimeter Institute for Theoretical Physics. Perimeter Institute
is supported in part by the Government of Canada
through Industry Canada and by the Province of Ontario through MRI.
J.S.'s research is supported by NSERC Canada, MITACS, and ERA (Ontario).
L.T.'s research is supported in part by Discovery Grants from NSERC. 

Research at the Centre for Quantum Technologies at the National University of Singapore is partially funded by the Singapore Ministry of Education and the National Research Foundation, also through the Tier 3 Grant ``Random numbers from quantum processes'', (MOE2012-T3-1-009). 
   
\nocite{AharonovCGKM14}

\bibliographystyle{alpha}
\bibliography{CCF_bib}

\newcommand{\etalchar}[1]{$^{#1}$}
\begin{thebibliography}{ACG{\etalchar{+}}14}

\bibitem[ACG{\etalchar{+}}14]{AharonovCGKM14}
Dorit Aharonov, Andr{\'e} Chailloux, Maor Ganz, Iordanis Kerenidis, and
  Lo{\"\i}ck Magnin.
\newblock A simpler proof of existence of quantum weak coin flipping with
  arbitrarily small bias.
\newblock Available as arXiv.org e-Print quant-ph/1402.7166, 2014.

\bibitem[AG03]{AG03}
Farid Alizadeh and Donald Goldfarb.
\newblock Second-order cone programming.
\newblock {\em Mathematical Programming}, 95:3--51, 2003.

\bibitem[Alb83]{A83}
Peter~M. Alberti.
\newblock A note on the transition probability over {$C^*$}-algebras.
\newblock {\em Letters in Mathematical Physics}, 7(1):25--32, 1983.

\bibitem[Amb01]{Amb01}
Andris Ambainis.
\newblock A new protocol and lower bounds for quantum coin flipping.
\newblock In {\em Proceedings of 33rd Annual ACM Symposium on the Theory of
  Computing}, pages 134 -- 142. ACM, 2001.

\bibitem[ATVY00]{ATVY00}
Dorit Aharonov, Amnon {Ta-Shma}, Umesh Vazirani, and Andrew Chi-Chih Yao.
\newblock Quantum bit escrow.
\newblock In {\em Proceedings of 32nd Annual ACM Symposium on the Theory of
  Computing}, pages 705--714. ACM, 2000.

\bibitem[BB84]{BB84}
Charles Bennett and Gilles Brassard.
\newblock Quantum cryptography: Public key distribution and coin tossing.
\newblock In {\em Proceedings of the IEEE International Conference on
  Computers, Systems, and Signal Processing}, pages 175--179. IEEE Computer
  Society, 1984.

\bibitem[Blu81]{Blu81}
Manuel Blum.
\newblock Coin flipping by telephone.
\newblock In Allen Gersho, editor, {\em Advances in Cryptology: A Report on
  CRYPTO 81, CRYPTO 81, IEEE Workshop on Communications Security, Santa
  Barbara, California, USA, August 24-26, 1981}, pages 11--15. U. C. Santa
  Barbara, Dept. of Elec. and Computer Eng., ECE Report No. 82-04, 1982, 1981.

\bibitem[CK09]{CK09}
Andr{\'e} Chailloux and Iordanis Kerenidis.
\newblock Optimal quantum strong coin flipping.
\newblock In {\em Proceedings of 50th IEEE Symposium on Foundations of Computer
  Science}, pages 527--533. IEEE Computer Society, 2009.

\bibitem[CK11]{CK11}
Andr{\'e} Chailloux and Iordanis Kerenidis.
\newblock Optimal bounds for quantum bit commitment.
\newblock In {\em Proceedings of the 52nd Annual IEEE Symposium on Foundations
  of Computer Science}, pages 354--362. IEEE Computer Society Press, October
  2011.

\bibitem[FMP{\etalchar{+}}12]{FMPTW2012}
Samuel Fiorini, Serge Massar, Sebastian Pokutta, Hans~Raj Tiwary, and Ronald
  de~Wolf.
\newblock Linear vs. semidefinite extended formulations: exponential separation
  and strong lower bounds.
\newblock In {\em {P}roceedings of the 2012 {ACM} {S}ymposium on {T}heory of
  {C}omputing}, pages 95--106. ACM, New York, 2012.

\bibitem[GW07]{GW07}
Gus Gutoski and John Watrous.
\newblock Toward a general theory of quantum games.
\newblock In {\em Proceedings of the Thirty-Ninth Annual ACM Symposium on
  Theory of Computing}, pages 565--574, New York, NY, USA, 2007. ACM.

\bibitem[Kit02]{Kit03}
Alexei Kitaev.
\newblock Quantum coin-flipping.
\newblock Unpublished result. Talk in the 6th Annual workshop on Quantum
  Information Processing, QIP 2003, Berkeley, CA, USA, December 2002, 2002.

\bibitem[KN04]{KN04}
Iordanis Kerenidis and Ashwin Nayak.
\newblock Weak coin flipping with small bias.
\newblock {\em Information Processing Letters}, 89(3):131--135, 2004.

\bibitem[LC97]{LC97}
Hoi-Kwong Lo and Hoi~Fung Chau.
\newblock Is quantum bit commitment really possible?
\newblock {\em Physical Review Letters}, 78(17):3410--3413, 1997.

\bibitem[LC99]{LC99}
Hoi-Kwong Lo and Hoi~Fung Chau.
\newblock Unconditional security of quantum key distribution over arbitrarily
  long distances.
\newblock {\em Science}, 283:2050--2056, 1999.

\bibitem[May97]{May97}
Dominic Mayers.
\newblock Unconditionally secure quantum bit commitment is impossible.
\newblock {\em Physical Review Letters}, 78(17):3414--3417, 1997.

\bibitem[May01]{M01}
Dominic Mayers.
\newblock Unconditional security in quantum cryptography.
\newblock {\em Journal of the ACM}, 48(3):351--406, 2001.

\bibitem[Mit03]{Mit03}
Hans~D. Mittelmann.
\newblock An independent benchmarking of {SDP} and {SOCP} solvers.
  {C}omputational semidefinite and second order cone programming: the state of
  the art.
\newblock {\em Mathematical Programming}, 95(2):407--430, 2003.

\bibitem[Moc05]{Moc05}
Carlos Mochon.
\newblock A large family of quantum weak coin-flipping protocols.
\newblock {\em Physical Review A}, 72(2):022341, 2005.

\bibitem[Moc07]{Moc07}
Carlos Mochon.
\newblock Quantum weak coin flipping with arbitrarily small bias.
\newblock Available as arXiv.org e-Print quant-ph/0711.4114, 2007.

\bibitem[MVW12]{MVW12}
Abel Molina, Thomas Vidick, and John Watrous.
\newblock Optimal counterfeiting attacks and generalizations for {W}iesner's
  quantum money.
\newblock In {\em Proceedings of the 7th Conference on Theory of Quantum
  Computation, Communication, and Cryptography}, pages 45--64, 2012.

\bibitem[NN94]{NN94}
Yurii Nesterov and Arkadi Nemirovski.
\newblock {\em Interior-Point Polynomial Algorithms in Convex Programming}.
\newblock Society for Industrial and Applied Mathematics, 1994.

\bibitem[NST14]{NST14}
Ashwin Nayak, Jamie Sikora, and Levent T{un\c cel}.
\newblock A search for quantum coin-flipping protocols using optimization
  techniques.
\newblock Available as arXiv.org e-Print math.OC/1403.0505, 2014.

\bibitem[PS00]{PS00}
John Preskill and Peter~W. Shor.
\newblock Simple proof of security of the {BB84} quantum key distribution
  protocol.
\newblock {\em Physical Review Letters}, 85(2):441--444, 2000.

\bibitem[Sik12]{SikoraPHD12}
Jamie Sikora.
\newblock {\em Analyzing Quantum Cryptographic Protocols Using Optimization
  Techniques}.
\newblock PhD thesis, University of Waterloo, 2012.

\bibitem[SR01]{SR01}
Robert~W. Spekkens and Terence Rudolph.
\newblock Degrees of concealment and bindingness in quantum bit commitment
  protocols.
\newblock {\em Physical Review A}, 65:012310, 2001.

\bibitem[Stu99]{Stu99}
Jos~F. Sturm.
\newblock Using {SeDuMi} 1.02, a {MATLAB} toolbox for optimization over
  symmetric cones.
\newblock {\em Optimization Methods and Software}, 11:625--653, 1999.

\bibitem[Stu02]{Stu02}
Jos~F. Sturm.
\newblock Implementation of interior point methods for mixed semidefinite and
  second order cone optimization problems.
\newblock {\em Optimization Methods and Software}, 17(6):1105--1154, 2002.

\bibitem[TW12]{TW08}
Levent Tun{\c c}el and Henry Wolkowicz.
\newblock Strong duality and minimal representations for cone optimization.
\newblock {\em Computational Optimization and Applications}, pages 1--30, 2012.

\bibitem[Wie83]{Wiesner83}
Stephen Wiesner.
\newblock Conjugate coding.
\newblock {\em SIGACT News}, 15(1):78--88, January 1983.

\bibitem[WSV00]{SDP}
Henry Wolkowicz, Romesh Saigal, and Lieven Vandenberghe, editors.
\newblock {\em Handbook of Semidefinite Programming}.
\newblock Kluwer Academic Publishers, 2000.

\bibitem[Yao79]{Yao79}
Andrew Chi-Chih Yao.
\newblock Some complexity questions related to distributive computing.
\newblock In {\em Proceedings of the Eleventh Annual ACM Symposium on Theory of
  Computing}, STOC '79, pages 209--213, New York, NY, USA, 1979. ACM.

\bibitem[Yao93]{Yao93}
Andrew Chi-Chih Yao.
\newblock Quantum circuit complexity.
\newblock In {\em Proceedings of the 34th Annual IEEE Symposium on Foundations
  of Computer Science}, pages 352--361, Los Alamitos, CA, USA, 1993. IEEE
  Computer Society Press.

\end{thebibliography}

\appendix 
 
\newpage  
\section{Coin-flipping and Kitaev's protocol and point game formalisms} \label{CFSDP}
 
Kitaev developed point games from his SDP formulation of cheating strategies for coin-flipping protocols. Here, we review the construction in \cite{Moc07}, see also \cite{AharonovCGKM14}.

We start with a general setting for a coin-flipping protocol. This setting has a space devoted for messages and each message has the same dimension. This is done for convenience as it makes the analysis in this section  simpler.

A coin-flipping protocol can be described by the following parameters:
\begin{itemize}
\item The number of messages, denoted here as $n$. We can assume $n$ is even,
\item three Hilbert spaces: Alice's private space $\C^A$, a message space $\C^M$, and Bob's private space $\C^B$,
\item a set of unitaries $\set{U_{\A,1}, U_{\A,3}, \ldots, U_{\A,n-1}}$ acting on $\C^{A \times M}$. These correspond to Alice's messages to Bob,
\item a set of unitaries $\set{U_{\B,2}, U_{\B,4}, \ldots, U_{\B,n}}$ acting on $\C^{M \times B}$. These correspond to Bob's messages to Alice,
\item a projective measurement on $\C^A$ for Alice $(\Pi_{\A,0}, \Pi_{\A,1}, \Pi_{\A, \textup{abort}})$ determining Alice's protocol outcome,
\item a projective measurement on $\C^B$ for Bob $(\Pi_{\B,0}, \Pi_{\B,1}, \Pi_{\B,\textup{abort}})$ determining Bob's protocol outcome.
\end{itemize}

The protocol proceeds as follows. Alice initializes the space $\C^A$ to $\ket{\psi_{A,0}}$ and Bob initializes $\C^{M \times B}$ to $\ket{\psi_{M,0}}_{M} \ket{\psi_{B,0}}_{B}$ and sends $\C^M$ to Alice. Then Alice applies her first unitary $U_{\A,1}$ and sends $\C^M$ to Bob. Then he applies his first unitary $U_{\B,2}$ and returns $\C^M$ to Alice. They repeat this until Bob applies his last unitary $U_{\B,n}$. Then they both measure their private spaces to get the outcome of the protocol. This process is depicted in Figure~\ref{protocol} for the case of $n=4$.
 
\begin{figure}[ht]
  \centering
   \includegraphics[width=4in]{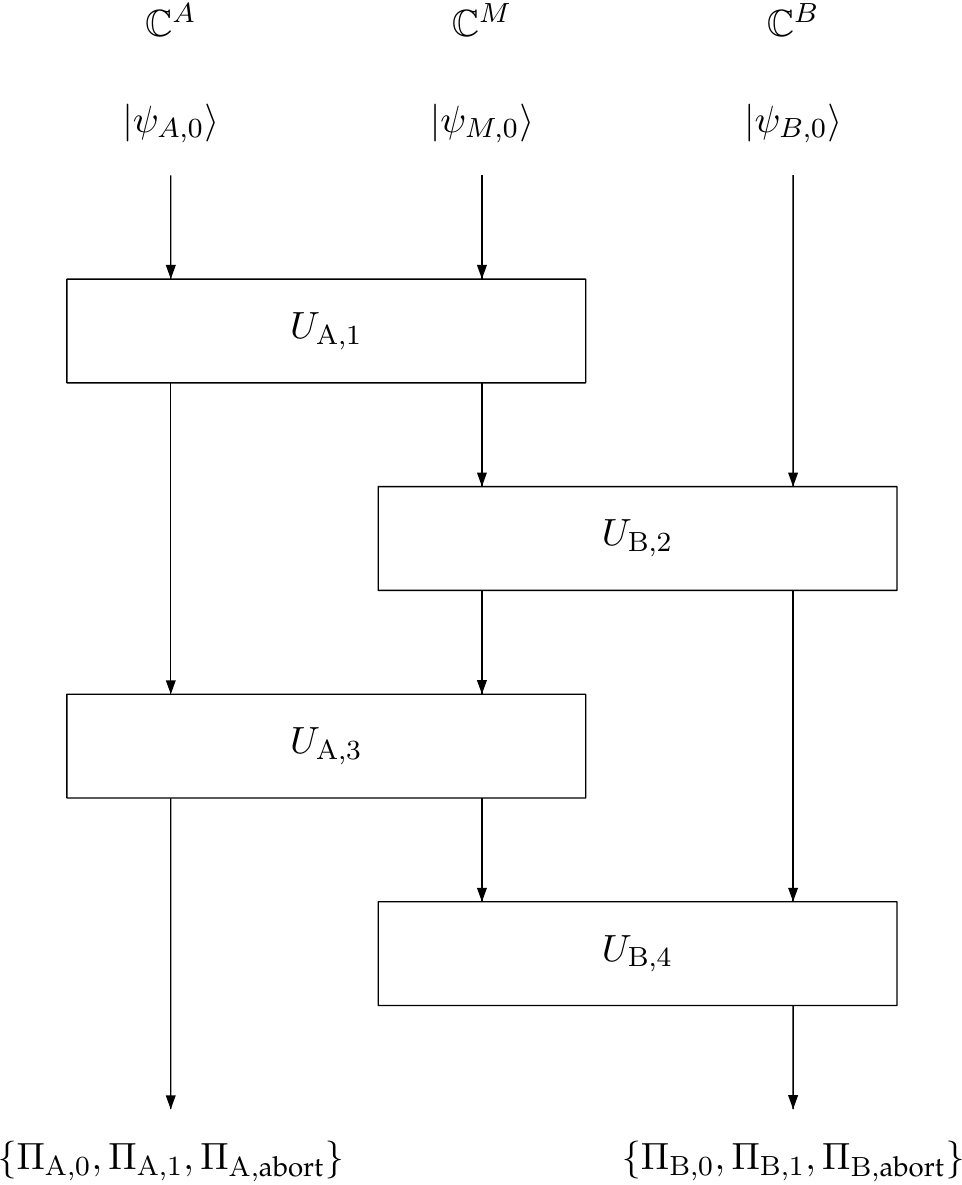} 
\caption{Four-round coin-flipping protocol.}
\label{protocol}
\end{figure}

The protocol parameters must satisfy the requirements: 
\begin{enumerate}
\item Alice and Bob do not abort when both are honest.
\item They output the same bit when they are honest, and that bit is randomly generated.
\end{enumerate}
If we let $\ket{\psi_n} \in \C^{A \times M \times B}$ be the state at the end of the protocol when Alice and Bob are honest, both requirements are satisfied when
\begin{equation}
\inner{\Pi_{\A,0} \otimes \id_M \otimes \Pi_{\B,0}}{\kb{\psi_n}} = \inner{\Pi_{\A,1} \otimes \id_M \otimes \Pi_{\B,1}}{\kb{\psi_n}} = \dfrac{1}{2}. \label{CFcondition}
\end{equation}

\subsection{Cheating SDPs}

We can calculate the extent cheating Bob can force honest Alice to output a fixed desired outcome, say $c \in \zo$, by solving the following SDP:
\[ \begin{array}{rrrcllllllllllllll}
& P_{\B,c}^* = \max                         & \inner{\Pi_{\A,c}}{\rho_{A,n}} \\
                     & \textrm{subject to} & \rho_{A,0} & = & \kb{\psi_{A,0}}, \\
                     & & \rho_{A,i} & = & \rho_{A, i-1}, & \textup{ for all } i \textup{ even}, \\
&                               & \tr_{M} (\trho_{A,i}) & = & \rho_{A,i}, & \textup{ for all } i \textup{ even}, \\
& & \rho_{A,i} & = & \tr_{M} \left( U_{\A,i} \trho_{A,i-1} U_{\A,i}^* \right), & \textup{ for all } i \textup{ odd}, \\
& & \rho_{A,i} & \in & \pos^A, & \textup{ for all } i, \\
& & \trho_{A,i} & \in & \pos^{A \times M}, & \textup{ for all } i \textup{ even}. \\
\end{array} \]
The variables describe the parts of the quantum state under Alice's control during different times in the protocol as depicted in Figure~\ref{fig:protocolwithvariables}. The constraints model how much cheating Bob can change the current state of the protocol in each message and the objective function is the probability Alice accepts outcome $c \in \{ 0, 1 \}$ by measuring the state she has at the end of the protocol.

We get a very similar SDP for cheating Alice by switching the projections and interchanging the ``odd'' constraints with the ``even'' ones:
\[ \begin{array}{rrrcllllllllllllll}
& P_{\A,c}^* = \max                         & \inner{\Pi_{\B,c}}{\rho_{B,n}} \\
                     & \textrm{subject to} & \rho_{B,0} & = & \kb{\psi_{B,0}}, \\
                     & & \rho_{B,i} & = & \rho_{B, i-1}, & \textup{ for all } i \textup{ odd}, \\
&                               & \tr_{M} (\trho_{B,i}) & = & \rho_{B,i}, & \textup{ for all } i \textup{ odd}, \\
& & \rho_{B,i} & = & \tr_{M} \left( U_{\B,i} \trho_{B,i-1} U_{\B,i}^* \right), & \textup{ for all } i \textup{ even}, \\
& & \rho_{B,i} & \in & \pos^B, & \textup{ for all } i, \\
& & \trho_{B,i} & \in & \pos^{M \times B}, & \textup{ for all } i \textup{ odd}. \\
\end{array} \]
The variables for a cheating Alice are also depicted in Figure~\ref{fig:protocolwithvariables}. These SDPs are referred to as Alice and Bob's cheating SDPs.
 
\begin{figure}[ht]
  \centering
   \includegraphics[width=5in]{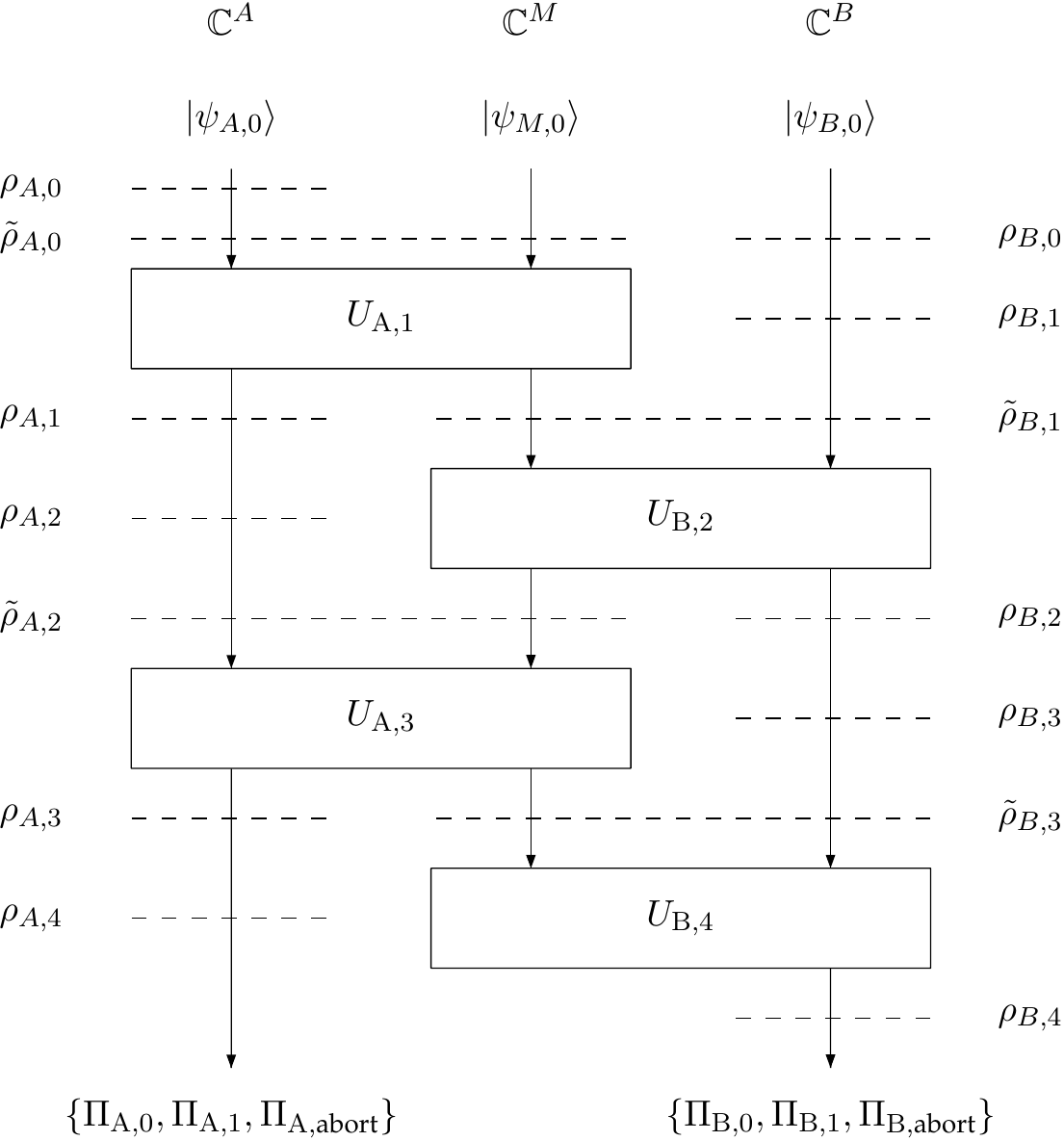} \caption{Four-round coin-flipping protocol with SDP variables depicted.}
\label{fig:protocolwithvariables}
\end{figure}

The duals of the above SDPs are as follows:
\[ \begin{array}{rrrcllllllllllllll}
\textrm{} & \inf    &      \inner{Z_{A,0}}{\kb{\psi_{A,0}}} &  \\
                     & \textrm{subject to} & Z_{A,i-1} \otimes \id_{M} & \succeq & U_{\A,i}^* (Z_{A,i} \otimes \id_{M}) U_{\A,i}, & \textup{ for all } i \textup{ odd}, \\
                     &                               & Z_{A,i-1} & = & Z_{A,i}, & \textup{ for all } i \textup{ even}, \\
                     &                               & Z_{A,n} & = & \Pi_{\A,c},
\end{array} \]
and
\[ \begin{array}{rrrcllllllllllllll}
\textrm{} & \inf    &      \inner{Z_{B,0}}{\kb{\psi_{B,0}}} &  \\
                     & \textrm{subject to} & Z_{B,i-1} \otimes \id_{M} & \succeq & U_{\B,i}^* (Z_{B,i} \otimes \id_{M}) U_{\B,i}, & \textup{ for all } i \textup{ even}, \\
                     &                               & Z_{B,i-1} & = & Z_{B,i}, & \textup{ for all } i \textup{ odd}, \\
                     &                               & Z_{B,n} & = & \Pi_{\B,c}.
\end{array} \]
 
We can derive a lower bound on the bias of any strong coin-flipping protocol by examining feasible dual solutions. Since the dual SDPs have strictly feasible solutions and the objective function is bounded on the feasible region, there is zero duality gap. Therefore, for Alice and Bob forcing outcome $0$, and for any $\delta > 0$, we can find feasible dual solutions $(Z_{B,0}, \ldots, Z_{B,n})$ and $(Z_{A,0}, \ldots, Z_{A,n})$, such that
\[ P_{\A,0}^* + \delta > \inner{Z_{B,0}}{\kb{\psi_{B,0}}} \aand P_{\B,0}^* + \delta > \inner{Z_{A,0}}{\kb{\psi_{A,0}}}. \]
Therefore, we have
\begin{eqnarray*}
\left( P_{\B,0}^* + \delta \right) \left( P_{\A,0}^* + \delta \right)
& > & \inner{Z_{B,0}}{\kb{\psi_{B,0}}} \inner{Z_{A,0}}{\kb{\psi_{A,0}}} \\
& = & \inner{Z_{A,0} \otimes \id_M \otimes Z_{B,0}}{\kb{\psi_{A,0}} \otimes \kb{\psi_{M,0}} \otimes \kb{\psi_{B,0}}} \\
& = & \inner{Z_{A,0} \otimes \id_M \otimes Z_{B,0}}{\kb{\psi_0}},
\end{eqnarray*} 
where we define $\ket{\psi_0}$ {to be the state at the beginning of the protocol when Alice and Bob are honest. Let $\ket{\psi_i}$} be the state after Bob applies $U_{\B,i}$ in an honest run of the protocol for $i$ even. We have from the dual constraints, for $i$ even,  
\begin{eqnarray*}
& & 
\inner{Z_{A,i} \otimes \id_M \otimes Z_{B,i}}{\kb{\psi_{i}}} \\ 
& \geq & \inner{U_{\A,i+1}^*(Z_{A,i+1} \otimes \id_M) U_{\A,i+1} \otimes Z_{B,i}}{\kb{\psi_{i}}} \\
& = & \inner{Z_{A,i+1} \otimes \id_M \otimes Z_{B,i}}{(U_{\A,i+1} \otimes \id_B) \kb{\psi_{i}}(U_{\A,i+1}^* \otimes \id_B)} \\
& = & \inner{Z_{A,i+2} \otimes \id_M \otimes Z_{B,i+1}}{(U_{\A,i+1} \otimes \id_B) \kb{\psi_{i}}(U_{\A,i+1}^* \otimes \id_B)} \\
& \geq & \inner{Z_{A,i+2} \otimes U_{\B,i+2}^* (\id_M \otimes Z_{B,i+2}) U_{\B,i+2}}{(U_{\A,i+1} \otimes \id_B)\kb{\psi_{i}}(U_{\A,i+1}^* \otimes \id_B)} \\
& = & \inner{Z_{A,i+2} \otimes \id_M \otimes Z_{B,i+2}}{\kb{\psi_{i+2}}}. 
\end{eqnarray*}
We can compute 
\[ \inner{Z_{A,n} \otimes \id_M \otimes Z_{B,n}}{\kb{\psi_{n}}} =
\inner{\Pi_{\A,0} \otimes \id_M \otimes \Pi_{\B,0}}{\kb{\psi_{n}}} = 1/2, \]
from condition~(\ref{CFcondition}). Taking the limit as $\delta \to 0$, we get
\[ P_{\B,0}^*P_{\A,0}^* \geq \half \implies \max \set{P_{\B,0}^*, P_{\A,0}^*} \geq \frac{1}{\sqrt 2} \implies \eps \geq \frac{1}{\sqrt 2} - \half. \]
This lower bound was later reproven by Gutoski and Watrous~\cite{GW07} using a different representation of quantum strategies.

Notice that we can reproduce the proof above using dual feasible solutions for Bob cheating towards $1$ and Alice cheating towards $0$. In this case, we get the final condition
\[ \inner{Z_{A,n} \otimes \id_M \otimes Z_{B,n}}{\kb{\psi_{n}}} =
\inner{\Pi_{\A,0} \otimes \id_M \otimes \Pi_{\B,1}}{\kb{\psi_{n}}} = 0. \]
This gives a trivial bound on the product of the cheating probabilities. However, Kitaev used this to create point games which we discuss below. We refer the reader to~\cite{Moc07} for the full details of the construction of general point games as all the details are not needed for this paper.

\subsection{Point games} \label{sssect:PG}

Let $\eig(Z)$ denote the set of eigenvalues for an operator $Z$ and let $\Pi_Z^{[\lambda]}$ denote the projection onto the eigenspace of $Z$ corresponding to eigenvalue $\lambda \in \eig(Z)$.
For a quantum state $\sigma \in \pos^n$, and $X, Y \in \pos^n$, denote by $\prob(X,Y, \sigma): \R^2 \to \R_+$ the function
\[ \Prob(X,Y, \sigma) := \sum_{\lambda \in \eig(X)} \sum_{\mu \in \eig(Y)} \inner{\Pi_{X}^{[\lambda]} \otimes \Pi_{Y}^{[\mu]}}{\sigma} \, \pg{\lambda}{\mu}, \]
where we use the notation $\pg{\lambda}{\mu} : \R^2 \to \R$ to denote the function that takes value $1$ on input $(\lambda, \mu)$ and $0$ otherwise. Note this function has finite support. Using this definition, we can create a point game from feasible dual variables as follows
\[ p_{n-i} :=
\Prob(Z_{B,i}, Z_{A,i}, \tr_{M} \kb{\psi_{i}}), \]
recalling that $\ket{\psi_{i}} \in \C^{A \times M \times B}$ is the state after Bob applies $U_{\B,i}$ in an honest run of the protocol.
Consider the dual SDPs for weak coin-flipping, i.e., Bob trying to force outcome $1$ and Alice trying to force outcome $0$. We can calculate $p_0 = \half \pg{0}{1} + \half \pg{1}{0}$, which acts as the starting point of the point game. Notice for any $\delta > 0$, there exists a large constant $\Lambda$ such that
\begin{equation} 
Z_{A,0} (\delta) := \left( \bra{\psi_{A,0}} Z_{A,0} \ket{\psi_{A,0}} + \delta \right) \kb{\psi_{A,0}} + \Lambda \left( \id - \kb{\psi_{A,0}} \right) \succeq Z_{A,0}, 
\label{eqn:delta} 
\end{equation} 
which can be proved using the Schur complement after writing $Z_{A,0}$ in a basis containing $\ket{\psi_{A,0}}$. Notice $(Z_{A,0}(\delta), Z_{A,1}, \ldots, Z_{A,n})$ is feasible if $(Z_{A,0}, Z_{A,1}, \ldots, Z_{A,n})$ is feasible and has the same objective function value as $\delta \to 0$. If we replace $Z_{A,0}$ with $Z_{A,0}(\delta)$, and replace $Z_{B,0}$ with the properly modified definition of $Z_{B,0}(\delta)$, we get that the final point is
\[ p_n = 1 \pg{\bra{\psi_{A,0}} Z_{A,0} \ket{\psi_{A,0}} + \delta}{\bra{\psi_{B,0}} Z_{B,0} \ket{\psi_{B,0}} + \delta}. \]
By strong duality, we see that we can choose the dual feasible solutions and $\delta$ such that this final point gets arbitrarily close to $\pg{P_{\B,1}^*}{P_{\A,0}^*}$.
 
A point game $p_0 \to p_1 \to \cdots \to p_n$ with final point $\pg{\zeta_{\B}}{\zeta_{\A}}$ can be defined independent of protocols. Define $[x] : \R \to \R$ to be the function that takes value $1$ on input $x$ and equals $0$, otherwise. Then $p_0 \to p_1 \to \cdots \to p_n$ is a point game with final point $\pg{\zeta_{\B}}{\zeta_{\A}}$, if each $p_i$ is a function with finite support, $p_0 = \half \pg{0}{1} + \half \pg{1}{0}$, $p_n = 1 \pg{\zeta_{\B}}{\zeta_{\A}}$, and the moves (or transitions) $p_i \to p_{i+1}$ have one of the following forms (possibly acting on only a subset of the points) 
\begin{itemize}
\item $\dsum_{a \in A} p_{i,a} \pg{x_{a}}{y} \to \sum_{b \in B} p_{i+1, b} \pg{z_{b}}{y}$ \quad (called a horizontal move),
\item $\dsum_{a \in A} p_{i,a} \pg{y}{x_{a}} \to \sum_{b \in B} p_{i+1, b} \pg{y}{z_{b}}$ \quad
(called a vertical move),
\end{itemize}
where $\dsum_{a \in A} p_{i,a} = \dsum_{b \in B} p_{i+1,b}$ (called conservation of probability) and
\[ \sum_{b \in B} p_{i+1, b} [z_b] - \sum_{a \in A} p_{i,a} [x_a] \in \mathrm{OMF}^*, \]
where $\mathrm{OMF}$ is the cone of operator monotone functions. The purpose of the last condition above is beyond the scope of this paper, but it is used to prove that if there is a point game with final point $\pg{\zeta_{\B}}{\zeta_{\A}}$, then for any
$\delta > 0$, there exists a coin-flipping protocol with $P_{\B,1}^* \leq \zeta_{\B} + \delta$ and $P_{\A,0}^* \leq \zeta_{\A} + \delta$ (see~\cite{Moc07} for details). Mochon proved that there exists a point game with final point $\pg{1/2 + \delta}{1/2 + \delta}$, for any $\delta > 0$, proving the existence of weak coin-flipping protocols with arbitrarily small bias. 
 
\section{A $\BCCF$-point game example with final point $[3/4, 3/4]$} 
\label{PGexample} 

In this section, we give an example $\BCCF$-protocol and give an (optimal) $\BCCF$-point game with final point $\pg{3/4}{3/4}$. It can be shown that all four cheating probabilities are equal to $3/4$, which is the best $\BCCF$-protocol we know how to construct to date and we conjecture is optimal based on numerical evidence~\cite{NST14}.

The $\BCCF$-protocol we consider is a four-round protocol defined by the parameters
\[ \alpha_0 := \alpha_1 := [1,0]^\top \aand \beta_0 := [1/2, 1/2, 0]^\top, \quad \beta_1 := [1/2, 0, 1/2]^\top. \]
Solving for the optimal dual solution, we get
\[ w_1 = [3/4, 0]^\top, \quad v_0 = [3/4, 0, 3/2]^\top, \quad v_1 =  [3/4, 3/2, 0]^\top \]
for cheating Bob and, for cheating Alice,
\[ z_1 = 3/4, \quad
z_2^{(0)} = [1/4, 0]^\top, \quad
z_2^{(1)} = [1/4, 0]^\top, \quad
z_2^{(2)} = [1/4, 0]^\top.
\]
   
The point game is as follows which can be visualized using Figures~\ref{QPG1}, \ref{QPG2}, and \ref{QPG3}.

\begin{pointgame}[$\BCCF$-point game example with final point $\pg{3/4}{3/4}$] \label{PGBMexample} 
\begin{align*}
\half \pg{0}{1} + \half \pg{1}{0}
& \to
\half \pg{0}{1} + \quarter \pg{\dfrac{3}{4}}{0} + \quarter \pg{\dfrac{3}{2}}{0} & \textup{Horizontal Split} \\
& \to
\quarter \pg{0}{1} + \quarter \pg{\dfrac{3}{4}}{1}
+ \quarter \pg{\dfrac{3}{4}}{0} + \quarter \pg{\dfrac{3}{2}}{0}
& \textup{Horizontal Raise} \\
& \to
\quarter \pg{0}{1} + \half \pg{\dfrac{3}{4}}{\dfrac{1}{2}} + \quarter \pg{\dfrac{3}{2}}{0}
& \textup{Vertical Merge} \\
& \to
\quarter \pg{0}{1} + \half \pg{\dfrac{3}{4}}{\dfrac{1}{2}} + \quarter \pg{\dfrac{3}{2}}{1}
& \textup{Vertical Raise} \\
& \to
\half \pg{\dfrac{3}{4}}{1} + \half \pg{\dfrac{3}{4}}{\dfrac{1}{2}}
& \textup{Horizontal Merge} \\
& \to
\pg{\dfrac{3}{4}}{\dfrac{3}{4}} & \textup{Vertical Merge} 
\end{align*}
\end{pointgame} 
A few things to note is that the four probability vectors defining the protocol do not have full support. Therefore, there are some points with ``$0$ probability''. For example, from the figures we would be tempted to think there should be a point $\pg{3/2}{1}$, but this point has $0$ probability and is thus not effectively there. For the same reasons the dual vectors do not have full support and thus we are able to have a point remain at $\pg{0}{1}$ after the first horizontal point raises. 
  
\section{Extra properties of $\BCCF$-protocols} 

In this section, we give some extra properties of $\BCCF$-protocols and of their cheating polytopes. 
 
\subsection{Extreme points of the cheating polytopes}

This subsection examines the extreme points of Alice and Bob's cheating polytopes which appear in both the quantum and classical cheating strategy formulations. We show that deterministic strategies correspond to the extreme points of the cheating polytopes. One can argue this directly from the properties of the protocol. However, we give a strictly algebraic proof based on the properties of the cheating polytopes.

\begin{definition}
An extreme point of a convex set $C$ is a point $x \in C$ such that if $x = \lambda y + (1-\lambda) z$, for $\lambda \in (0,1)$, $y \neq z$, then $y \not\in C$ or $z \not\in C$.
\end{definition}

We start with a well-known fact.
\begin{fact}
Suppose $\tilde{x} \in \{ x \geq 0: \Gamma x=b \}$. Then $\tilde{x}$ is an extreme point of $\{ x \geq 0: \Gamma x=b \}$ if and only if there does not exist nonzero $u \in \nulll(\Gamma)$ with $\supp(u) \subseteq \supp(\tilde{x})$.
\end{fact}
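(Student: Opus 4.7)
The plan is to prove both directions of the equivalence by contraposition, using the observation that a direction $u$ with $\supp(u) \subseteq \supp(\tilde x)$ is precisely one along which $\tilde x$ can be perturbed in both signs while maintaining nonnegativity.

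For the forward direction, I will prove the contrapositive: assume there exists a nonzero $u \in \nulll(\Gamma)$ with $\supp(u) \subseteq \supp(\tilde x)$, and exhibit a nontrivial convex decomposition of $\tilde x$. Since $\tilde x_i > 0$ for every $i \in \supp(u)$, there exists $\epsilon > 0$ small enough that $\tilde x \pm \epsilon u \geq 0$ (just take $\epsilon < \min_{i \in \supp(u)} \tilde x_i / |u_i|$). Because $\Gamma u = 0$, both perturbations remain in $\{x \geq 0 : \Gamma x = b\}$, and the identity $\tilde x = \tfrac{1}{2}(\tilde x + \epsilon u) + \tfrac{1}{2}(\tilde x - \epsilon u)$ with $\tilde x + \epsilon u \neq \tilde x - \epsilon u$ contradicts extremality.

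For the reverse direction I again argue by contraposition: suppose $\tilde x$ is not extreme, so $\tilde x = \lambda y + (1-\lambda) z$ for some feasible $y \neq z$ and $\lambda \in (0,1)$. Set $u := y - z \neq 0$. Then $\Gamma u = \Gamma y - \Gamma z = b - b = 0$, so $u \in \nulll(\Gamma)$. It remains to verify $\supp(u) \subseteq \supp(\tilde x)$: for any index $i \notin \supp(\tilde x)$ we have $0 = \tilde x_i = \lambda y_i + (1-\lambda) z_i$ with $y_i, z_i \geq 0$ and $\lambda \in (0,1)$, forcing $y_i = z_i = 0$ and hence $u_i = 0$. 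Thus $u$ is the desired nonzero null-space vector supported on $\supp(\tilde x)$.

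Neither direction is technically difficult; the only thing to be careful about is the support argument in the reverse direction, where we use $\lambda \in (0,1)$ together with nonnegativity to conclude that a convex combination of nonnegative numbers vanishes only when both summands vanish. Given that the fact is a very classical statement about polyhedra, I do not expect any real obstacle, and the proof is short and self-contained.
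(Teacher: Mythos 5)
The paper states this Fact without proof, introducing it as ``a well-known fact,'' so there is no internal argument to compare against. Your two-direction contrapositive argument is the standard textbook proof and is correct as written: in the forward direction the bound $\epsilon < \min_{i \in \supp(u)} \tilde x_i/|u_i|$ indeed guarantees $\tilde x \pm \epsilon u \geq 0$ while $\Gamma u = 0$ keeps both perturbations feasible, and in the reverse direction the observation that a convex combination (with $\lambda \in (0,1)$) of nonnegative numbers vanishes only if both summands vanish is exactly what is needed to place $\supp(y-z)$ inside $\supp(\tilde x)$. Nothing is missing.
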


We can use this fact to prove the following lemma.

\begin{lemma}
Suppose $(p_1, \ldots, p_n) \in \calP_{\B}$ and $(s_1, \ldots, s_n, s) \in \calP_{\A}$. Then the vectors are extreme points of their respective polytopes if and only if they are Boolean, i.e., all of their entries are $0$ or $1$.
\end{lemma}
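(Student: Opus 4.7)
The plan is to invoke the fact stated just above the lemma: for a polyhedron $\{x \geq 0 : \Gamma x = b\}$, a feasible point $\tilde{x}$ is extreme iff there is no nonzero $u \in \nulll(\Gamma)$ with $\supp(u) \subseteq \supp(\tilde{x})$. Rearranging Bob's chain equalities as $\tr_{B_1}(p_1) = e_{A_1}$ and $\tr_{B_j}(p_j) - p_{j-1} \otimes e_{A_j} = 0$ puts $\calP_{\B}$ in this form, and the homogenized null-space identities are $\tr_{B_1}(u_1) = 0$ together with $\tr_{B_j}(u_j) = u_{j-1} \otimes e_{A_j}$ for $j \geq 2$. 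I will carry out the argument in detail for $\calP_{\B}$; the same template transfers verbatim to $\calP_{\A}$, treating the terminal constraint $\tr_{A'_0}(s) = s_n \otimes e_{B_n}$ as one extra rung and using the scalar seed $\tr_{A_1}(s_1) = 1$ in place of $\tr_{B_1}(p_1) = e_{A_1}$.

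For the \emph{Boolean implies extreme} direction, suppose $p = (p_1, \ldots, p_n) \in \calP_{\B}$ is Boolean and let $u$ satisfy the homogenized identities with $\supp(u) \subseteq \supp(p)$. I would induct on the level $j$ to show $u_j = 0$. Booleanness says that whenever $p_{j-1, x_1, \ldots, y_{j-1}} = 1$ there is a \emph{unique} $y_j \in B_j$ with $p_{j, x_1, \ldots, x_j, y_j} = 1$, and whenever $p_{j-1, x_1, \ldots, y_{j-1}} = 0$ the whole slice $p_{j, x_1, \ldots, x_j, \cdot}$ vanishes. The support hypothesis therefore collapses $\sum_{y_j} u_{j, x_1, \ldots, x_j, y_j} = u_{j-1, x_1, \ldots, y_{j-1}}$ to at most one term, and the inductive hypothesis $u_{j-1} = 0$ (seeded by the $j = 1$ case, whose right-hand side is already $0$) forces $u_j = 0$.

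For the \emph{extreme implies Boolean} direction, I would prove the contrapositive by exhibiting an explicit nonzero $u \in \nulll(\Gamma)$ supported in $\supp(p)$. Let $j$ be the smallest index at which $p_j$ has a non-Boolean entry; by minimality $p_{j-1}$ is Boolean. Choose $(x_1^*, \ldots, y_{j-1}^*, x_j^*, y_j^*)$ with $p_{j, x_1^*, \ldots, y_j^*} \in (0,1)$. The chain equation $\sum_{y_j} p_{j, x_1^*, \ldots, x_j^*, y_j} = p_{j-1, x_1^*, \ldots, y_{j-1}^*} \in \{0,1\}$ has a positive term, hence equals $1$, and since $p_{j, \ldots, y_j^*} < 1$ there must exist a second index $y_j^{**} \neq y_j^*$ with $p_{j, x_1^*, \ldots, x_j^*, y_j^{**}} > 0$. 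I set $u_i := 0$ for $i < j$; define $u_j$ to take value $+1$ at $(x_1^*, \ldots, x_j^*, y_j^*)$, $-1$ at $(x_1^*, \ldots, x_j^*, y_j^{**})$, and $0$ elsewhere; and for $k > j$ propagate by
\[ u_{k, x_1, \ldots, x_k, y_k} \;:=\; \frac{p_{k, x_1, \ldots, x_k, y_k}}{p_{j, x_1, \ldots, x_j, y_j}} \, u_{j, x_1, \ldots, x_j, y_j} \]
(with $0/0 := 0$), so that $u_k$ is supported only on branches passing through one of the two perturbed indices.

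The main obstacle is precisely this forward propagation: the local perturbation at level $j$ must be extended through levels $j+1, \ldots, n$ in such a way that every remaining chain identity continues to hold while $\supp(u) \subseteq \supp(p)$ is preserved. The proportional rescaling above is designed so that summing $u_k$ over $y_k$ along the two marked branches telescopes via $\sum_{y_k} p_{k, \ldots, x_k, y_k} = p_{k-1, \ldots, y_{k-1}}$ to exactly $u_{k-1, \ldots, y_{k-1}}$, while the off-branch contributions are trivially consistent because $u_{k-1}$ already vanishes there. Since $u$ is visibly nonzero (value $1$ at the seeding index) and contained in $\supp(p)$, the point $p$ fails the extremality criterion, completing the contrapositive.
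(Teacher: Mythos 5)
Your proposal is correct and follows essentially the same strategy as the paper's proof: both directions invoke the same null-space characterization of extreme points, the Boolean-implies-extreme direction uses the same induction (Booleanness collapses each chain equation for $u$ to at most one term, which the previous level forces to zero), and the converse direction perturbs two positive entries at the first non-Boolean level and propagates forward by scaling $p_{k}$ branchwise so that the telescoping identity $\sum_{y_k} p_{k,\ldots,y_k} = p_{k-1,\ldots}$ preserves the null-space constraints. The only difference is cosmetic — you give a closed-form expression $u_k = (p_k/p_j)\, u_j$ for the propagated perturbation where the paper defines it inductively level by level — and you spell out slightly more carefully why the $j$-level row sum equals $1$; otherwise the two arguments coincide.
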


\begin{proof}
We prove it for Bob's cheating polytope as the proof for Alice's is nearly identical. Suppose $(p_1, \ldots, p_n) \in \calP_{\B}$ is Boolean, we show it is an extreme point. Let Bob's polytope $\calP_{\B}$ be represented by the linear system $\Gamma (p_1, \ldots, p_n) = b$, $(p_1, \ldots, p_n) \geq 0$. Let $(u_{1}, \ldots, u_{n}) \in \Null(\Gamma)$ satisfy $\supp(u_{1}, \ldots, u_{n}) \subseteq \supp(p_1, \ldots, p_n)$. We argue that $(u_{1}, \ldots, u_{n})$ must be the zero vector. The constraint on $p_1$ is $\sum_{y_1} p_{1, x_1,y_1} = 1$ for all $x_1 \in A_1$. Therefore, since $p_1$ is Boolean, there is exactly one value of $y_1$ for every $x_1$ such that $p_{1, x_1, y_1} = 1$. These are the only entries of $u_1$ that can be nonzero, but since $(u_1, \ldots, u_n) \in \Null(\Gamma)$ we must have that entry equal to $0$. We can repeat this argument to get $u_i = 0$ for all $i \in \set{1, \ldots, n}$. Therefore, $(p_1, \ldots, p_n)$ is an extreme point.

Conversely, suppose $(p_1, \ldots, p_n) \in \calP_{\B}$ is not Boolean. Let $i$ be the smallest index where $p_i$ is not Boolean.
If $i > 1$, define $u_j := 0$ for $j \in \set{1, \ldots, i-1}$.
Let $(\hat{x}_1, \hat{y}_1, \ldots, \hat{x}_i, \hat{y}_i)$ be an index such that $p_{i, \hat{x}_1, \hat{y}_1, \ldots, \hat{x}_i, \hat{y}_i} \in (0,1)$. From the constraints, we must have another $\hat{y}'_{i}$ such that ${p_{i, \hat{x}_1, \hat{y}_1, \ldots, \hat{x}_i, \hat{y}'_i} \in (0,1)}$ as well (since they must add to $1$).
Now define $u_{i, \hat{x}_1, \hat{y}_1, \ldots, \hat{x}_i, \hat{y}_i} := t$, for some $t \neq 0$, and $u_{i, \hat{x}_1, \hat{y}_1, \ldots, \hat{x}_i, \hat{y}'_i} := -t$, and the rest of the entries of $u_i$ to be $0$.
We define $u_{i+1}$ to be equal to $p_{i+1}$, but we scale each entry such that
\[ \tr_{B_{i+1}}(u_{i+1}) = u_i \otimes e_{A_{i+1}}. \]
We inductively define $u_j$ in this way for all $j \in \set{i+2, \ldots, n}$.
{Thus}, since we scaled $(p_1, \ldots, p_n)$ to get $(u_1, \ldots, u_n)$, we have that $\supp(u_1, \ldots, u_n) \subseteq \supp(p_1, \ldots, p_n)$ and also $(u_1, \ldots, u_n) \in \Null(\Gamma)$ implying $(p_1, \ldots, p_n)$ cannot be an extreme point. \qed
\end{proof}
 
We see that extreme points of the cheating polytopes correspond to the strategies where Alice and Bob choose their next bit deterministically depending on the bits revealed.

\begin{corollary}
In a classical $\BCCF$-protocol, Alice and Bob each have an optimal cheating strategy which is deterministic.
\end{corollary}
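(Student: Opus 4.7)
The plan is to deduce the corollary by combining the linear programming formulation of classical cheating probabilities from Section~5.1 with the characterization of extreme points just established in the preceding lemma. First I would observe that in a classical $\BCCF$-protocol, each of the four optimal cheating probabilities $P_{\A,0}^*, P_{\A,1}^*, P_{\B,0}^*, P_{\B,1}^*$ is the optimal value of a linear program whose feasible region is the cheating polytope $\calP_{\A}$ or $\calP_{\B}$, with the objective function being a fixed linear functional in the polytope's variables.

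Next I would verify that each cheating polytope is bounded (hence a genuine polytope, not just a polyhedron). This is immediate from the constraints: for $\calP_{\B}$, applying $\tr_{B_1} \cdots \tr_{B_j}$ to the $j$-th constraint shows $\sum p_{j,x_1,y_1,\ldots,x_j,y_j}$ equals the size of $A_1 \times \cdots \times A_j$, and combined with nonnegativity this bounds every entry of $p_j$. The analogous bound holds for $\calP_{\A}$. Since a bounded LP attains its optimum at an extreme point of its feasible region, there exist optimal cheating strategies corresponding to extreme points of $\calP_{\A}$ and $\calP_{\B}$.

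Finally, by the preceding lemma, every extreme point of these polytopes is Boolean. I would then interpret a Boolean feasible solution as a deterministic strategy: for cheating Bob, the constraint $\tr_{B_j}(p_j) = p_{j-1} \otimes e_{A_j}$ (with $p_{j-1}$ the previous Boolean vector) together with $p_j \in \{0,1\}^{A_1 \times B_1 \times \cdots \times A_j \times B_j}$ forces, for each history $(x_1,y_1,\ldots,x_j)$ in the support of $p_{j-1}\otimes e_{A_j}$, exactly one value $y_j \in B_j$ with $p_{j, x_1,y_1,\ldots,x_j,y_j}=1$. This $y_j$ is precisely Bob's deterministic response as a function of the history. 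The same argument applied to $\calP_{\A}$ yields a deterministic cheating strategy for Alice.

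I expect the main (mild) obstacle to be the interpretation step: making rigorous the claim that a Boolean point of $\calP_{\B}$ (respectively $\calP_{\A}$) corresponds to the message-by-message deterministic function one would informally read off from the support. Everything else is a standard invocation of LP theory together with the just-proved lemma, so no new ideas are required.
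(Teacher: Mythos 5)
Your proof is correct and follows essentially the same route as the paper: both invoke the fact that an LP with a nonempty compact (line-free) feasible region attains its optimum at an extreme point, then appeal to the preceding lemma identifying extreme points of the cheating polytopes with Boolean (deterministic) strategies. You spell out the boundedness of the polytopes and the Boolean-to-deterministic interpretation more explicitly; the paper treats the latter as already established in the remark preceding the corollary, but the content is the same.
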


\begin{proof}
In a linear program whose feasible region does not contain lines, if there exists an optimal solution then there exists an optimal solution which is an extreme point of the feasible region. The result follows since the feasible region is nonempty and compact implying the existence of an optimal solution. \qed
\end{proof} 
 
\subsection{A succinct way to write the duals of the reduced formulations} \label{succinct} 

In this subsection, we present a simple form for the duals of the reduced cheating SDPs. We show that we only need to consider the variables in the positive semidefiniteness  constraints, since the linear inequalities reveal how to optimally assign the rest of the variables. Sometimes it is easier to work with the succinct form developed in this section because handling many dual variables can overcomplicate simple ideas. For example, in Appendix~\ref{qubits}, we show  that the smallest bias attainable by $\BCCF$-protocols is not affected if we restrict $\BCCF$-protocols to have $2$-dimensional (qubit) messages.

{Recall the dual of Bob's reduced cheating SDP for forcing outcome $0$, below}
\[ \begin{array}{rrrcllllllllllllll}
\textrm{} & \inf                         & \tr_{A_1}(w_1) \\
                     & \textup{subject to} & w_1 \otimes e_{B_1} & \geq & \tr_{A_2}(w_2), \\
                     & & w_2 \otimes e_{B_2} & \geq & \tr_{A_3}(w_3), \\
                     & & & \vdots \\
                     &                               & w_{n} \otimes e_{B_{n}} & \geq & \half \sum_{a \in \zo} \alpha_a \otimes v_a, \\ 
                     & & \Diag(v_a) & \succeq & \sqrtt{\beta_a}, & \forAll a \in \zo. 
\end{array} \]
Let us examine the first constraint $w_1 \otimes e_{B_1} \geq \tr_{A_2}(w_2)$. This is equivalent to 
\[ {w_{1,x_1} \geq \sum_{x_2 \in A_2} w_{2, x_1, y_1, x_2}} \] 
for all $x_1 \in A_1$, $y_1 \in B_1$. Once we fix a value for $w_2$, an optimal choice of $w_1$ is 
\[ w_{1, x_1} = \max_{y_1 \in B_1} \sum_{x_2 \in A_2} w_{2, x_1, y_1, x_2}. \] 
Using this idea, we can rewrite Bob's dual as
\[ \displaystyle\inf_{\Diag(v_a) \succeq \sqrtt{\beta_a}} \left\{ \sum_{x_1 \in A_1} \max_{y_1 \in B_1} \sum_{x_2 \in A_2} \max_{y_2 \in B_2} \cdots \sum_{x_n \in A_n} \max_{y_n \in B_n} \sum_{a \in \zo} \half \alpha_{a,x} v_{a,y} \right\} \]
and Alice's as
\[ \displaystyle\inf_{\Diag(z_{n+1}^{(y)}) \succeq \half \beta_{a,y} \sqrtt{\alpha_a}} \left\{ \max_{x_1 \in A_1} \sum_{y_1 \in B_1} \cdots \max_{x_n \in A_n} \sum_{y_n \in B_n} z_{n+1, x,y} \right\}, \]
each for forcing outcome $0$. We can switch $\beta_0$ with $\beta_1$ to get the succinct forms for Alice and Bob forcing outcome $1$.

This shows that the objective values are determined once some of the dual variables are fixed. We see this idea when designing the point games corresponding to $\BCCF$-protocols.
  
\subsection{An SDP proof for why qubit messages are sufficient} \label{qubits}

In this subsection, we show how the succinct representation of the duals helps us prove a novel result, that we can bound the dimension of the messages in $\BCCF$-protocols without increasing the smallest attainable bias.

We use the reduced cheating SDPs to prove that we can assume $A_i = B_i = \zo$, that is, each message is a single qubit. More specifically, we show that for any $\BCCF$-protocol, there exists another $\BCCF$-protocol with qubit messages where the bias is no greater. We prove it for Alice's messages as the proof for Bob's messages is nearly identical.

Suppose we have a protocol defined by
\[ A = A_1 \times \cdots \times A_n, \; B = B_1 \times \cdots \times B_n, \; \alpha_0, \alpha_1 \in \prob^A, \;\beta_0, \beta_1 \in \prob^B. \]
Suppose Alice's $i$'th message has large dimension, that is, $|A_i| > 2$. We define a new protocol by replacing $A_i$ with $A'_i \times A''_i$, where $|A_i| \leq |A'_i \times A''_i|$. Notice that $\alpha_{0}$ and $\alpha_1$ can be viewed as probability distributions over $A_1 \times \cdots \times A_{i-1} \times A'_i \times A''_i \times A_{i+1} \times \cdots \times A_n$ in the obvious way. We also add a ``dummy'' message from Bob by adding $B_{d}$ in between $B_i$ and $B_{i+1}$. This dummy message needs to be independent of the protocol, so we can suppose Bob sends $\ket{0}$. This effectively replaces $\beta_b$ with $\beta'_b := \beta_b \otimes [1, 0]^{\T}_{d}$, for each $b \in \zo$. If Alice and Bob cannot cheat more in this new protocol, then we can repeat these arguments to show that all of Alice's messages are qubit messages by inductively breaking up the $\C^{A_i}$ spaces.

\paragraph{Bob's cheating probabilities do not increase} \quad \\

We now show that Bob cannot use the extra message to cheat more in the new protocol. We show this by constructing a dual feasible solution.

In the original protocol, cheating Bob can force outcome $0$ with maximum probability given by the optimal objective value of the following problem
\[ \displaystyle\inf_{\Diag(v_a) \succeq \sqrtt{\beta_a}} \left\{ \sum_{x_1 \in A_1} \max_{y_1 \in B_1} \sum_{x_2 \in A_2} \max_{y_2 \in B_2} \cdots \sum_{x_n \in A_n} \max_{y_n \in B_n} \sum_{a \in \zo} \half \alpha_{a,x} v_{a,y} \right\}. \]

In the new protocol, cheating Bob can force outcome $0$ with maximum probability given by the optimal objective value of the following problem
\[ \displaystyle\inf_{\Diag(\tilde{v}_a) \succeq \sqrtt{\beta'_a}} \left\{ \sum_{x_1} \max_{y_1} \sum_{x_2} \max_{y_2} \cdots \sum_{x'_i \in A'_i} \max_{y_{d} \in B_{d}} \sum_{x''_i \in A''_i} \cdots \sum_{x_n} \max_{y_n} \sum_{a \in \zo} \half \alpha_{a,x} \tilde{v}_{a,y} \right\}. \]
For any $(v_0, v_1)$ feasible in the first problem, we can define a solution feasible in the second problem $(\tilde{v}_0, \tilde{v}_1) := \left( v_0 \otimes [1,0]^{\T}_{d}, v_1 \otimes [1,0]^{\T}_{d} \right)$ with the same objective function value. Notice the same argument holds if we switch $\beta_0$ with $\beta_1$ and $\beta'_0$ with $\beta'_1$, i.e., if Bob wants outcome $1$. Since these are minimization problems, Bob can cheat no more in the new protocol.

\paragraph{Alice's cheating probabilities do not increase} \quad \\

We now show that Alice cannot use her extra message to cheat more in the new protocol. To show this, we repeat the same argument as in the case for cheating Bob.

In the original protocol, cheating Alice can force outcome $0$ with maximum probability given by the optimal objective value of the following problem
\[ \displaystyle\inf_{\Diag(z_{n+1}^{(y)}) \succeq \half \beta_{a,y} \sqrtt{\alpha_a}} \left\{ \max_{x_1 \in A_1} \sum_{y_1 \in B_1} \cdots \max_{x_n \in A_n} \sum_{y_n \in B_n} z_{n+1,x,y} \right\}. \]

In the new protocol, cheating Alice can force outcome $0$ with maximum probability given by the optimal objective value of the following problem
\[ \displaystyle\inf_{\Diag(\tilde{z}_{n+1}^{(y)}) \succeq \half \beta'_{a,y} \sqrtt{\alpha_a}} \left\{ \max_{x_1 \in A_1} \sum_{y_1 \in B_1} \cdots \max_{x'_i \in A'_i} \sum_{y_{d} \in B_{d}} \max_{x''_i \in A''_i} \cdots \max_{x_n \in A_n} \sum_{y_n \in B_n} \tilde{z}_{n+1,x,y} \right\}. \]
For any $z_{n+1}$ feasible in the first problem, we can define a solution feasible in the second problem $\tilde{z}_{n+1} := z_{n+1} \otimes [1,0]^{\T}_{d}$ with the same objective function value. Notice the same argument holds if we switch $\beta_0$ with $\beta_1$ and $\beta'_0$ with $\beta'_1$, i.e., if Alice wants outcome $1$. Since these are minimization problems, Alice can cheat no more in the new protocol.

\section{Proof of correctness for the reduced problems}
\label{app:reduced_app} 
 
{We start with a technical lemma whose proof is obvious using Equation~(\ref{eqn:delta}).}  

\begin{lemma}[Subspace lemma] \label{SubspaceLemma}
For a vector $\ket{\psi} \in \C^n$, a set $S \subseteq \Herm^n$, and a continuous, monotonically nondecreasing function $F$, we have
\[ \inf_{X, Y \in \Herm^n} \! \{ F(\bra{\psi} X \ket{\psi}) : X \succeq Y, \, Y \in S \} = \inf_{X, Y \in \Herm^n} \! \{ F(\bra{\psi} X \ket{\psi}) : \bra{\psi} X \ket{\psi} \geq \bra{\psi} Y \ket{\psi}, \, Y \in S \}. \]
\end{lemma}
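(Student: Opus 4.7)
The plan is to handle the two directions of the equality separately. The direction ``$\geq$'' is immediate: since $X \succeq Y$ implies $\bra{\psi} X \ket{\psi} \geq \bra{\psi} Y \ket{\psi}$, the feasible region of the LHS is contained in that of the RHS, so its infimum is at least as large. All the work is in the reverse direction ``$\leq$'', where given an arbitrary $(X,Y)$ feasible for the RHS, we must exhibit a sequence $(X'_{\delta}, Y'_{\delta})$ feasible for the LHS whose objective values approach $F(\bra{\psi} X \ket{\psi})$.

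The main idea is to discard $X$ entirely and use only the scalar $c := \bra{\psi} X \ket{\psi}$ to build a new operator $X'_\delta$ that dominates $Y$ in the semidefinite order but whose $\psi$-expectation is merely $c + \delta$. Specifically, keeping $Y'_\delta := Y$ and imitating Equation~(\ref{eqn:delta}), I would set
\[ X'_\delta := (c + \delta) \kb{\psi} + \Lambda \left( \id - \kb{\psi} \right), \]
for some large $\Lambda$ depending on $Y$ and $\delta$, and verify $X'_\delta \succeq Y$ by a Schur-complement computation in a basis whose first vector is $\ket{\psi}$ (assuming WLOG that $\ket{\psi}$ is a unit vector). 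In that basis, $X'_\delta - Y$ is a $2 \times 2$ block matrix whose $(1,1)$ scalar entry equals $c + \delta - \bra{\psi} Y \ket{\psi}$; this is strictly positive precisely because $(X,Y)$ is feasible for the RHS and $\delta > 0$, which is exactly where the RHS constraint is used. The remaining Schur-complement condition then reduces to making $\Lambda \id - Y_\perp$ dominate a fixed rank-one perturbation on $\nulll(\kb{\psi})$, and this is achieved by choosing $\Lambda$ sufficiently large.

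Once $X'_\delta \succeq Y$ is established, the pair $(X'_\delta, Y)$ is feasible for the LHS and $\bra{\psi} X'_\delta \ket{\psi} = c + \delta$. By continuity of $F$, letting $\delta \to 0^+$ gives $F(c + \delta) \to F(c) = F(\bra{\psi} X \ket{\psi})$, so the LHS infimum is bounded above by $F(\bra{\psi} X \ket{\psi})$. Taking the infimum over all $(X,Y)$ feasible for the RHS yields $\inf_{\textup{LHS}} \leq \inf_{\textup{RHS}}$, completing the argument. The only mildly delicate point is the Schur-complement verification, which is routine but has to be written out carefully to make sure that the RHS constraint, rather than the stronger PSD constraint, is exactly what guarantees the positivity of the $(1,1)$ block. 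Monotonicity of $F$ is not needed for the argument as stated; continuity alone suffices. \qed
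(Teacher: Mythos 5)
Your proof is correct and fills in exactly the argument the paper leaves implicit by pointing to Equation~(\ref{eqn:delta}): perturb the diagonal slot aligned with $\ket{\psi}$, pad the orthogonal complement with a large $\Lambda$, verify dominance over $Y$ via a Schur complement, and let $\delta\to 0^+$. Your variant of placing $\bra{\psi}X\ket{\psi}+\delta$ rather than $\bra{\psi}Y\ket{\psi}+\delta$ in the $\ket{\psi}$--slot is a genuine (if small) improvement: with the paper's reading of (\ref{eqn:delta}) one produces objective value $F(\bra{\psi}Y\ket{\psi}+\delta)$ and needs monotonicity of $F$ to compare it to $F(\bra{\psi}X\ket{\psi})$, whereas your construction lands directly at $F(\bra{\psi}X\ket{\psi}+\delta)$, so only continuity enters --- you identify this correctly. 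One can go further and eliminate $\Lambda$, $\delta$, and continuity altogether: taking $X' := Y + \bigl(\bra{\psi}X\ket{\psi}-\bra{\psi}Y\ket{\psi}\bigr)\kb{\psi}$ (with $\ket{\psi}$ normalized) gives $X'\succeq Y$ outright and $\bra{\psi}X'\ket{\psi}=\bra{\psi}X\ket{\psi}$, so the two feasible sets have identical projections onto the scalar $\bra{\psi}X\ket{\psi}$ and the identity holds for arbitrary $F$. This is worth keeping in mind, though it is a matter of elegance rather than correctness; for the paper's application $F$ is in any case nondecreasing, so nothing breaks under either reading.
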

 
This lemma can be generalized. We can use this lemma whenever the constraint on $X$ is satisfied by replacing it with $X(\delta)$ (from Equation~(\ref{eqn:delta})) for $\delta > 0$. The most complicated constraints that arise later in this paper are of the form
\[ \sum_{x \in A} W_{x,y} \otimes \kb{x} \otimes \id_B \succeq C, \]
where $W_{x,y}$ are the variables and the objective function is continuous and nondecreasing on the value of $\bra{\phi} W_{x,y} \ket{\phi}$. We see that a necessary condition is
\[ \sum_{x \in A} \bra{\phi} W_{x,y} \ket{\phi} \cdot \kb{x} \otimes \id_B \succeq (\bra{\phi} \otimes \id_A \otimes \id_B) \, C \,  (\ket{\phi} \otimes \id_A \otimes \id_B). \]
By using a properly modified definition for $W_{x,y}(\delta)$, we have that this condition is also sufficient. The idea is to increase the eigenvalues on subspaces that do not affect the objective function. 

\subsection{On the structure of the proofs}

Here we prove that the cheating SDPs can have a certain, restricted form while retaining the same optimal objective function value.
That is, we cut down the feasible region to something that is much cleaner and illustrates the simple communication of the protocol.
The main technique used in proving that we do not cut off all of the optimal solutions comes from duality theory of semidefinite programming.
We generalize the following idea.
If we wish to prove that a certain feasible solution is optimal for the primal problem, it suffices to exhibit a feasible dual solution with the same objective function value. Here, we claim that a restricted feasible region contains an optimal solution.
Let $p_1^*$ be the optimal value of the original SDP, $p_2^*$ be the optimal value of the restricted SDP, and let $d_1^*$ and $d_2^*$ be the optimal values of the respective dual problems and assume all of them are finite. We want to show that $p_1^* = p_2^*$. Suppose the restricted problem and its dual have zero duality gap. Then if we can prove that $d_1^* \leq d_2^*$, we have
\[ p_1^* \leq d_1^* \leq d_2^* = p_2^* \leq p_1^*, \]
proving $p_1^* = p_2^*$ as desired. To show $d_1^* \leq d_2^*$, it suffices to find a restriction of the dual of the original SDP to get to a problem equivalent to the dual of the restricted SDP. This is depicted in Figure~\ref{PDFR3}.
 
\begin{figure}[ht]
  \centering
   \includegraphics[width=5in] {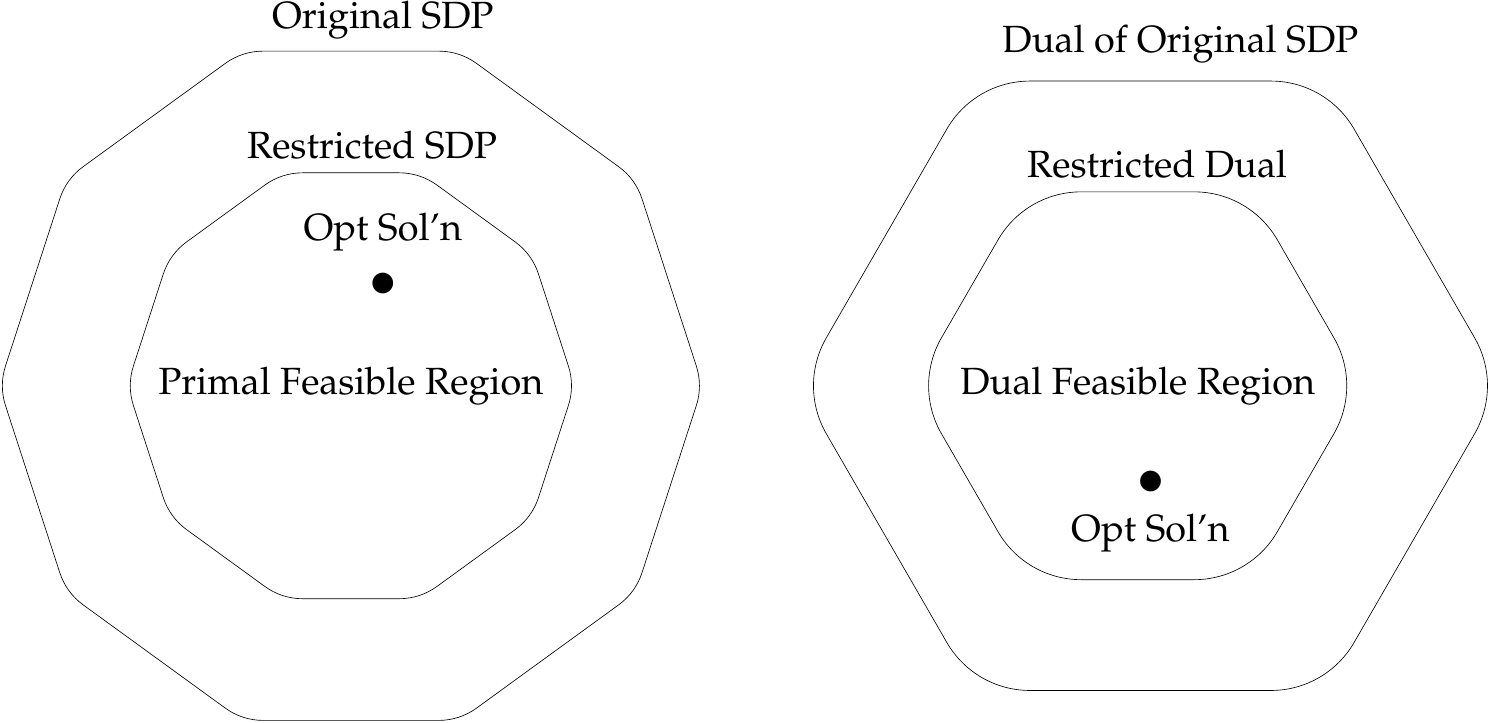} 
  \caption{There exist optimal solutions in the restricted feasible regions.}
\label{PDFR3}
\end{figure}
  
\subsection{Proof of Theorem~\ref{thm:reducedBob}} 

{The contexts of the ``reduced strategies'' are very simple, Alice or Bob simply change the probability of which the next message is chosen, controlled on the messages sent and received so far (doing so in superposition). This is a very simple form, their cheating is certainly not limited to such strategies. However, we show here that such strategies are optimal, starting with a cheating Bob.} 

{We now restrict the feasible region of Bob's cheating SDPs by defining the following parameterized primal feasible solutions: } 
\[
\bar{\rho}_j := \sum_{x_1 \in A_1} \cdots \sum_{x_j \in A_j} \kb{x_1, \ldots, x_j} \otimes \kb{\psi_{x_1, \ldots, x_j}} \otimes \Diag(p_j),
\]
for $j \in \{ 1, \ldots, n \}$, and 
\[ \bar{\rho}_F := \sum_{a \in \zo} \kb{aa} \otimes \kb{\psi'_a}, \]  
where $p_j \in \R_+^{A_1 \times B_1 \times \cdots \times A_j \times B_j}$ is a variable,
\[ \ket{\psi_{x_1, \ldots, x_j}} := \frac{1}{\sqrt 2} \sum_{x_{j+1} \in A_{j+1}} \cdots \sum_{x_n \in A_n} \sum_{a \in \zo} \sqrt{\alpha_{a,x}} \, \ket{aa} \ket{x_{j+1}, \ldots, x_n}\ket{x_{j+1}, \ldots, x_n}, \]
and 
\[ \ket{\psi'_a} := \sum_{y \in B} \sqrt{\half \sum_{x \in A} \alpha_{a,x} [p_n]_{x,y}} \, \ket{yy}, \] 
for all $a \in \zo$. The new objective function for forcing outcome $0$ becomes
\[ \inner{\bar{\rho}_F}{\Pi_{\A, 0}} = \half \sum_{a \bit} \, \rF \left( (\alpha_a \otimes \id_A)^{\T} p_n, \, \beta_a \right) \] 
and the variables $(p_1, \ldots, p_n)$ belong to Bob's cheating polytope as defined in Definition~\ref{BobsPolytope}. 
 
Since we have restricted the feasible region of a maximization SDP, we have proved that
\[ P_{\B,0}^* \geq \max \left\{ \half \sum_{a \bit} \, \rF \left( (\alpha_a \otimes \id_B)^{\T} p_n, \, \beta_a \right) : (p_{1}, \ldots, p_{n}) \in \calP_{\B} \right\}. \]
By changing the value of $\bar{\rho}_F \in \pos^{A_0 \times B'_0 \times B \times B'}$ above to $\bar{\rho}_F = \sum_{a \in \zo} \kb{a \bar{a}} \otimes \kb{\psi'_a}$, 
we get
\[ P_{\B,1}^* \geq \max \left\{ \half \sum_{a \bit} \, \rF \left( (\alpha_a \otimes \id_B)^{\T} p_n, \, \beta_{\bar{a}} \right) : (p_{1}, \ldots, p_{n}) \in \calP_{\B} \right\}. \]
This swaps Bob's choice of commitment reveal in the last message.
 
We now show that these inequalities hold with equality by exhibiting a family of feasible dual solutions with matching optimal objective function value. 
 
We begin by proving this for the case of $P_{\B,0}^*$. Consider the dual of Bob's cheating SDP below:
\[ \begin{array}{rrrcllllllllllllll}
\textrm{} &  P_{\B, 0}^* \; = \;  \inf                             & \inner{W_1}{\tr_{A_1} \kb{\psi}} \\
                     & \textup{subject to} & W_j \otimes \id_{B_j} & \succeq & W_{j+1} \otimes \id_{A_{j+1}}, \\
                     & & & & \forAll \, j \in \set{1, \ldots, n-1}, \\
                     &                               & W_n \otimes \id_{B_n} & \succeq & W_{n+1} \otimes \id_{A'} \otimes \id_{A'_0}, \\
                     &                               & W_{n+1} \otimes \id_{B'} \otimes \id_{B'_0} & \succeq & \Pi_{\A,0}, \\
                     &                               & W_j & \in & \mathbb{S}^{A_0 \times A'_0 \times B_1 \times \cdots \times B_{j-1} \times A_{j+1} \times \cdots \times A_n \times A'}, \\
                     & & & & \forAll j \in \{ 1, \ldots, n \}, \\
                     &                               & W_{n+1} & \in & \mathbb{S}^{A_0 \times B}. \\
\end{array} \]

We now define a restriction of the following form:
\[ W_j := \! \sum_{x_1 \in A_1} \sum_{y_1 \in B_1} \cdots \!\!\!\! \sum_{y_{j-1} \in B_{j-1}} \sum_{x_j \in A_j} \kb{x_1, y_1, \ldots, y_{j-1}, x_j} \otimes W_{j, x_1, y_1, \ldots, y_{j-1}, x_j}, \]
for $j \in \{ 1, \ldots, n \}$, and
\[ W_{n+1} := \sum_{a \in \zo} \kb{a} \otimes \Diag(v_a). \]
Under this restriction, we have the following problem:
\[ \begin{array}{rrrcllllllllllllll}
\textrm{} & d_2^* \; = \; \inf                             & \dsum_{x_1 \in A_1} \inner{W_{1, x_1}}{\kb{\psi_{x_1}}} \\
                     & \textup{subject to} & W_{j, x_1, y_1, \ldots, y_{j-1}, x_j} & \succeq & \displaystyle\sum_{x_{j+1}} \kb{x_{j+1}} \otimes \id_{A_{j+1}} \otimes W_{j+1, x_1, y_1, \ldots, y_{j}, x_{j+1}}, \\ 
                     & & & & \forAll \, j \in \set{1, \ldots, n-1}, \\
                     & & & & (x_1, \ldots, x_{j}) \in A_1 \times \cdots \times A_{j}, \\
                     & & & & (y_1, \ldots, y_j) \in B_1 \times \cdots \times B_j, \\
                     &                               & W_{n, x_1, y_1, \ldots, y_{n-1}, x_n} & \succeq & \displaystyle\sum_{a \in \zo} v_{a,y} \, \kb{a} \otimes \id_{A'_0}, \\
                     &                               & \Diag(v_a) & \succeq & \sqrtt{\beta_a}, \forAll a \in \zo, \\
\end{array} \]
where the last constraint was obtained using Lemma~\ref{FidLemma2}. Note that this shows $d_2^* \geq P_{\B, 0}^*$. 

The last constraint changes to $\Diag(v_a) \succeq \sqrtt{\beta_{\bar{a}}}, \forAll a \in \zo$, if Bob is cheating towards $1$ and the rest of the proof follows similarly in this case.

Since the objective function only depends on $W_{1, x_1}$ in the subspace $\kb{\psi_{x_1}}$, we apply the Subspace lemma (Lemma~\ref{SubspaceLemma}) to the first constraint and replace it with
\begin{eqnarray*}
\bra{\psi_{x_1}} W_{1, x_1} \ket{\psi_{x_1}}
& \geq & \bra{\psi_{x_1}} \sum_{x_2 \in A_2} \kb{x_{2}} \otimes \id_{A_{2}} \otimes W_{2, x_1, y_1, x_2} \ket{\psi_{x_1}} \\
& = & \sum_{x_{2} \in A_2} \bra{\psi_{x_1, x_2}} W_{2, x_1, y_1, x_2} \ket{\psi_{x_1, x_2}}.
\end{eqnarray*}
Examining the next constraint, we need to choose $W_{2, x_1, y_1, x_2}$ to satisfy
\[ W_{2, x_1, y_1, x_2} \succeq \sum_{x_3 \in A_3} \kb{x_3} \otimes \id_{A_3} \otimes W_{3, x_1, y_2, x_2, y_2, x_3}. \]
Since the objective function value only depends on $\bra{\psi_{x_1, x_2}} W_{2, x_1, y_1, x_2} \ket{\psi_{x_1, x_2}}$, we can repeat the same argument and replace the constraint by
\[ \bra{\psi_{x_1, x_2}} W_{2, x_1, y_1, x_2} \ket{\psi_{x_1, x_2}} \geq \sum_{x_3 \in A_3} \bra{\psi_{x_1, x_2, x_3}} W_{3, x_1, y_2, x_2, y_2, x_3} \ket{\psi_{x_1, x_2, x_3}}. \]
Continuing in this fashion, we can replace each constraint to get the following problem with the same optimal  objective value: 
\[ \begin{array}{rrrcllllllllllllll}
\textrm{} &    \inf                             & \dsum_{x_1 \in A_1} \inner{W_{1, x_1}}{\kb{\psi_{x_1}}} \\
                     & \textrm{s.t.} & \bra{\psi_{x_1, \ldots, x_j}} W_{j, x_1, y_1, \ldots, y_{j-1}, x_j} \ket{\psi_{x_1, \ldots, x_j}}
                      & \geq & \displaystyle\sum_{x_{j+1}} \bra{\psi_{x_1, \ldots, x_{j+1}}} W_{j+1, x_1, y_1, \ldots, y_{j}, x_{j+1}} \ket{\psi_{x_1, \ldots, x_{j+1}}} \\
                     & & & & \forAll \, j \in \set{1, \ldots, n-1}, \\
                     & & & & (x_1, \ldots, x_{j+1}) \in A_1 \times \cdots \times A_{j+1}, \\
                     & & & & (y_1, \ldots, y_j) \in B_1 \times \cdots \times B_j, \\
                     &                               & \bra{\psi_x} W_{n, x_1, y_1, \ldots, y_{n-1}, x_n} \ket{\psi_x} & \geq & \displaystyle\sum_{a \in \zo} \alpha_{a,x} \, v_{a,y}, \forAll x \in A, y \in B, \\
                     &                               & \Diag(v_a) & \succeq & \sqrtt{\beta_a}, \forAll a \in \zo. \\
\end{array} \]
Define
\[ w_{j, x_1, y_1, \ldots, y_{j-1}, x_j} := \bra{\psi_{x_1, \ldots, x_j}} W_{j, x_1, y_1, \ldots, y_{j-1}, x_j} \ket{\psi_{x_1, \ldots, x_j}}, \]
for all
$j \in \set{1, \ldots, n-1}, (x_1, \ldots, x_{j+1}) \in A_1 \times \cdots \times A_{j+1}, (y_1, \ldots, y_j) \in B_1 \times \cdots \times B_j$,
to get the equivalent problem
\[ \begin{array}{rrrcllllllllllllll}
\textrm{} & d_2^* \; = \; \inf                             & \displaystyle\sum_{x_1 \in A_1} w_{1, x_1} \\
                     & \textrm{subject to} & w_{j, x_1, y_1, \ldots, y_{j-1}, x_j}
                      & \geq & \!\! \displaystyle\sum_{x_{j+1} \in A_{j+1}} \!\! w_{j+1, x_1, y_1, \ldots, y_{j}, x_{j+1}}, \\
                     & & & & \forAll \, j \in \set{1, \ldots, n-1}, \\
                     & & & & (x_1, \ldots, x_{j+1}) \in A_1 \times \cdots \times A_{j+1}, \\
                     & & & & (y_1, \ldots, y_j) \in B_1 \times \cdots \times B_j, \\
                     &                               & w_{n, x_1, y_1, \ldots, y_{n-1}, x_n} & \geq & \displaystyle\sum_{a \in \zo} \half \alpha_{a,x} \, v_{a,y},  \forAll x \in A, y \in B, a \in \zo, \\
                     &                               & \Diag(v_a) & \succeq & \sqrtt{\beta_a}, \forall a \in \zo, \\
\end{array} \] 
{noting $w_{j,x_1, y_1, \ldots, y_{j-1}, x_j} = 0$ when $\ket{\psi_{x_1, \ldots, x_j}} = 0$ can be assumed in an optimal solution.}
This problem has a strictly feasible solution and the objective function is bounded from below on the feasible region, thus strong duality holds and there is zero duality gap. The dual of this problem is
\[ \max_{\substack{(p_1, \ldots, p_n) \in \calP_{\B} \\ \rho_0, \rho_1 \in \mathbb{S}_+^B}} \left\{ \sum_{a \in \zo} \half \inner{\rho_a}{\sqrtt{\beta_a}} :  \diag(\rho_a) = (\alpha_a \otimes \id_B)^{\T} p_n, \, \forall a \in \zo \right\}, \]
which has optimal value $d_2^*$ due to zero duality gap. This problem is equivalent to the reduced problem by Lemma~\ref{FidelityLemma}. Therefore, we have $P_{\B, 0}^* \leq d_2^* \leq P_{\B,0}^*$ implying $P_{\B, 0}^* = d_2^*$  which is the optimal value of the reduced problem, as desired. \qed
 
\subsection{Proof of Theorem~\ref{thm:reducedAlice}}  

We now restrict the feasible region of Alice's cheating SDPs by defining the following parameterized primal feasible solutions. Intuitively, this strategy is similar to that of cheating Bob. The solution is given below
\[ {\bar{\sigma}_j} := \sum_{y_1 \in B_1} \cdots \sum_{y_{j-1} \in B_{j-1}} \kb{y_1, \ldots, y_{j-1}} \otimes \kb{\phi_{y_1, \ldots, y_{j-1}}} \otimes \Diag(s_j), \]
for $j \in \set{2, \ldots, n}$, and 
\[ \bar{\sigma}_F := \sum_{a \in A'_0} \sum_{y \in B} \kb{a} \otimes \kb{y} \otimes \kb{\phi_y} \otimes \kb{\phi'_{a,y}}, \]  
where $s_{j} \in \R_+^{A_1 \times B_1 \times \cdots \times B_{j-1} \times A_j}$ and $s \in \R_+^{A'_0 \times A \times B}$ are variables,
\[ \ket{\phi_{y_1, \ldots, y_{j-1}}} := \frac{1}{\sqrt 2} \sum_{y_{j} \in B_j} \cdots \sum_{y_n \in B_n} \sum_{b \in \zo} \sqrt{\beta_{b,y}} \, \ket{bb} \ket{y_{j}, \ldots, y_n}\ket{y_{j}, \ldots, y_n}, \]
and 
\[ \ket{\phi'_{a,y}} := \sum_{x \in A} \sqrt{s_{a,y,x}} \, \ket{xx}, \] 
for all $y \in B, a \in \zo$. With this restriction, we have 
\[ \inner{\bar{\sigma}_F}{\Pi_{\B, 0} \otimes \id_{B'_0 \times B'}} = \half \sum_{a \in A'_0} \sum_{y \in B'} \beta_{a,y} \,  \rF(s^{(a,y)}, \alpha_a) \] 
as the new objective function for forcing outcome $0$ where $s^{(a,y)} \in \C^A$ is defined as the restriction of $s$ with $a$ and $y$ fixed. We can define it element-wise as $[s^{(a,y)}]_x := s_{a,y,x}$.
The new objective function for forcing outcome $1$ is 
\[ \inner{\bar{\sigma}_F}{\Pi_{\B, 1} \otimes \id_{B'_0 \times B'}} = \half \sum_{a \in A'_0} \sum_{y \in B'} \beta_{\bar{a},y} \,  \rF(s^{(a,y)}, \alpha_a). \]  
The variables $(s_1, \ldots, s_n, s)$ belong to Alice's cheating polytope as defined in Definition~\ref{AlicePolytope}. 
 
We have proved 
\[ P_{\A,0}^* \geq \max_{(s_{1}, \ldots, s_{n}, s) \in \calP_{\A}} \left\{ \half \sum_{a \bit} \sum_{y \in B} \beta_{a,y} \; \rF(s^{(a,y)}, \alpha_{a}) \right\} \] 
and 
\[ P_{\A,1}^* \geq \max_{(s_{1}, \ldots, s_{n}, s) \in \calP_{\A}} \left\{ \half \sum_{a \bit} \sum_{y \in B} \beta_{\bar{a},y} \;  \rF(s^{(a,y)}, \alpha_{a}) \right\}. \] 

{We now show that the above inequalities hold with equality by exhibiting a family of feasible dual solutions with matching optimal objective function value.} 
 
Consider the dual to Alice's cheating SDP for forcing outcome $0$, below: 
\[ \begin{array}{rrrcllllllllllllll}
\textrm{} & P_{\A, 0}^* \; = \; \inf                                & \inner{Z_1}{\kb{\phi}}  \\
                     & \textup{subject to} & Z_j \otimes \id_{A_j} & \succeq & Z_{j+1} \otimes \id_{B_j}, \\
                     & & & & \forAll j \in \set{1, \ldots, n}, \\
                     &                               & Z_{n+1} \otimes \id_{A'} \otimes \id_{A'_0} & \succeq & \Pi_{\B,0} \otimes \id_{B'_0} \otimes \id_{B'}, \\
                     & & Z_j & \in & \mathbb{S}^{B_0 \times B'_0 \times A_1 \times \cdots \times A_{j-1} \times B_j \times \cdots \times B_n \times B'}, \\
                     & & & & \forAll j \in \{ 1, \ldots, n, n+1 \}. 
\end{array} \]
Consider the following restriction:
\[ Z_{j+1} := \sum_{x_1 \in A_1} \sum_{y_1 \in B_1} \cdots \sum_{x_j \in A_j} \sum_{y_j \in B_j} \kb{x_1, y_1, \ldots, x_j, y_j} \otimes Z_{j+1, x_1, y_1, \ldots, x_j, y_j}, \]
for $j \in \set{1, \ldots, n}$.
Substituting this into the constraints, we get the following new problem
\[ \begin{array}{rrrcllllllllllllll}
\textrm{} & d_2^* \; = \; \inf                                & \inner{Z_1}{\kb{\phi}}  \\
                     & \textup{subject to} & Z_1 & \succeq & \dsum_{y_1 \in B_1} \kb{y_1} \otimes \id_{B_1} \otimes Z_{2, x_1, y_1}, \\
                     & & Z_{j, x_1, y_1, \ldots, x_{j-1}, y_{j-1}} & \succeq & \displaystyle\sum_{y_{j} \in B_j} \kb{y_{j}} \otimes \id_{B_j} \otimes Z_{j+1, x_1, y_1, \ldots, x_{j}, y_{j}}, \\
                     & & & & \forAll j \in \{ 2, \ldots, n \}, \\
                     & & & & (x_1, \ldots, x_{j}) \in A_1 \times \cdots \times A_{j}, \\
                     & & & & (y_1, \ldots, y_{j}) \in B_1 \times \cdots \times B_{j}, \\
                     &                               & \displaystyle\sum_{x \in A} Z_{n+1,x,y} \otimes \kb{x} \otimes \id_{A'} & \succeq & \kb{a} \otimes \id_{B'_0} \otimes \kb{\psi_a}, \, \forall a \in \zo, y \in B.
\end{array} \]
This shows that $d_2^* \geq P_{\A, 0}^*$. Applying the Subspace lemma (Lemma~\ref{SubspaceLemma}) recursively, as in the case for cheating Bob, we get the following problem with the same optimal objective value
\[ \begin{array}{rrrcllllllllllllll}
\textrm{} & \inf                                & \inner{Z_1}{\kb{\phi}}  \\
                     & \textrm{s.t.} & \bra{\phi_{y_1, \ldots, y_{j-1}}} Z_{j, x_1, y_1, \ldots, x_{j-1}, y_{j-1}} \ket{\phi_{y_1, \ldots, y_{j-1}}} & \geq & \displaystyle\sum_{y_{j} \in B_j} \bra{\phi_{y_1, \ldots, y_{j}}}Z_{j+1, x_1, y_1, \ldots, x_{j}, y_{j}} \ket{\phi_{y_1, \ldots, y_j}}, \\
                                          & & & & \forAll j \in \{ 1, \ldots, n \}, \\
                     & & & & (x_1, \ldots, x_{j}) \in A_1 \times \cdots \times A_{j}, \\
                     & & & & (y_1, \ldots, y_{j}) \in B_1 \times \cdots \times B_{j}, \\
                     &                               & \displaystyle\displaystyle\sum_{x \in A} \bra{\phi_{y}} Z_{n+1,x,y} \ket{\phi_{y}} \, \kb{x} \otimes \id_{A'} & \succeq & \frac{1}{2} \beta_{a,y} \, \kb{\psi_a}, \, \forAll a \in \zo, y \in B.
\end{array} \]
Defining
\[ z_{j+1, x_1, y_1, \ldots, x_j, y_j} := \bra{\phi_{y_1, \ldots, y_j}} Z_{j+1, x_1, y_1, \ldots, x_j, y_j} \ket{\phi_{y_1, \ldots, y_j}}, \]
for $j \in \{ 0, 1, \ldots, n-1 \}$, $(x_1, \ldots, x_j) \in A_1 \times \cdots \times A_j$, and $(y_1, \ldots, y_{j}) \in B_1 \times \cdots \times B_j$,
and
\[ \Diag(z_{n+1}^{(y)}) := \sum_{x \in A} \bra{\phi_y} Z_{n+1, x,y} \ket{\phi_y} \, \kb{x}, \]
for $y \in B$,
we get the following equivalent problem 
\[ \begin{array}{rrrcllllllllllllll}
\textrm{} & d_2^* \; = \; \inf                                & z_1  \\
                     & \textrm{subject to} & z_{j, x_1, y_1, \ldots, x_{j-1}, y_{j-1}} & \geq & \dsum_{y_{j} \in B_{j}} z_{j+1, x_1, y_1, \ldots, x_{j}, y_{j}}, \\
                                          & & & & \forAll j \in \{ 1, \ldots, n \}, \\
                     & & & & (x_1, \ldots, x_{j}) \in A_1 \times \cdots \times A_{j}, \\
                     & & & & (y_1, \ldots, y_{j}) \in B_1 \times \cdots \times B_{j}, \\
                     &                               & \Diag(z_{n+1}^{(y)}) & \succeq & \half \beta_{a,y} \, \sqrtt{\alpha_a}, \, \forAll y \in B, a \in \zo, 
\end{array} \] 
{noting $z_{j+1, x_1, y_1, \ldots, x_j, y_j} = 0$ when $\ket{\phi_{y_1, \ldots, y_j}} = 0$ can be assumed in an optimal solution.}
This problem has a strictly feasible solution and the objective function is bounded from below on the feasible region, thus it and its dual have zero duality gap.
The dual of this problem is
\[ \max_{\substack{(s_1, \ldots, s_n, s) \in \calP_{\A} \\ \sigma_{a,y} \in \pos^A}} \set{\half \sum_{a \in \zo} \sum_{y \in B} \beta_{a,y} \inner{\sigma_{a,y}}{\sqrtt{\alpha_a}} : \diag(\sigma_{a,y}) = s^{(a,y)}, \, \forall a \in \zo, y \in B} \]
which is equivalent to Alice's reduced problem for forcing outcome $0$ by Lemma~\ref{FidelityLemma} and has optimal objective value $d_2^*$. Therefore, we have $P_{\A, 0}^* \leq d_2^* \leq P_{\A, 0}^*$, as desired.

The case for forcing outcome $1$ is almost the same, except every occurrence of $\alpha_a$ is replaced with $\alpha_{\bar{a}}$. 
The above SDP thus becomes
\[ \max_{\substack{(s_1, \ldots, s_n, s) \in \calP_{\A} \\ \sigma_{a,y} \in \pos^A}} \set{\half \sum_{a \in \zo} \sum_{y \in B} \beta_{a,y} \inner{\sigma_{a,y}}{\sqrtt{\alpha_{\bar{a}}}} : \diag(\sigma_{a,y}) = s^{(a,y)}, \, \forall a \in \zo, y \in B}. \]
Since the last constraint is symmetric in $a$, we can replace $s^{(a,y)}$ with $s^{(\bar{a},y)}$ and $\sigma_{a,y}$ with $\sigma_{\bar{a},y}$ and the optimal objective value does not change. We can write it as
\[ \max_{\substack{(s_1, \ldots, s_n, s) \in \calP_{\A} \\ \sigma_{a,y} \in \pos^A}} \set{\half \sum_{a \in \zo} \sum_{y \in B} \beta_{\bar{a},y} \inner{\sigma_{a,y}}{\sqrtt{\alpha_{a}}} : \diag(\sigma_{a,y}) = s^{(a,y)}, \, \forall a \in \zo, y \in B}, \]
which is equivalent to Alice's reduced problem {for forcing outcome $1$} by Lemma~\ref{FidelityLemma} and the rest of the argument follows similarly as {in} the case of Alice forcing outcome $0$. \qed 
     
\end{document}